\pgfplotsset{width=10cm,compat=1.9}
\DeclarePairedDelimiter{\oldnormaux}{\bracevert}{\bracevert}
\NewDocumentCommand{\oldnorm}{som}{%
  \IfBooleanTF{#1}
    {\oldnormaux*{#3}}
    {\IfNoValueTF{#2}
       {\oldnormaux*{\vphantom{dq}#3}}
       {\oldnormaux[#2]{#3}}%
    }%
}
\tikzset{
     pt/.style={insert path={node[scale=2]{.}}},
     dnup/.style={insert path={ [pt] .. controls +(0,1) and +(0,-1) .. +(#1,2) [pt]}},
     dndn/.style={insert path={ [pt] .. controls +(0,1) and +(0,1) .. +(#1,0) [pt]}},
     upup/.style={insert path={ [pt] .. controls +(0,-1) and +(0,-1) .. +(#1,0) [pt]}},
}
\tikzset{->-/.style={decoration={
  markings,
  mark=at position .5 with {\arrow{>}}},postaction={decorate}}}
\tikzset{-<-/.style={decoration={
  markings,
  mark=at position .5 with {\arrow{<}}},postaction={decorate}}}
\newcommand{\CC}{\mathbb{C}}
\newcommand{\RR}{\mathbb{R}}
\newcommand{\NN}{\mathbb{N}}
\newcommand{\QQ}{\mathbb{Q}}
\newcommand{\ZZ}{\mathbb{Z}}
\newcommand{\qa}{\mathfrak{q}}
\newcommand{\qam}{\mathfrak{q}^{-1}}
\newcommand{\Uq}{U_{\mathfrak{q}}\mathfrak{sl}_2}
\newcommand{\Uqu}{\overline{U}^\mathsf{H}_{\mathfrak{q}}\mathfrak{sl}_2}
\newcommand{\Hs}{\mathcal{H}}
\newcommand{\Hb}{\mathcal{H}_b}
\newcommand{\Hbb}{\mathcal{H}_{2b}}
\newcommand{\KK}{\mathsf{K}}
\newcommand{\KKm}{\mathsf{K}^{-1}}
\newcommand{\EE}{\mathsf{E}}
\newcommand{\FF}{\mathsf{F}}
\newcommand{\HH}{\mathsf{H}}
\newcommand{\CCC}{\mathsf{C}}
\newcommand{\BB}{\mathsf{B}}
\newcommand{\BBB}{\mathcal{B}}
\newcommand{\RRR}{\mathsf{R}}
\newcommand{\VV}{\mathcal{V}}
\newcommand{\TT}{\mathcal{T}}
\newcommand{\WW}{\mathcal{W}}
\newcommand{\XX}{\mathcal{X}}
\newcommand{\YY}{\mathcal{Y}}
\newcommand{\TL}{\mathsf{TL}_{\delta,N}}
\newcommand{\Blob}{\mathsf{B}_{\delta,y,N}}
\newcommand{\Blobb}{\mathsf{B}_{\delta,y,N-1}}
\newcommand{\twoBlob}{2\BB_{\delta,y_l,y_r,Y,N}}
\newcommand{\twoBlobm}{2\BB_{\delta,y_l,y_r,Y_m,N}}
\newcommand{\utwoBlob}{2\BB_{\delta,y_l,y_r,N}^{\mathrm{uni}}}
\newcommand{\Id}{\mathrm{Id}}
\newcommand{\End}{\mathrm{End}}
\newcommand{\Hom}{\mathrm{Hom}}
\newcommand{\Vir}{\mathrm{Vir}}
\newcommand{\ddd}{\mathrm{d}}
\newcommand{\qtr}{\mathrm{qtr}}
\newcommand{\Ker}{\mathrm{Ker}}
\newcommand{\Ress}{\mathrm{Res}}
\newcommand{\yyy}{y_{-}(\mu,N)}
\newcommand{\sqq}{\mathsmaller{\mathsmaller{\mathsmaller{\mathsmaller{\mathsmaller{\blacksquare}}}}}}
\newcommand{\usqq}{\mathsmaller{\mathsmaller{\mathsmaller{\mathsmaller{\mathsmaller{\square}}}}}}
\newcommand{\bbullet}{\mathlarger{\mathlarger{\mathlarger{\bullet}}}}
\newcommand{\ccirc}{\mathlarger{\mathlarger{\mathlarger{\circ}}}}
\declaretheorem[name=Proposition]{prop}
\newtheorem{lemma}{Lemma}
\newtheorem{conj}{Conjecture}
\theoremstyle{remark}
\title{$\Uq$-invariant non-compact boundary conditions for the XXZ spin chain}
\author[1,3]{Dmitry Chernyak}
\author[2]{Azat M. Gainutdinov}
\author[3,4]{Hubert Saleur}
\affil[1]{Laboratoire de Physique de l'{\'E}cole Normale Sup{\'e}rieure, ENS, Universit{\'e} PSL, CNRS, Sorbonne Universit{\'e}, Universit{\'e} de Paris, F-75005 Paris, France}
\affil[2]{Institut Denis Poisson, CNRS, Universit\'e de Tours, Parc de Grandmont, 37200 Tours, France}
\affil[3]{Institut de Physique Th{\'e}orique, Paris Saclay, CEA, CNRS, 91191 Gif-sur-Yvette, France}
\affil[4]{USC Physics and Astronomy Department, Los Angeles Ca 90089, USA}
\date{~}
\begin{document}
\maketitle

\begin{abstract}
We introduce new $\Uq$-invariant boundary conditions for the open XXZ spin chain. For generic values of $\qa$ we couple the bulk Hamiltonian to an infinite-dimensional Verma module on one or both boundaries of the spin chain, and for $\qa=e^{\frac{i\pi}{p}}$ a $2p$-th root of unity -- to its $p$-dimensional analogue. Both cases are parametrised by a continuous ``spin'' $\alpha\in\CC$. 

To motivate our construction, we first specialise to  $\qa=i$, where  we obtain a modified XX Hamiltonian with unrolled quantum group symmetry, whose spectrum and scaling limit is computed explicitly using free fermions. In the continuum, this model is identified with the $(\eta,\xi)$ ghost CFT on the upper-half plane with a continuum of conformally invariant boundary conditions on the real axis. The different sectors of the Hamiltonian are identified with irreducible Virasoro representations.

Going back to generic $\qa$ we investigate the algebraic properties of the underlying lattice algebras. We show that if $\qa^\alpha\notin\pm\qa^\ZZ$, the new boundary coupling provides a faithful representation of the blob algebra which is Schur-Weyl dual to $\Uq$. Then, modifying the boundary conditions on both the left and the right, we obtain a representation of the universal two-boundary Temperley-Lieb algebra. The generators and parameters of these representations are computed explicitly in terms of $\qa$ and $\alpha$. Finally, we conjecture the general form of the Schur-Weyl duality in this case.

This paper is the first in a series where we will study, at all values of the parameters, the spectrum and its continuum limit, the representation content of the relevant lattice algebras and the fusion properties of these new spin chains.
\end{abstract}

\newpage

\tableofcontents

\newpage

\section{Introduction}

The XXZ spin chain of length $N$ with open boundary conditions is governed by the Hamiltonian
\begin{equation}
\label{xxz}
H_{\mathrm{XXZ}}^{\text{open}}:=\frac{1}{2}\sum_{i=1}^{N-1}\left(\sigma^x_{i}\sigma^x_{i+1}+\sigma^y_{i}\sigma^y_{i+1}+\frac{\qa+\qa^{-1}}{2}\sigma^z_{i}\sigma^z_{i+1}\right)
\end{equation}
acting on the Hilbert space $\Hs:=(\CC^2)^{\otimes N}$, where $\qa$ is a complex parameter. If $\qa\neq 1$, the global $\mathsf{SU}(2)$ symmetry of the XXX spin chain ($\qa=1$) breaks down to $\mathsf{U}(1)$. However, we can recover the larger symmetry by deforming $\mathsf{SU}(2)$ into the $\Uq$ quantum group and changing the boundary conditions as
\begin{equation}
\label{XXZ}
\begin{aligned}
H_{\mathrm{XXZ}} & :=\frac{1}{2}\sum_{i=1}^{N-1} \left(\sigma^x_{i}\sigma^x_{i+1}+\sigma^y_{i}\sigma^y_{i+1}+\frac{\qa+\qa^{-1}}{2}\sigma^z_{i}\sigma^z_{i+1}\right)+\frac{\qa-\qa^{-1}}{4}(\sigma_N^z-\sigma_1^z)\\
& := H_{\mathrm{XXZ}}^{\text{open}}+\frac{\qa-\qa^{-1}}{4}(\sigma_N^z-\sigma_1^z)\,.
\end{aligned}
\end{equation}
The Hamiltonian $H_{\mathrm{XXZ}}$ is then $\Uq$-invariant \cite{PASQUIER1990523}. If moreover $|\qa|=1$ it is known to be critical \cite[Ch. 12]{Baxter:1982zz} (see also \cite{Alcaraz_1987}).

The aim of this paper is to introduce more general boundary conditions for $H_{\mathrm{XXZ}}$. Let us add a vector space $\VV$ on the leftmost boundary and take an arbitrary nearest-neighbour coupling $J\in\End(\VV\otimes\CC^2)$ on two leftmost sites to define a new local Hamiltonian
\begin{equation}
\label{hbform}
H_b=J+H_{\mathrm{XXZ}}
\end{equation}
acting on $\VV\otimes (\CC^2)^{\otimes N}$. To choose $\VV$ and $J$ our guiding principle is to preserve the $\Uq$ symmetry of the XXZ model. This means that $\VV$ must be a representation of $\Uq$ and $J$ some operator commuting with its action.

We will be most interested in ``non-compact'' boundary conditions, that is, infinite-dimensional $\VV$. There are many possible choices, but, for generic $\qa$, the most natural for us is to take Verma modules of $\Uq$. Denoted $\VV_\alpha$, these are infinite-dimensional highest-weight modules depending on a complex parameter $\alpha\in\CC$ (see Section~\ref{uqsec}). If $\qa=e^{\frac{i\pi}{p}}$ is a $2p$-th root of unity, we consider instead $p$-dimensional analogues of these representations, also parametrised by $\alpha\in\CC$ (see Section \ref{sec:Hb-root-q}). For simplicity they are equally denoted by $\VV_\alpha$.

To define a $\Uq$-invariant boundary condition on
\begin{equation}\label{intro:Hb-space}
 \Hb:=\VV_\alpha\otimes (\CC^2)^{\otimes N}
\end{equation}
we need to find an operator $J$ acting on $\VV_\alpha\otimes\CC^2$ and commuting with $\Uq$. It is known that generically\footnote{Exact conditions for this result to hold will be specified in Section \ref{construction}.} we have an isomorphism
\begin{equation}
\label{fusrule}
\VV_\alpha\otimes \CC^2 \cong \VV_{\alpha+1}\oplus \VV_{\alpha-1} \,,
\end{equation}
so the only operators acting on this space and commuting with $\Uq$ are the projectors on $\VV_{\alpha\pm1}$ which we denote $b_\pm$. Since $b_++b_-=1$, it is sufficient to take one of them, say $b:=b_+$. The most general Hamiltonian of the form \eqref{hbform} acting on $\Hb$ and compatible with $\Uq$ symmetry is then
\begin{equation}
\label{Hb}
H_b=-\mu b+H_{\mathrm{XXZ}}
\end{equation}
for some $\mu\in\CC$. We call $H_b$ the {\em one-boundary Hamiltonian}.

To change the boundary conditions on both the left and the right boundaries, we can proceed in the same fashion, taking the Hilbert space to be
\begin{equation}\label{intro:Hbb-space}
\Hbb:=\VV_{\alpha_l}\otimes\left(\CC^2\right)^{\otimes N}\otimes\VV_{\alpha_r}
\end{equation}
and defining projectors $b_l$ (resp.\ $b_r$) on the $\VV_{\alpha_l+1}$ (resp.\ $\VV_{\alpha_r+1}$) factors of $\VV_{\alpha_l}\otimes\CC^2$ (resp. $\CC^2\otimes\VV_{\alpha_r}$). The most general extension of $H_{\mathrm{XXZ}}$ to $\Hbb$ by $\Uq$-invariant two-site boundary terms is then
\begin{equation}
\label{Hbb}
H_{2b}=-\mu_l b_l+H_{\mathrm{XXZ}}-\mu_r b_r
\end{equation}
for some $\mu_l,\mu_r\in\CC$. We call $H_{2b}$ the \emph{two-boundary Hamiltonian}.

\medskip

The general goals of the series of works we start here are:
\begin{itemize}
\item[$i)$] To compute the spectra of $H_b$ and $H_{2b}$ and their conformal scaling limits.
\item[$ii)$] To understand the underlying \emph{lattice algebras} generated by the operators of nearest neighbour couplings.
\item[$iii)$] To define a consistent fusion procedure for these models.
\end{itemize}

\medskip

In this first paper we initiate our study by applying this program to the special case $\qa=i$, where the Hamiltonian \eqref{XXZ} becomes that of the XX model and admits a free-fermion representation, making it rather simple to obtain exact results for the spectrum. After constructing  in Section \ref{construction} the one-boundary and two-boundary Hamiltonians, $H_b$ and $H_{2b}$ for all $\qa$,  we  specialise to the value $\qa=i$ and diagonalise $H_b$ in Section \ref{specqi}. This is achieved by performing a Jordan-Wigner transformation, reducing the spectral problem to an $N\times N$ linear system which can be solved using plane waves (see Section \ref{planewaves}). The boundary conditions constrain the energies $\lambda$ of the corresponding modes to be the roots of the polynomial of degree $N$, i.e.\ they satisfy
\begin{equation*}
U_N(\lambda/2)+\mu U_{N-1}(\lambda/2)+(1-\mu y)U_{N-2}(\lambda/2)=0 \,,
\end{equation*}
where $U_n$ are Chebyshev polynomials of the second kind and  $y=\cot\frac{\pi\alpha}{2}$. We then show in Section~\ref{compspec} that these roots, as well as the momenta of the associated plane waves, are all real in a certain domain of the parameter space and that, together with $\Uq$ symmetry, they generate a complete basis of eigenstates of~$H_b$. Finally, in Section~\ref{sclim} we compute the large-$N$ limit of the spectrum and express the scaling limit of the partition function in terms of Virasoro characters of generic conformal weights depending only on $\alpha$. The continuum limit of the spin chain is then identified in Section \ref{sfsec} with the $(\eta,\xi)$ ghost system with the action
\begin{equation}
\label{exaction}
S[\eta,\xi,\bar{\eta},\bar{\xi}]=\frac{1}{2\pi}\int\ddd^2z\left(\eta\bar{\partial}\xi+\bar{\eta}\partial\bar{\xi}\right)
\end{equation}
defined on the upper-half plane with some specific $\alpha$-dependent conformal boundary conditions on the real axis \cite{Kausch:1995py, Kausch_2000}.

In Section \ref{spech2b} the above procedure for the spectrum analysis and the scaling limit is extended to the two-boundary Hamiltonian $H_{2b}$, also for $\qa=i$, by following the same steps.

In the second half of the paper, we present a general algebraic formalism, valid for any $\qa$, relating the Hamiltonians $H_b$ and $H_{2b}$ to some well-known lattice algebras. It is known \cite{Jimbo1986AQO} that the local Hamiltonian densities $(e_i)_{1\leq i\leq N-1}$ of $H_{\mathrm{XXZ}}$ satisfy the defining relations of the Temperley-Lieb (TL) algebra
\begin{equation}
\label{TLrel}
e_i^2=(\qa+\qam)e_i\,,\qquad e_ie_{i\pm1}e_i=e_i\,,\qquad [e_i,e_j]=0\ ,\quad |i-j|\geq 2\,,
\end{equation}
faithfully (that is, without any additional relations between them) and that their action on $\Hs=(\CC^2)^{\otimes N}$ and that of $\Uq$ centralise each other (see Section \ref{XXZalg}). This result is sometimes called quantum Schur-Weyl duality \cite{goodman, martin1992commutants, Martin1992ONSD}. We extend it to $\Hb$ defined in~\eqref{intro:Hb-space} by showing that the additional boundary operator $b$ we construct satisfies
\begin{equation}
\label{blobrule}
b^2=b\,,\qquad e_1 b e_1 = y e_1\,, \qquad [b,e_i]=0 \quad \text{for} \quad 2\leq i\leq N-1
\end{equation}
with
\begin{equation*}
y=\frac{\qa^{\alpha+1}-\qa^{-\alpha-1}}{\qa^{\alpha}-\qa^{-\alpha}}
\end{equation*}
and thus gives a representation of the so-called blob algebra \cite{Martin:1993jka}. We then prove that it is faithful and that $\Uq$ and the blob algebra are mutual centralisers on $\Hb$ (see Proposition~\ref{blobrepqgen} in Section~\ref{blobsec}). We equally consider the root of unity case $\qa=e^{\frac{i\pi}{p}}$ where we place at the boundary $p$-dimensional representations parametrised by $\alpha\in\CC$. If the weight $\alpha$ is generic we prove in Proposition~\ref{blobreproot} that $\Hb$ still provides a faithful representation of the blob algebra in this case, however now the role of the quantum group $\Uq$ is played by its unrolled version $\Uqu$ with an additional generator $\HH=\log(\KK)$ where $\KK$ is the standard Cartan generator of $\Uq$. The unrolled quantum group was widely used in knot theory~\cite{Oh,GPT}\footnote{The important aspect used in the mathematics literature is that the unrolled quantum group admits a universal $\RRR$-matrix at roots of unity and therefore we have a non-trivial braiding for the generic $p$-dimensional representations $\VV_\alpha$. This braiding will be important in our analysis of the two-boundary spin-chains.} but its generic $\qa$ version has also been known in the spin chain literature since long ago (see for example \cite{Kulre, KSk} where the spin projection operator $S^z$ plays the role of $\HH$ and $\qa^{S^z}$ that of $\KK$). We also briefly discuss the situation with $p$-dimensional cyclic representations at the boundary (see Remark at the end of Section~\ref{blobsec}).

A similar study is done for the two-boundary system $\Hbb$ defined in~\eqref{intro:Hbb-space}, for both generic $\qa$ and the root of unity cases. The relevant lattice algebra is now the two-boundary Temperley-Lieb algebra \cite{twob2004, de_Gier_2009}. In our case however, we need a slightly generalised version of it (a central extension) called the \textit{universal} two-boundary Temperley-Lieb algebra, which we introduce in Section \ref{twoboundarysec}. The result is that $\Hbb$ carries a representation of the universal two-boundary TL algebra, and we find explicit expressions for the generators and parameters. The main technical difficulty is to compute the central element $Y$ corresponding, in the lattice algebra language, to the (universal) weight of a closed loop decorated by both the left and right boundary operators. By using diagrammatic calculus tools, we manage to find its explicit expression in terms of the Casimir element of $\Uq$. For generic $\qa$ this is established in Proposition~\ref{repthtwo} in Section~\ref{Ycomp}, and for the root of unity cases in Proposition~\ref{repthtworoot}. Decomposing $\Hbb$ into $Y$-eigenspaces, we recover representations of the ``standard" two-boundary TL algebra with a fixed (scalar) value of $Y$ in each eigenspace (see Propositions~\ref{2bpropqgen} and~\ref{2bpropqroot} in Section \ref{twoblobdec} for $\qa$ generic and a root of unity respectively). However, due to the non-generic values $Y$ takes in these sectors that make the representation theory non-semisimple, so far we can only conjecture the Schur-Weyl duality in this case (see Conjectures~\ref{conj1} and~\ref{conj2} in Section \ref{conjsec} for $\qa$ generic and a root of unity respectively).

The irreducible representations of these lattice algebras -- also known as standard modules -- can be identified with sectors of $H_b$ and $H_{2b}$. For the one-boundary system, it was conjectured in \cite{Nichols2006, Jacobsen_2008} from various symmetry considerations and numerical evidence \cite{PASQUIER1990523, Grimm:1990gg, Ritt1990} that at criticality ($|\qa|=1$) the scaling limit of the spectrum in each sector should correspond to some specific representation of the Virasoro algebra. The results of this paper are sufficient to prove this claim for $\qa=i$ (Section~\ref{sclim}). As the computation of the spectrum and of its scaling limit for arbitrary $\qa$ requires the introduction of the heavy machinery of boundary Bethe ansatz, the general proof for all $|\qa|=1$ will be given in a separate paper. For the two-boundary system, a similar conjecture (also for $|\qa|=1$) relating the scaling limit of the spectrum of $H_{2b}$ in each of its irreducible sectors to Virasoro modules also exists \cite{Dubail_2009}. For $\qa=i$, we were able to compute the scaling limit of $H_{2b}$ in those sectors (Section~\ref{spech2b}) but it still remains to identify them with known representations of the two-boundary TL algebra, which is made difficult by the non-generic values of the $Y$ parameter we have to consider. Nevertheless, we have a strong conjecture on what these representations should be (Conjecture \ref{conj2}) and it is consistent with the prediction from \cite{Dubail_2009}.

Even though our results on the spectra of the one-boundary and two-boundary Hamiltonians at $\qa=i$ match those already stated in previous works \cite{Nichols2006, magic} and the underlying algebraic structures (for generic and non-generic values of the parameters) are mostly known \cite{deG2005, Nichols22006, de_Gier_2009}, we want to emphasise that our approach is very different conceptually. The above-cited papers use another type of spin-chain representation for the blob and two-boundary algebras : they have no extra degrees of freedom at the boundary and so the boundary operators $b_{l/r}$ act only on the leftmost and rightmost $\CC^2$ sites, i.e. they are locally represented by $2\times 2$ matrices with scalar entries. This contrasts with our spin chains where $b_{l/r}$ are locally represented by $2\times 2$ matrices whose entries are \emph{infinite-dimensional matrices} acting on the additional spaces $\VV_{\alpha_{l/r}}$. Though the resulting Hamiltonians -- called non-diagonal XXZ models -- used in these works are also of the form~\eqref{Hbb} and are known to be integrable \cite{bk1993, bk1994} (see also the discussion in~\cite[Sec.\,3.4]{Dubail_2009}), the boundary operators break $\Uq$ symmetry and even the standard $\mathsf{U}(1)$ symmetry in this case. This symmetry breaking makes it impossible to carry out neither the Algebraic Bethe Ansatz procedure directly due to absence of the standard reference state, nor the free fermion method at $\qa=i$ because the Jordan-Wigner transformed Hamiltonian is not quadratic. Instead, a coordinate Bethe ansatz in the reduced state basis \cite{Gier_2004} or, for the one-boundary system, a non-trivial mapping to an equivalent $\mathsf{U}(1)$-invariant spin chain \cite{Nichols2006, deG2005} had to be used in these works.

In the context of the above-mentioned non-diagonal XXZ models, it is also known that one can perform an intricate gauge transformation to construct a suitable reference state \cite{cao2002exact}, or derive some functional relations \cite{nepomechie2002functional, nep2003}, or even use the Modified Algebraic Bethe Ansatz~\cite{PBel} to obtain the Bethe ansatz equations. Moreover, a general algebraic framework to clarify the integrability structure of the non-diagonal XXZ models based on the so-called $\qa$-Onsager algebra was developed in \cite{Baseilhac_2007}. Nevertheless, in all these works, the lack of a sufficiently strong spin chain symmetry greatly complicates the computations and the analysis of the spectrum.

Considering more ``canonical'' spin-chain representations~\eqref{intro:Hb-space} and~\eqref{intro:Hbb-space} of the relevant lattice algebras in the sense that they preserve $\Uq$ symmetry -- even by adding additional degrees of freedom and sacrificing finite-dimensionality -- enables us to circumvent the above problems and to provide a more rigid and arguably simple formalism to diagonalise these one-boundary and two-boundary systems as well as to organise their spectra  into sectors which are standard modules over the underlying lattice algebras. For example, in the two-boundary case and for generic $\qa$, Proposition~\ref{2bpropqgen} together with Conjecture~\ref{conj1} suggest that a \emph{single} spin chain~\eqref{intro:Hbb-space} contains an infinite discrete series of non-diagonal XXZ models, where the integrable boundary conditions are certain functions on $\alpha_{l/r}$, $\mu_{l/r}$ and $Y$ (see the discussion at the end of Section~\ref{conclusion}).

It is worth mentioning that the idea to use extra degrees of freedom to simplify the diagonalisation problem was previously applied to compute the spectrum of the open XX spin chain with arbitrary boundary fields \cite{xx1999} and the large-$N$ expansion of the ground state for some choices of parameters \cite{xx2000}. In these papers, auxiliary spectator $1/2$-spins were added at the boundary, so as to obtain a free-fermion system on $N+2$ sites. Even if our approach may resemble what was done in these works, the Hamiltonian we consider has completely different (and stronger) algebraic properties and it involves `continuous spins', yielding simpler expressions and enabling us to find the complete scaling limit for all values of the continuous parameters.

Finally, let us note that this philosophy was also used in various other related contexts, most notably the boundary sine-Gordon model \cite{blz1999, BASEILHAC2003491, bas2003}. The idea is always to try to replace complicated boundary interactions by some equivalent coupling to an additional degree of freedom which preserves the relevant symmetries of the model and makes it easier to derive exact results (see for example~\cite{blz1999}). Here, we apply the same technique to a lattice model and with a rather modest symmetry group : $\Uq$. We will see however -- in this paper and the following ones -- that this algebra as well as its various Schur-Weyl duals are sufficient to understand the structure of the continuum limit and the integrability properties of our spin chains. More generally, if we think of spin chains as discretisations of QFTs, symmetry-preserving boundary couplings to additional degrees of freedom should have natural continuum counterparts. This could provide new insights into the study of boundary CFTs and related physical applications, such as quantum impurity problems \cite{saleur1998lectures} and out-of-equilibrium quantum systems \cite{eq2016}.

\section*{Notations}

\begin{itemize}
\item $N$~: Length of bulk of spin chains
\item $\sigma^x=\begin{pmatrix} 0 & 1\\ 1 & 0\end{pmatrix},~\sigma^y=\begin{pmatrix} 0 & -i\\ i & 0\end{pmatrix},~\sigma^z=\begin{pmatrix} 1 & 0\\ 0 & -1\end{pmatrix}$~: Pauli matrices
\item $\sigma^+:=\frac{1}{2}\left(\sigma^x+i\sigma^y\right)=\begin{pmatrix} 0 & 1\\ 0 & 0\end{pmatrix},~\sigma^-:=\frac{1}{2}\left(\sigma^x-i\sigma^y\right)=\begin{pmatrix} 0 & 0\\ 1 & 0\end{pmatrix}$
\item $\qa=e^\frac{i\pi}{p}$~: Deformation parameter of the XXZ spin chain with $p\in\NN\backslash\{0,1\}$ if $\qa$ is a $2p$-th root of unity
\item $[x]_\qa:=\frac{\qa^x-\qa^{-x}}{\qa-\qam}$~: $\qa$-deformed numbers
\item $\{x\}:=\qa^x-\qa^{-x}$
\item $\Uq$~: Quantum group, a $\qa$-deformation of $\mathsf{SU}(2)$
\item $\Uqu$~: For $\qa$ a $2p$-th root of unity, the restricted unrolled quantum group constructed from $\Uq$
\item $\EE,\FF,\KK,\KK^{-1}$~: Generators of $\Uq$. $\Uqu$ has an additional generator $\HH$
\item $\CCC$~: Casimir operator of $\Uq$ and $\Uqu$
\item $\VV_\alpha$~: Infinite-dimensional Verma module of $\Uq$ of highest-weight ${\alpha-1\in\CC}$ or $p$-dimensional representation of $\Uqu$ of highest-weight ${\alpha+p-1\in\CC}$ if $\qa$ is a $2p$-th root of unity
\item $b$~: $\Uq$-invariant boundary coupling operator, or the blob generator
\item $b_{l/r}$~: Left/right $\Uq$-invariant boundary coupling operators, or left/right blobs ($b_l:=b$)
\item $\Hs:=\left(\CC^2\right)^{\otimes N}$~: Hilbert space of the XXZ spin chain on $N$ sites
\item $H_{\mathrm{XXZ}}$~: $\Uq$-invariant open XXZ Hamiltonian
\item $\Hb:=\VV_\alpha\otimes\left(\CC^2\right)^{\otimes N}$~: Hilbert space of the one-boundary spin chain on $N$ sites
\item $H_b:=-\mu b+H_{\mathrm{XXZ}}$~: $\Uq$-invariant one-boundary Hamiltonian with coupling constant $\mu\in\RR$
\item $\Hb:=\VV_{\alpha_l}\otimes\left(\CC^2\right)^{\otimes N}\otimes\VV_{\alpha_r}$~: Hilbert space of the two-boundary spin chain on $N$ sites
\item $H_b:=-\mu_l b_l+H_{\mathrm{XXZ}}-\mu_rb_r$~: $\Uq$-invariant two-boundary Hamiltonian with couplings $\mu_l,\mu_r\in\RR$
\item $H_{\mathrm{XX}}$~: $\Uqu$-invariant open XX Hamiltonian ($H_{\mathrm{XXZ}}$ at $\qa=i$)
\item $\theta_k^\dagger/\theta_k$~: Fermionic creation/annihilation operators of mode $1\leq k\leq N$
\item $\lambda_k$ : Energy of mode $1\leq k\leq N$
\item $(U_n)_{n\geq 0}$~: Chebyshev polynomials of the second kind
\item $]\cdot,\cdot[$~: Interval of $\RR$ with endpoints excluded
\item $c=-2$~: Central charge
\item $h_{r,s}:=\frac{(2r-s)^2-1}{8}$~: Conformal weights corresponding to Kac labels $(r,s)$
\item $\delta:=\qa+\qam$~: Loop weight
\item $\TL$~: Temperley-Lieb (TL) algebra on $N$ sites with loop weight $\delta$
\item $(e_i)_{1\leq i\leq N-1}$~: Generators of the Temperley-Lieb algebra
\item $\left(\mathcal{T}_j\right)_{0\leq j\leq N/2}$~: Standard $\TL$-modules ($j$ half-integer if $N$ is odd)
\item $y:=\frac{[\alpha+1]_\qa}{[\alpha]_\qa}$~: Blob weight
\item $\Blob$~: Blob algebra on $N$ sites with loop weight $\delta$ and blob weight $y$
\item $\left(\WW_j\right)_{-N/2\leq j\leq N/2}$~: Standard $\Blob$-modules ($j$ half-integer if $N$ is odd)
\item $y_{l/r}:=\frac{[\alpha_{l/r}+1]_\qa}{[\alpha_{l/r}]_\qa}$~: Left/right blob weights
\item $\twoBlob$~: Two-boundary Temperley-Lieb algebra on $N$ sites with loop weight $\delta$, left/right blob weights $y_l/y_r$ and two-blob weight $Y$
\item $\utwoBlob$~: Universal two-boundary Temperley-Lieb algebra on $N$ sites with loop weight $\delta$ and left/right blob weights $y_l/y_r$
\end{itemize}

\section{Construction of Hamiltonians $H_b$ and $H_{2b}$}
\label{construction}

Our first task is to construct the one- and two- boundary Hamiltonians $H_b$ and $H_{2b}$ explicitly from $\Uq$ symmetry. The procedure is slightly different for $\qa$ a root of unity or $\qa$ generic, so these two cases have to be dealt with separately.

\subsection{$\Uq$ symmetry and its representations}
\label{uqsec}

The algebra $\Uq$ \cite{Drinfeld:1985rx, Jimbo:1985zk} (see also \cite[Ch.\,6.4]{qgroups} and \cite[Ch.\,VI-VII]{kassel}) is defined by generators $\EE$, $\FF$, $\KK$ and $\KK^{-1}$ and relations
\begin{equation*}
\KK\EE\KK^{-1}=\qa^2\EE\,,\quad \KK\FF\KK^{-1}=\qa^{-2}\FF\,,\quad [\EE,\FF]=\frac{\KK-\KK^{-1}}{\qa-\qam}\,,\quad \KK\KK^{-1}=\KK^{-1}\KK=\mathsf{1}\,.
\end{equation*}
It is a $\qa$-deformation of the universal enveloping algebra of the Lie algebra $\mathfrak{sl}_2$, in the sense that we recover the commutation relations of the $\mathfrak{sl}_2$ triple $(\EE,\FF,\HH)$ in the limit $\qa\to 1$ with $\KK=\qa^\HH$. It is important for defining the action on tensor products of representations that this algebra admits the coproduct
\begin{equation}
\label{coproduct}
\Delta(\EE)=\mathsf{1}\otimes\EE+\EE\otimes\KK\,,\quad \Delta(\FF)=\KK^{-1}\otimes\FF+\FF\otimes\mathsf{1}\,,\quad \Delta(\KK^{\pm 1})=\KK^{\pm 1}\otimes\KK^{\pm 1}\ .
\end{equation}

As $\mathfrak{sl}_2$, $\Uq$ admits ($2j+1$)-dimensional spin-$j$ representations for all $j\in\frac{1}{2}\NN$. For our purposes we will  need the fundamental spin-$\frac{1}{2}$ representation $\CC^2$ where the action of the generators is given by
\begin{equation}
\label{funduq}
\EE_{\CC^2}=\sigma^+\,,\qquad\FF_{\CC^2}=\sigma^-\,,\qquad \KK_{\CC^2}^{\pm 1}=\qa^{\pm\sigma^z}\,.
\end{equation}
To extend it to the $N$ sites of the Hilbert space $\Hs:=(\CC^2)^{\otimes N}$ we apply the coproduct~\eqref{coproduct}
$N-1$ times (recall that the coproduct is coassociative, and so the result does not depend on the order of its application). The Hamiltonian $H_{\mathrm{XXZ}}$ then commutes with the action of all the $\Uq$-generators \cite{PASQUIER1990523}.

Finally, let us introduce the Verma modules $\VV_\alpha$ \cite[Ch.\,VI.3]{kassel} that we shall need to define our modified boundary conditions. For all $\alpha\in\CC$ they are given in a basis $\VV_\alpha:=\bigoplus_{0\leq n}\CC\ket{n}$ by
\begin{equation}
\label{alpharepgen}
\begin{aligned}
& \EE_{\VV_\alpha}\ket{n}=[n]_\qa[\alpha-n]_\qa\ket{n-1} \,, \\
& \FF_{\VV_\alpha}\ket{n}=\ket{n+1} \,, \\
& \KK_{\VV_\alpha}^{\pm 1}\ket{n}=\qa^{\pm(\alpha-1-2n)}\ket{n} \\
\end{aligned}
\end{equation}
for all $n\geq 0$, with $\ket{-1}=0$, and where
\begin{equation*}
[x]_\qa:=\frac{\qa^x-\qa^{-x}}{\qa-\qam}=\frac{\{x\}}{\{1\}}\,,\qquad \{x\}:=\qa^x-\qa^{-x}\,.
\end{equation*}
The basis vectors $\ket{n}$ diagonalise $\KK$ and their $\KK$-eigenvalue $\qa^{\alpha-1-2n}$ is called the weight. The vector $\ket{0}$  is annihilated by the raising operator $\EE$ and is thus called the highest-weight vector.

Note that the modules $\VV_\alpha$ and $\VV_{\alpha+2p}$, where we parametrise $\qa=e^{\frac{i\pi}{p}}$ for some complex $p\in\CC$, are the same. Therefore, technically we should define it only for $\alpha\in\CC/2p\ZZ$ or work with the exponentiated parameter $\qa^\alpha$. However, in all practical applications, we will only encounter modules of the form $\VV_{\alpha+j}$ with $j\in\ZZ$, so as long as $\qa$ is not a root of unity (that is $p\notin\QQ$), $\alpha+j\neq\alpha+j'$ mod $2p$ for all integers $j\neq j'$ and this will not be an issue. Therefore, we will often slightly abuse notation and lift $\alpha$ to $\CC$.

The Verma modules $\VV_\alpha$ are not always irreducible. For example, for certain values of $\alpha$, one may have $\EE_{\VV_\alpha}\ket{m}=0$ for some $m\geq 1$ giving rise to a non-trivial stable subspace $\bigoplus_{m\geq n}\CC\ket{n}$. One can actually show \cite[Ch.\,VI]{kassel} that this is the only way $\VV_\alpha$ can become non-irreducible. Thus, $\VV_\alpha$ is irreducible if and only if $[\alpha-n]_\qa\neq 0$ for all $n\in\NN^{*}$ or, in other words, if and only if $\qa^\alpha\neq\pm\qa^n$ for all $n\in\NN^{*}$ ($[n]_\qa$ can never vanish if $\qa$ is not a root of unity). If that is the case, $\VV_\alpha$ is also unique, meaning any $\Uq$-module generated from a highest-weight vector of weight $\qa^{\alpha-1}$ is isomorphic to $\VV_\alpha$.

\subsection{Generic $\qa$}\label{sec:Hb-gen-q}

Let us first assume that $\qa$ is generic, that is, not a root of unity. 

As was explained in the introduction, the boundary Hamiltonian $H_b$ is obtained by tensoring the standard spin chain $\Hs=(\CC^2)^{\otimes N}$ with the Verma module $\VV_\alpha$ and adding a new $\Uq$-invariant boundary term $-\mu b$ acting on the two leftmost sites $\VV_\alpha\otimes\CC^2$ of the new Hilbert space $\Hb$. 

To construct the most general such operator, one has to understand the decomposition of $\VV_\alpha\otimes\CC^2$ into irreducible $\Uq$-modules. First, let us assume that  $\qa^\alpha\neq\pm\qa^n$ for all $n\in\NN^{*}$ so that $\VV_\alpha$ is irreducible. Next, if $\qa^\alpha\notin \pm \qa^\NN$, we have $\VV_\alpha\otimes\CC^2\cong\VV_{\alpha+1}\oplus \VV_{\alpha-1}$. Indeed, using the coproduct \eqref{coproduct} it is easy to check that 
\begin{equation}
\label{eq-hwv-Verma}
\ket{v_+}:=\ket{0}\otimes\ket{\uparrow} \qquad \text{and} 
\qquad \ket{v_-}:=\ket{1}\otimes\ket{\uparrow}-\qa[\alpha-1]_\qa\ket{0}\otimes\ket{\downarrow}
\end{equation}
are highest-weight vectors. By the uniqueness result for the Verma modules discussed above, these two vectors  generate  submodules isomorphic to $\VV_{\alpha+1}$ and $\VV_{\alpha-1}$ respectively, and these submodules are irreducible due to the assumption on $\qa^{\alpha}$.  Then, comparing the dimensions of the weight spaces we conclude that the tensor product $\VV_\alpha\otimes\CC^2$ is spanned by these two irreducible submodules and therefore consists of their direct sum
\begin{equation*}
\VV_\alpha\otimes\CC^2\cong\VV_{\alpha+1}\oplus \VV_{\alpha-1}\,.
\end{equation*}
With some more work, one can actually show that this tensor product $\Uq$ decomposition remains true even if $\VV_{\alpha\pm 1}$ are non-irreducible, as long as $\qa^\alpha\neq\pm1$.\footnote{Concretely, if $\qa^\alpha=\pm\qa^n$ for some $n\geq 1$, the two vectors~\eqref{eq-hwv-Verma} are still highest-weight and generate the submodules $\VV_{\alpha+1}$ and $\VV_{\alpha-1}$. The only non-trivial stable subspaces of $\VV_{\alpha\pm 1}$ are isomorphic to $\VV_{\alpha-2n\mp 1}$, and so we must have $\VV_{\alpha+1}\cap\VV_{\alpha-1}=\{0\}$. Comparing the dimensions of the weight spaces we obtain $\VV_\alpha\otimes\CC^2\cong\VV_{\alpha+1}\oplus \VV_{\alpha-1}$. Note that for $\qa^\alpha=\pm 1$ this is no longer the case, in particular $\FF\ket{v_+}=\ket{v_-}$ so $\VV_{\alpha-1}\subset\VV_{\alpha+1}$. We give more details about this case in Remark at the end of this section.}

\medskip

We now turn to the construction of the boundary Hamiltonians. Let us begin with the assumption $\qa^\alpha\notin\pm\qa^{\NN}$ (we give comments on the case $\qa^\alpha\in \pm \qa^\NN$ at the end of this section). In this case, the only $\Uq$-invariant operators acting on $\VV_\alpha\otimes\CC^2$ are linear combinations of the projectors $b_\pm$ on $\VV_{\alpha\pm 1}$. Obviously, $b_++b_-=1$ so without loss of generality we can choose $b:=b_+$ as our boundary operator (the other choice would just shift the Hamiltonian by a constant term and change the sign of $\mu$).

To find the explicit expression of the projector $b_+$, we study the action of the $\Uq$ Casimir element~$\CCC$ defined by
\begin{equation}
\label{casimir}
\CCC:=\{1\}^2\FF\EE+\qa\KK+\qam\KKm=\{1\}^2\EE\FF+\qam\KK+\qa\KKm\,.
\end{equation}
It is central in $\Uq$ and, moreover, all possible operators of $\End(\VV_\alpha\otimes\CC^2)$ commuting with the $\Uq$-action are of the form
\begin{equation*}
A=\mu_1\mathrm{Id}+\mu_2\CCC\,.
\end{equation*}
Using the coproduct \eqref{coproduct} and the fact that, as operators on $\VV_\alpha$,
\begin{equation}
\label{fexp}
\{1\}^2\FF\EE=\qa^\alpha+\qa^{-\alpha}-\qa\KK-\qam\KK^{-1}\,,
\end{equation} 
the Casimir element value on $\VV_{\alpha}\otimes \CC^2$ is
\begin{equation}
\label{casimirmat}
\CCC_{\VV_\alpha\otimes\CC^2}=
\begin{pmatrix}
-\qam\{1\}\KK^{-1}+\qa(\qa^\alpha+\qa^{-\alpha}) & \FF\{1\}^2\\
\qa\{1\}^2\KK^{-1}\EE & \qa\{1\}\KK^{-1}+\qam(\qa^\alpha+\qa^{-\alpha}) 
\end{pmatrix}
\end{equation}
where we interpreted $\CCC_{\VV_\alpha\otimes\CC^2}\in\End(\VV_\alpha\otimes\CC^2)$ as a $2\times 2$ matrix with coefficients in $\End(\VV_\alpha)$ and dropped the $\VV_\alpha$ subscripts to lighten notations.

By Schur's lemma, $\CCC$ is constant on any irreducible representation of $\Uq$. From \eqref{alpharepgen} and \eqref{casimir} we easily compute\footnote{In what follows, we often omit the identity operator $\mathrm{Id}$ and show only the corresponding coefficient.}
\begin{equation}
\label{casval}
\CCC_{\VV_{\alpha}}=\qa^{\alpha}+\qa^{-\alpha}\,.
\end{equation} 
Thus the projectors on $\mathcal{V}_{\alpha\pm 1}$ are given by the solutions of the linear systems
\begin{equation}
\label{cassys}
\left\{
\begin{array}{ll}
\mu_1+\mu_2(\qa^{\alpha\pm 1}+\qa^{-\alpha\mp 1})=1 \,, \\
\mu_1+\mu_2(\qa^{\alpha\mp 1}+\qa^{-\alpha\pm 1})=0 \,.
\end{array}
\right.
\end{equation}
Solving these we obtain
\begin{equation}
\label{blobres}
b_{\pm}=\pm\dfrac{\CCC_{\VV_\alpha\otimes\CC^2}-\qa^{\alpha\mp 1}-\qa^{-\alpha\pm 1}}{\{1\}\{\alpha\}} \,,
\end{equation}
that is,
\begin{equation}
\label{blobqgen}
\begin{aligned}
b_\pm & =\frac{\pm 1}{\{\alpha\}}
\begin{pmatrix}
 -\qam \KK^{-1}+\qa^{\pm \alpha} & \{1\}\FF\\
 \qa\{1\}\KK^{-1}\EE & \qa\KK^{-1}-\qa^{\mp \alpha}
\end{pmatrix}\\
& =\frac{1}{2}\pm\frac{1}{[\alpha]_\qa}\left(\frac{1}{2}\KK^{-1}\otimes 1 +\frac{1}{2\{1\}}\left(\qa^\alpha+\qa^{-\alpha}-(\qa+\qam)\KK^{-1}\right)\otimes\sigma^z+\FF\otimes\sigma^++\qa\KK^{-1}\EE\otimes\sigma^-\right) \,,
\end{aligned}
\end{equation}
where again  we interpreted $b_\pm\in\End(\VV_\alpha\otimes\CC^2)$ as  a $2\times 2$ matrix with coefficients in $\End(\VV_\alpha)$.
By construction, we have $b_\pm^2=b_\pm$, $b_++b_-=1$, and $b_\pm b_\mp=0$. Setting $b:=b_+$ we obtain the explicit expression of the one-boundary Hamiltonian 
\begin{equation}\label{eq:Hb}
H_b:=-\mu b+H_{\mathrm{XXZ}}\ ,
\end{equation}
where $H_{\mathrm{XXZ}}$ is defined in~\eqref{XXZ}, which is an operator acting on $\Hb=\VV_\alpha\otimes (\CC^2)^{\otimes N}$.

To construct the two-boundary Hamiltonian $H_{2b}$ we proceed in the same fashion. We already know that the projector $b_l$ on $\VV_{\alpha_l+1}$ is given by $b_+$ from \eqref{blobqgen} (with $\alpha$ replaced by $\alpha_l$). To compute $b_r$ -- the projector on $\VV_{\alpha_r+1}$ -- we now take the linear combination
\begin{equation*}
b_r=\mu_1+\mu_2\CCC_{\CC^2\otimes\VV_{\alpha_r}}\,.
\end{equation*}
Note that the Casimir element action on $\CC^2\otimes\VV_{\alpha_r}$ is different from the action on $\VV_{\alpha_r}\otimes\CC^2$, as the tensor product of $\Uq$ representations is not symmetric but braided (see Section \ref{twoboundarysec}). For $b_r$ to be the projector on $\VV_{\alpha_r+1}$, $\mu_1$ and $\mu_2$ must satisfy the same constraints \eqref{cassys} (with $\alpha$ replaced by $\alpha_r$). We obtain
\begin{equation}
\label{blobrqgen}
\begin{aligned}
b_r & =\dfrac{\CCC_{\CC^2\otimes\VV_{\alpha_r}}-\qa^{\alpha_r-1}-\qa^{-\alpha_r+1}}{\{1\}\{\alpha_r\}}\\
& =\frac{1}{\{\alpha_r\}}
\begin{pmatrix}
 \qa\KK - \qa^{-\alpha_r} & \qa\{1\}\KK\FF\\
 \{1\}\EE & -\qam\KK+\qa^{\alpha_r}
\end{pmatrix}\\
& =\frac{1}{2}+\frac{1}{[\alpha_r]_\qa}\left(\frac{1}{2} 1 \otimes \KK+\frac{\sigma^z}{2\{1\}}\otimes\left((\qa+\qam)\KK-\qa^\alpha-\qa^{-\alpha}\right)+\qa\sigma^+\otimes\KK\FF+\sigma^-\otimes\EE\right) 
\end{aligned}
\end{equation}
and thus the explicit expression of the two-boundary Hamiltonian 
\begin{equation}\label{eq:H2b}
H_{2b}:=-\mu_l b_l+H_{\mathrm{XXZ}}-\mu_r b_r\ .
\end{equation}
Again, the projector on $\VV_{\alpha_r-1}$ is simply $1-b_r$, so we lose no generality by writing $H_{2b}$ in this form.

\subsection{Root of unity $\qa$}
\label{sec:Hb-root-q}

Let us now assume that $\qa=e^{\frac{i\pi}{p}}$, $p\in\NN\backslash\{0,1\}$, a $2p$-th root of unity.\footnote{The definitions and conventions for even and odd roots of unity differ slightly. We decided to work with only the even roots to lighten the exposition. The odd root cases are conceptually the same, and analogous results can be obtained with only minor modifications. Also, different choices of $2p$-th roots of unity of the form $\qa=e^{\frac{i\pi p'}{p}}$ with $p$ and $p'$ coprime will yield exactly the same results as $\qa=e^{\frac{i\pi}{p}}$.} This case is different, because we now have $\EE_{\VV_\alpha}\ket{p}=0$ \emph{for any} $\alpha$, and so the subspace $\bigoplus_{n\geq p}\ket{n}\subset\VV_\alpha$ is stable under the action of $\Uq$, making $\VV_\alpha$ reducible but indecomposable. Working with such  $\VV_\alpha$ then complicates the algebraic analysis. The proper way to fix this is to slightly modify the algebra $\Uq$ by introducing the restricted unrolled quantum group $\Uqu$ (see for example \cite{geermodtr0}). It has an additional generator $\HH$ satisfying
\begin{equation*}
[\HH,\KK^{\pm 1}]=0, \qquad [\HH,\EE]=2\EE,\qquad [\HH,\FF]=-2\FF
\end{equation*}
with the coproduct $\Delta(\HH)=\HH\otimes\mathsf{1}+\mathsf{1}\otimes\HH$, as well as the relations
\begin{equation*}
\EE^p=\FF^p=0\,.
\end{equation*}
The action of this algebra on the spin-chain sites $\CC^2$ is the same as $\Uq$ with $\HH_{\CC^2}=\sigma^z$. However, the infinite-dimensional Verma modules are now truncated to $p$-dimensional modules $\VV_\alpha:=\bigoplus_{n=0}^{p-1}\CC\ket{n}$, which we still denote $\VV_\alpha$. They are given by
\begin{equation}
\label{alpharep}
\begin{aligned}
& \EE_{\VV_\alpha}\ket{n}=[n]_\qa [n-\alpha]_\qa\ket{n-1} \,, \\
& \FF_{\VV_\alpha}\ket{n}=\ket{n+1} \,, \\
& \HH_{\VV_\alpha}\ket{n}=(\alpha+p-1-2n)\ket{n} \,, \\
& \KK_{\VV_\alpha}^{\pm 1}=\qa^{\pm \HH_{\VV_\alpha}}
\end{aligned}
\end{equation}
for all $0\leq n\leq p-1$, with $\ket{-1}=\ket{p}=0$. Note that the weight of $\ket{n}$ (meaning its $\KK$-eigenvalue) is shifted by $\qa^p=-1$ with respect to the definition \eqref{alpharepgen}. For generic values of $\alpha\in \CC$, i.e.\ $\alpha\notin\ZZ$, these $p$-dimensional modules are irreducible. They are also irreducible if $\alpha  = 0\, \mathrm{mod}\, p$ but not for $\alpha\in \{1, \dots, p-1\} \, \mathrm{mod}\, p$ where they contain an irreducible submodule which is not a direct summand  \cite{geermodtr0}.

The advantage of using $\Uqu$ instead of $\Uq$ at roots of unity is that its representation theory has better properties making it more suitable for our purposes. In particular, at roots of unity, conservation of $\HH$ -- which holds for all the Hamiltonians we consider -- is stronger than the conservation of $\KK:=\qa^\HH$ (which is equivalent to the conservation of $\HH$ only modulo $2p$) so we need to add $\HH$ to our symmetry algebra to correctly describe the centralizers of our systems. A consequence of this is that the $\Uqu$-modules $\VV_\alpha$ and $\VV_{\alpha+2p}$ are no longer isomorphic and so one has to treat $\alpha$ as an element of $\CC$ (and not $\CC/2p\ZZ$ as before). This modification is also needed for the braiding of representations to be well-defined, a feature which will be useful for us later on in Section \ref{twoboundarysec}. Finally, if $\qa^{\alpha}\ne \pm1$ (that is $\alpha\notin p\ZZ$), the fusion rule $\VV_\alpha\otimes\CC^2\cong\VV_{\alpha+1}\oplus \VV_{\alpha-1}$ remains true for the $\Uqu$ representations \eqref{alpharep}. Therefore, if $\alpha\notin\ZZ$ (recall that $\VV_\alpha$ is irreducible if and only if $\alpha\notin\ZZ\backslash p\ZZ$) the only $\Uqu$-invariant operators acting on $\VV_\alpha\otimes\CC^2$ are linear combinations of the projectors $b_\pm$ on $\VV_{\alpha\pm 1}$.

The Casimir $\CCC$ defined in~\eqref{casimir} is a central element of $\Uqu$ and so we can apply the same method as in the generic $\qa$ case. We obtain
\begin{equation}
\label{blobqroot}
b_\pm=\frac{\mp 1}{\{\alpha\}}
\begin{pmatrix}
-\qa^{-1}\KK^{-1}-\qa^{\pm\alpha} & \{1\}\FF\\
\qa\{1\}\KK^{-1}\EE & \qa\KK^{-1}+\qa^{\mp\alpha}
\end{pmatrix} \,,
\end{equation}
where again we interpreted $b_\pm\in\End(\CC^p\otimes\CC^2)$ as  a $2\times 2$ matrix with coefficients in $\End(\VV_\alpha=\CC^p)$.
Note that \eqref{blobqgen} and \eqref{blobqroot} are related by the change of variables $\alpha\to\alpha+p$, which is clear from the definitions. Again, by construction, $b_\pm^2=b_\pm$, $b_++b_-=1$, and $b_\pm b_\mp=0$. Setting $b:=b_+$ we obtain the explicit expression of $H_b = -\mu b+H_{\mathrm{XXZ}}$. We notice that  the only case where $H_b$ is not well-defined is when  $\alpha = 0\, \mathrm{mod}\, p$ (see more explanations from the representation theory point of view below).

The two-boundary Hamiltonian $H_{2b}$ is introduced again very similarly to the generic $\qa$ case. The left coupling is given by the projector $b_l:=b_+$ on $\VV_{\alpha_l+1}$  from \eqref{blobqroot} (with $\alpha$ replaced by $\alpha_l$). 
As for the projector $b_r$, it is now given by
\begin{equation}
\label{blobrqroot}
b_r=\frac{1}{\{\alpha_r\}}
\begin{pmatrix}
 -\qa\KK - \qa^{-\alpha_r} & -\qa\{1\}\KK\FF\\
 -\{1\}\EE & \qam\KK+\qa^{\alpha_r}
\end{pmatrix}\,.
\end{equation}
We thus obtain $H_{2b}:=-\mu_l b_l+H_{\mathrm{XXZ}}-\mu_r b_r$, for $\alpha_l, \alpha_r \ne 0\, \mathrm{mod}\, p$.

One may wonder if we ``lose" something by restricting to $p$-dimensional representations at roots of unity instead of still working with Verma modules as for generic $\qa$. In a nutshell, with this simplification, we might only miss the non-trivial Jordan block structure we could have obtained with Verma modules (which are reducible but indecomposable at roots of unity) but the spectrum remains the same. This will be discussed in more detail in a future paper, where we will study the spectrum of our systems for general $\qa$ and show that their scaling limit partition functions behave smoothly as $\qa$ approaches a root of unity.

\paragraph{Remark for non-generic $\alpha$.} 
If $\qa^\alpha = \pm 1$, the fusion rule $\VV_\alpha\otimes\CC^2\cong\VV_{\alpha+1}\oplus \VV_{\alpha-1}$ does not hold any more for generic as well as for root of unity $\qa$. In both cases, $\VV_\alpha\otimes\CC^2$ is then a reducible but indecomposable module. For example for $\alpha=0$, this is a projective module in the category $\mathcal{O}$ of $\Uq$ (the category of all finitely-generated $\Uq$-modules with finite-dimensional weight spaces and such that every vector is annihilated by a sufficiently large power of $\EE$) containing the Verma modules  $\VV_{1}$ and $\VV_{-1}$ as non-trivial submodules where the Casimir operator is non-diagonalisable \cite{humphreys}, and similarly for $\Uqu$ at roots of unity~\cite{geermodtr0}. Therefore, we have no non-trivial projectors in these cases. This can already be seen in the expressions \eqref{blobqgen} and \eqref{blobqroot} for the projectors $b_\pm$ : they are degenerate if and only if $\qa^\alpha=\pm 1$. This explains why the one-boundary Hamiltonians $H_b$ constructed above are well-defined only for $\qa^\alpha\neq\pm 1$. To cover the missing cases, we note that by rescaling $\mu b\to\tilde{\mu}[\alpha]_\qa b$ and taking the limit $\qa^{\alpha}\to \pm1$ one can extend the definition of $H_b$ even to $\qa^\alpha = \pm 1$. This trick will later be used in the study of the spectrum of the $p=2$ model in Section~\ref{sclim}.

Now if $\qa^\alpha=\pm \qa^n$ for some $n\in\ZZ^*$, or some $n\in\{1,\ldots, p-1\}$ if $\qa$ is a root of unity, even if $H_b$ is well-defined, we will have a similar problem of indecomposable and yet reducible modules but in bigger chains. Indeed, for sufficiently large $N$ (actually for $N=|n|$) a module $\VV_{\beta}$ with $\qa^{\beta}=\pm 1$ appears in the decomposition of the Hilbert space which brings us back to the previous situation with $\VV_0\otimes \CC^2$ for larger values of $N$.
Generally speaking, the presence of such  representations with non-diagonal action of the Casimir element makes the spectrum analysis more complicated, with for example, the appearance of non-trivial Jordan blocks for the Hamiltonian action. We leave its careful treatment for a further work. This is why, in this paper, we will mostly consider the case $\qa^\alpha\notin \pm \qa^{\ZZ}$ (thus simply $\alpha\notin\ZZ$ for root of unity~$\qa$), and only occasionally give comments for $\qa^\alpha\in\pm \qa^{\ZZ}$.

\section{Spectrum of $H_b$ for $\qa=i$ and its scaling limit}
\label{specqi}

In order to motivate our construction -- in particular on physical grounds -- we discuss as an aside, in this section and the next, the case $\qa=i$.  We find the spectrum of $H_b$ for $\qa=i$ using free fermions. We then study its scaling limit, and show that it is given by a conformal field theory whose partition function is calculated explicitly. We identify this CFT with the $(\eta,\xi)$ ghost system on the upper half-plane, with special boundary conditions on the real axis that are indexed by the lattice boundary parameter in the domain $\alpha \in \RR/2\ZZ$, that is $y=\cot\frac{\pi\alpha}{2}\in\RR$.

\subsection{The one-boundary XX model}

When taking $\qa=i$, (\textit{i.e.}, $p=2$) in \eqref{XXZ} we obtain the Hamiltonian of the open $\Uqu$-invariant XX model,
\begin{equation*}
H_{\mathrm{XX}}=\frac{1}{2}\sum_{j=1}^{N-1}(\sigma_j^x\sigma_{j+1}^x+\sigma_j^y\sigma_{j+1}^y)+\frac{i}{2}(\sigma_N^z-\sigma_1^z)\,.
\end{equation*}
Let us now modify the boundary conditions on the leftmost boundary by adding a representation $\VV_\alpha$, as explained in the previous section. For $p=2$, $\VV_\alpha$ defined in~\eqref{alpharep} is two-dimensional and we can express $b$ in terms of Pauli matrices,
\begin{equation}
\label{leftblob}
b=\frac{1}{2}(1+\sigma_0^z)-\cot{\frac{\pi\alpha}{2}}\left(e^{-\frac{i\pi\alpha}{2}}\sigma_0^+\sigma_1^-+\frac{1}{\cos{\frac{\pi\alpha}{2}}}\sigma_0^-\sigma_1^++\frac{i}{2}(\sigma_1^z-\sigma_0^z)\right) \,,
\end{equation}
where the site $0$ corresponds to $\VV_\alpha$. The one-boundary XX Hamiltonian is then given by \eqref{Hb} at $\qa=i$ or $p=2$, i.e.\ $H_b=-\mu b+H_{\mathrm{XX}}$ and it acts on the chain $\Hb=\VV_\alpha\otimes (\CC^2)^{\otimes N}$ where $\VV_\alpha$ is the two-dimensional representation introduced in~\eqref{alpharep}.

\subsection{Construction of plane waves}
\label{planewaves}

Let us introduce the Jordan-Wigner transform. For $0\leq j\leq N$, consider the operators
\begin{equation}
\label{eq:def-c}
c_j^\dagger:=(-1)^j\prod_{k=0}^{j-1}\sigma_k^z\sigma_j^+\,,\qquad
c_j:=(-1)^j\prod_{k=0}^{j-1}\sigma_k^z\sigma_j^-\,,
\end{equation}
satisfying the anti-commutation relations
\begin{equation}
\label{anticomc}
\{c^\dagger_j, c^\dagger_{j'}\}=0\,,\qquad
\{c_j, c_{j'}\}=0\,,\qquad
 \{c^\dagger_j, c_{j'}\}=\delta_{j,j'}\,.
\end{equation}
Then
\begin{equation*}
\frac{1}{2}(\sigma_j^x\sigma_{j+1}^x+\sigma_j^y\sigma_{j+1}^y)+\frac{i}{2}(\sigma_{j+1}^z-\sigma_j^z)=c_jc^\dagger_{j+1}+c_{j+1}c^\dagger_{j}+i(c^\dagger_jc_j-c^\dagger_{j+1}c_{j+1})
\end{equation*}
for $1\leq j\leq N-1$, and
\begin{equation*}
b=c^\dagger_0c_0+\cot{\frac{\pi\alpha}{2}}\left(e^{-\frac{i\pi\alpha}{2}}c_1c^\dagger_0+\frac{1}{\cos{\frac{\pi\alpha}{2}}}c_0c^\dagger_1+i(c^\dagger_0c_0-c^\dagger_1c_1)\right)\,.
\end{equation*}
Therefore $H_b$ is quadratic in $c$'s.

We now want to build operators $\theta^\dagger$ diagonalising the adjoint action of the Hamiltonian, that is, satisfying
\begin{equation}
\label{adjdag}
[H_b,\theta^\dagger]=\lambda \theta^\dagger
\end{equation}
for some $\lambda\in\mathbb{C}$. We will look for $\theta^\dagger$ of the form
\begin{equation*}
\theta^\dagger=\sum_{j=0}^NA_jc^\dagger_j \,.
\end{equation*}
Using the commutation relations
\begin{equation*}
[c_j^\dagger c_j, c^\dagger_k]=\delta_{j,k} c_j^\dagger\,, \qquad
[c_j c_{j+1}^\dagger, c^\dagger_k]=-\delta_{j,k} c_{j+1}^\dagger\,, \qquad
[c_j c_{j-1}^\dagger, c^\dagger_k]=-\delta_{j,k} c_{j-1}^\dagger\,,
\end{equation*}
the equation~\eqref{adjdag} then gives the following  linear system of equations on $A_j$'s:
\begin{equation}
\label{speceq}
\left\{
\begin{array}{llll}
-A_0\mu \left(1+i\cot{\frac{\pi\alpha}{2}}\right)+A_1\mu\cot{\frac{\pi\alpha}{2}}e^{-\frac{i\pi\alpha}{2}}=\lambda A_0 \,, \\
A_0\frac{\mu}{\sin{\frac{\pi\alpha}{2}}}+A_1 i\mu\cot{\frac{\pi\alpha}{2}}+(A_2-iA_1)=\lambda A_1 \,, \\
A_{j-1}+A_{j+1}=\lambda A_j\,, \qquad\qquad\quad  ~2\leq j\leq N-1 \,, \\
A_{N-1} +iA_N=\lambda A_N \,.
\end{array}
\right.
\end{equation}
Eliminating $A_0$ from the first two equations we obtain
\begin{equation}
\label{elimA}
\left\{
\begin{array}{ll}
A_0=A_1\frac{\cos{\frac{\pi\alpha}{2}}e^{-\frac{i\pi\alpha}{2}}}{ie^{-\frac{i\pi\alpha}{2}}+\frac{\lambda}{\mu}\sin{\frac{\pi\alpha}{2}}} \,, \\
A_1\frac{i\lambda\cos{\frac{\pi\alpha}{2}}}{ie^{-\frac{i\pi\alpha}{2}}+\frac{\lambda}{\mu}\sin{\frac{\pi\alpha}{2}}}+(A_2-iA_1)=\lambda A_1 \,,
\end{array}
\right.
\end{equation}
and introducing
\begin{equation*}
\chi(\lambda)=\frac{i\lambda\cos{\frac{\pi\alpha}{2}}}{ie^{-\frac{i\pi\alpha}{2}}+\frac{\lambda}{\mu}\sin{\frac{\pi\alpha}{2}}}
\end{equation*}
equations (\ref{speceq}) become
\begin{equation}
\label{speceqA}
\left\{
\begin{array}{lll}
A_2-(i-\chi(\lambda))A_1=\lambda A_1 \,, \\
A_{j-1}+A_{j+1}=\lambda A_j \,,\qquad\quad 2\leq j\leq N-1 \,, \\
A_{N-1} +iA_N=\lambda A_N \,.
\end{array}
\right.
\end{equation}

Similarly, solving
\begin{equation*}
[H_b,\theta]=\lambda \theta
\end{equation*}
with
\begin{equation*}
\theta=\sum_{j=0}^NB_jc_j
\end{equation*}
gives
\begin{equation*}
B_0=\frac{B_1}{ie^{-\frac{i\pi\alpha}{2}}-\frac{\lambda}{\mu}\sin{\frac{\pi\alpha}{2}}}
\end{equation*}
and
\begin{equation}
\label{speceqB}
\left\{
\begin{array}{lll}
B_2-(i-\chi(-\lambda))B_1=-\lambda B_1 \,, \\
B_{j-1}+B_{j+1}=-\lambda B_j \qquad \forall~2\leq j\leq N-1 \,, \\
B_{N-1} +iB_N=-\lambda B_N \,.
\end{array}
\right.
\end{equation}
We see that the systems (\ref{speceqA}) and (\ref{speceqB}) are related by the transformation $\lambda\to -\lambda$, and so it is sufficient to consider only one of them.

We now look for plane wave solutions, that is, amplitudes $A_j$ of the form
\begin{equation*}
A_j=a_+x^j+a_-x^{-j} \,,
\end{equation*}
where $x\in\mathbb{C}\backslash\{-1,1\}$.\footnote{Here $x=\pm 1$ are not allowed, since the amplitude $A_j$ will then only depend on the combination $a_++a_-$, and so $a_+$ and $a_-$ cannot be treated as independent variables. One can check that the corresponding constant (or alternating) amplitude is not a solution of (\ref{speceqA}).} This ansatz solves the system (\ref{speceqA}) in the bulk (for all $2\leq j\leq N-1$) with $\lambda=x+x^{-1}$. We now have to look for $x$ such that the two boundary equations admit a non-zero solution. Rewriting (\ref{speceqA}) we obtain
\begin{equation*}
\left\{
\begin{array}{ll}
(1+(i-\chi(\lambda))x)a_++(1+(i-\chi(\lambda))x^{-1})a_-=0 \,, \\
x^N(i-x)a_++x^{-N}(i-x^{-1})a_-=0 \,.
\end{array}
\right.
\end{equation*}
The equation on $x$ is obtained by imposing that the determinant of this system vanishes. This computation gives
\begin{equation}
\label{waveq}
\lambda\left(x^N\left(\sin{\frac{\pi\alpha}{2}}\left(1+\frac{\lambda}{\mu}\right)-x^{-1}\cos{\frac{\pi\alpha}{2}}\right)-x^{-N}\left(\sin{\frac{\pi\alpha}{2}}\left(1+\frac{\lambda}{\mu}\right)-x\cos{\frac{\pi\alpha}{2}}\right)\right)=0 \,.
\end{equation}

We note that the presence of the overall $\lambda$ factor, and the corresponding solution $\lambda=0$, is due to the $\Uqu$ symmetry. In what follows, we only analyse solutions with non-zero $\lambda$, and as said above $x\ne\pm1$. Then
dividing by ${\lambda(x-x^{-1})}$ and introducing the Chebyshev polynomials of the second kind
\begin{equation*}
U_n\left(\frac{x+x^{-1}}{2}\right)=\frac{x^{n+1}-x^{-n-1}}{x-x^{-1}} \,,
\end{equation*}
we can rewrite~\eqref{waveq} in a more compact form (recall that $\lambda:= x +x^{-1}$)
\begin{equation}
\label{waveqU}
P(\lambda):=U_N(\lambda/2)+\mu U_{N-1}(\lambda/2)+(1-\mu y)U_{N-2}(\lambda/2)=0 \,,
\end{equation}
where we introduced 
\begin{equation}
\label{blobwqi}
y:=\cot{\frac{\pi\alpha}{2}}\,.
\end{equation}
Denoting by $\lambda_k=x_k+x_k^{-1}$, for $1\leq k\leq N$, the solutions of this polynomial equation we obtain (with some choice of normalisation) $N$ operators $\theta_k^\dagger$,
\begin{equation*}
\theta_k^\dagger:=\sum_{j=0}^NA_j(\lambda_k)c_j^\dagger \,, \qquad 1\leq k\leq N \,,
\end{equation*}
where\footnote{By convention $U_{-1}=0$ and $U_0=1$.}
\begin{equation}
\label{expA}
\begin{aligned}
& A_0(\lambda)=-\frac{e^{-\frac{i\pi\alpha}{2}}\chi(\lambda)}{\lambda}\left(U_{N-2}(\lambda/2)+iU_{N-1}(\lambda/2)\right)\,,\\
& A_j(\lambda)=U_{N-j}(\lambda/2)-iU_{N-j-1}(\lambda/2)\qquad\forall~1\leq j\leq N\,.
\end{aligned}
\end{equation}
The explicit expressions \eqref{expA} are found by noticing that the standard property of the Chebyshev polynomials
\begin{equation*}
\lambda U_n(\lambda/2)=U_{n+1}(\lambda/2)+U_{n-1}(\lambda/2)
\end{equation*}
implies that, for any $\lambda\in\CC$, $A_j(\lambda)$ satisfy the system \eqref{speceqA} for $2\leq j\leq N$. If additionally $\lambda$ is a solution of \eqref{waveqU} the first equation of \eqref{speceqA} is also fulfilled. Computing $A_0(\lambda)$ from \eqref{elimA}, we obtain the coefficients $A_j(\lambda_k)$ corresponding to a solution $\lambda_k$. By construction, the operators $\theta_k^\dagger$ then satisfy
\begin{equation}
\label{adjk}
[H_b,\theta_k^\dagger]=\lambda_k\theta^\dagger_k\,.
\end{equation}

Using the $\lambda\to-\lambda$ transformation we also have $N$ operators $\theta_k$,
\begin{equation*}
\theta_k=\sum_{j=0}^NB_j(\lambda_k)c_j \,, \qquad 1\leq k\leq N \,,
\end{equation*}
where
\begin{equation}
\label{expB}
\begin{aligned}
& B_0(\lambda)=-\frac{\chi(\lambda)}{\lambda\cos{\frac{\pi\alpha}{2}}}(U_{N-2}(\lambda/2)+iU_{N-1}(\lambda/2))=\frac{e^{\frac{i\pi\alpha}{2}}}{\cos\frac{\pi\alpha}{2}}A_0(\lambda)\,,\\
& B_j(\lambda)=A_j(\lambda)\qquad\forall~1\leq j\leq N\,,
\end{aligned}
\end{equation}
satisfying
\begin{equation*}
[H_b,\theta_k]=-\lambda_k\theta_k\,.
\end{equation*}

Finally, the zero modes corresponding to the eigenvalue $\lambda_0\equiv0$ are given by
\begin{equation}
\label{zeromodes}
\theta_0^\dagger=\cos{\frac{\pi\alpha}{2}}c_0^\dagger+\sum_{j=1}^Ni^jc_j^\dagger\,,\qquad
 \theta_0=c_0+e^{-\frac{i\pi\alpha}{2}}\sum_{j=1}^Ni^jc_j\,.
\end{equation}
They are related to $\Uqu$ generators as
\begin{equation}
\label{uqtheta}
\EE_{\Hb}=i\theta_0^\dagger \KK_{\Hb}\,,\qquad \FF_{\Hb}=\theta_0\,,
\end{equation}
and they satisfy the anti-commutation relation
\begin{equation}
\label{zerocomrel}
\{\theta_0^\dagger,\theta_0\}=\frac{e^{\frac{i\pi\alpha}{2}}+(-1)^N e^{-\frac{i\pi\alpha}{2}}}{2}\,.
\end{equation}

All the constructed operators satisfy the anti-commutation relations
\begin{equation}
\label{anticom}
\{\theta^\dagger_k,\theta^\dagger_{k'}\}=0 \,, \qquad
\{\theta_k,\theta_{k'}\}=0 \,.
\end{equation}
Note however that $\{\theta_k^\dagger,\theta_{k'}\}\neq \delta_{k,k'}$ and that $\theta_k^\dagger$ and $\theta_k$ are not actually the adjoint of one another. In general $\{\theta_k^\dagger,\theta_{k'}\}=S_{kk'}\Id_{\Hb}$ for some $(N+1)\times (N+1)$ matrix $(S_{kk'})$. Since $S_{kk'}\to\delta_{kk'}$ as $\mu\to 0$ (for some normalisation of $\theta_k$), this matrix is generically invertible, and so we can in principle find a linear transformation $\theta_k\mapsto\tilde{\theta}_{k}$ such that $\{\theta_k^\dagger,\tilde{\theta}_{k'}\}= \delta_{k,k'}$, but the new modes $\tilde{\theta}_{k}$ will no longer diagonalise the adjoint action of the Hamiltonian $H_b$. As the matrix $(S_{kk'})$ depends explicitly on the solutions of \eqref{waveqU} it does not admit a simple closed form in full generality. Later on, we will still manage to calculate it up to order $1/N$ (see equation \eqref{eq:anti-com-N}).

\subsection{Computation of the spectrum}
\label{compspec}

For $p=2$, $\VV_\alpha:=\CC\ket{0}\oplus\CC\ket{1}$ is of dimension $2$, so for convenience let us denote the basis vectors $\ket{0}$ and $\ket{1}$ defined in \eqref{alpharep} by $\ket{\uparrow}$ and $\ket{\downarrow}$ respectively.

We take a subset $S\subseteq\{1,\ldots,N\}$ and consider the vector
\begin{equation*}
\ket{S}=\left(\prod_{k\in S}\theta^\dagger_k\right)\ket{\downarrow}^{\otimes N+1}\,.
\end{equation*}
It is non-zero, and by \eqref{anticom} the ordering of $\theta^\dagger_k$'s will only affect its sign. These vectors are eigenstates for the Hamiltonian $H_b$. Indeed, by \eqref{adjk}, and since $\ket{\downarrow}^{\otimes N+1}\in\mathrm{Ker}~H_b$ (from \eqref{XXZ} and \eqref{leftblob} one easily checks that $H_{\mathrm{XXZ}}\ket{\downarrow}^{\otimes N+1}=b\ket{\downarrow}^{\otimes N+1}=0$), we have
\begin{equation*}
H_b\ket{S}=\sum_{k\in S}\lambda_k\ket{S}:=\lambda_S\ket{S}\,,
\end{equation*}
where $\lambda_k$ are corresponding solutions of the equation~\eqref{waveqU}.
There are $2^N$ such eigenvectors, but $\mathrm{dim}~\Hb=2^{N+1}$.  However, when $\alpha\notin\mathbb{Z}$, using repeatedly the fusion rule $\VV_\alpha\otimes\CC^2\cong\VV_{\alpha+1}\oplus \VV_{\alpha-1}$, we know that the Hilbert space decomposes as
\begin{equation}
\label{uqdecomp}
\Hb=\bigoplus_{k=0}^N {N\choose k} \VV_{\alpha-N+2k}
\end{equation}
into two-dimensional irreducible representations of $\Uqu$. Therefore if the $\lambda_S$'s are pairwise distinct (which is true for generic $\alpha$ and $\mu$) each eigenvalue of $H_b$ is only twice degenerate with each $\lambda_S$-eigenspace spanned by the two linearly independent sates $\ket{S}$ and $\theta_0^\dagger\ket{S}\propto\EE_{\Hb}\ket{S}$. We thus get
\begin{equation}
\mathrm{Spec}(H_b)=\{\lambda_S~|~S\subseteq\{1,\ldots,N\}\}\,.
\end{equation}
For special non-generic values of $\alpha$ and $\mu$, some $\lambda_S$ corresponding to different $S$ may ``accidentally" be equal but with vectors $\ket{S}$ and $\theta_0^\dagger\ket{S}$ still providing an eigenbasis. A more complicated situation is when these vectors become linearly dependent. In that case we still expect that \emph{all} the eigenvectors of $H_b$ are (possibly, linear combinations of vectors) of the form $\ket{S}$ or $\theta_0^\dagger\ket{S}$ but that they cannot span the whole chain $\Hb$ because of the appearance of non-trivial Jordan blocks in the action of $H_b$. Such a phenomenon can be observed for $\alpha\in\ZZ$ (see the corresponding Remark at the end of Section~\ref{construction}), but rigorously proving that it happens only at those values requires further study. For example, the case of $\alpha=1$ produces the standard open XX chain on an even number of sites where the Hamiltonian is known to be non-diagonalisable. This case was studied in detail in~\cite{Gainutdinov_2014} (see the discussion at the end of Section~\ref{sclim}). 

For $\alpha\notin\ZZ$, the basis $\{\ket{S}, \theta_0^\dagger\ket{S}\}_{S\subseteq\{1,\ldots,N\}}$ respects the $\Uqu$-decomposition \eqref{uqdecomp}. Indeed, if we introduce the $k$-fermion subspaces, for $0\leq k \leq N$,
\begin{equation}\label{eq:Fock-space-W}
W_k:=\bigoplus_{|S|=k}\CC\ket{S}\quad\text{and}\quad \overline{W}_k:=\theta_0^\dagger W_k\ ,
\end{equation}
each of them is of dimension $\binom{N}{k}$,
then, using the commutation relations of the zero modes $\theta_0=\FF$ and $\theta_0^\dagger=\EE\KK^{-1}$, we see that
\begin{equation*}
\theta_0\overline{W}_k=W_k \qquad 
\text{and}\qquad
\theta_0 W_k=\theta_0^\dagger \overline{W}_k= \{0\} \,.
\end{equation*}
In other words, $W_k\oplus \overline{W}_k$ is stable under the action of $\Uqu$. Moreover, since 
\begin{equation}
\label{fermnb}
\HH |_{W_k}=(\alpha+2k-N-1)\Id_{W_k}\,,\qquad \HH |_{\overline{W}_k}=(\alpha+2k-N+1)\Id_{\overline{W}_k} \,,
\end{equation}
and $W_k$ (resp.\ $\overline{W}_k$) is annihilated by $\FF$ (resp.\ $\EE$), we have 
\begin{equation*}
W_k\oplus \overline{W}_k={N\choose k} \VV_{\alpha-N+2k}\subset\bigoplus_{k=0}^N {N\choose k} \VV_{\alpha-N+2k}=\Hb
\end{equation*}
for all $0\leq k\leq N$. Let us also stress again that $W_k$ and $\overline{W}_k$ are both stable under the action of $H_b$, and that $\mathrm{Spec}(H_b|_{W_k})=\mathrm{Spec}(H_b|_{\overline{W}_k})$. We will show in Section~\ref{blobsec} that these spaces are even irreducible representations of the lattice algebra generated by the nearest neighbour couplings $(e_i)_{1\leq i\leq N-1}$ and $b$, that is the blob algebra.

We would now like to show that the spectrum of $H_b$ is real in some domain in $(\mu,y)$-space. For this, we have to show that equation \eqref{waveqU} has exactly $N$ real roots for some values of $\mu$ and $y$. We would also like to parametrise all the solutions as $\lambda_k=2\cos{p_k}$, $p_k\in]0,\pi[$, in order to interpret $p_k$ as the momentum of the spin wave associated to $\theta_k^\dagger$. Therefore we will be looking for values of $\mu$ and $y$ such that all the solutions of \eqref{waveqU} are in the interval $]-2,2[$.

The first requirement is obviously $\mu,y\in\RR$, so that the coefficients of $P$ are real. Moreover, since the spectral equation \eqref{waveqU} is invariant under $(\lambda,\mu,y)\to -(\lambda,\mu,y)$, we can restrict ourselves to $\mu>0$, $y\in\RR$ (if $\mu=0$, $H_b=H_{\mathrm{XX}}$ whose spectrum is known). Now for $y=0$ \eqref{waveqU} it takes the form
\begin{equation*}
(\lambda+\mu)U_{N-1}(\lambda/2)=0 \,,
\end{equation*}
so as long as $|\mu|<2$ and $U_{N-1}(\mu/2)\neq 0$, all the solutions are distinct and in $]-2,2[$. Now, if we fix such a $\mu$, this will remain true for $y$ in some neighbourhood of $0$. Let $]y_-(\mu,N), y_+(\mu,N)[$ be the maximal such neighbourhood. Its endpoints correspond to the smallest values of $y$, such that either one of the roots of $P$ leaves $]-2,2[$, or two of them collide and are ejected into the complex plane. To determine them, set $\lambda=2\cos{\xi}$, $\xi\in]0,\pi[$ and rewrite \eqref{waveqU} as
\begin{equation}
\label{compeq}
\begin{aligned}
& \sin{(N+1)\xi}+\mu\sin{N\xi}+(1-\mu y)\sin{(N-1)\xi}=0 \,, \\
& \Leftrightarrow \left((2-\mu y)\cos{\xi}+\mu\right)\sin{N\xi}+\mu y \sin{\xi}\cos{N\xi}=0 \,, \\
& \Leftrightarrow \Delta\sin{\left(N\xi+\varphi(\xi)\right)}=0 \,,
\end{aligned}
\end{equation}
where
\begin{equation*}
\Delta=\sqrt{\left((2-\mu y)\cos{\xi}+\mu\right)^2+\left(\mu y \sin{\xi}\right)^2}
\end{equation*}
and
\begin{equation*}
\varphi(\xi)=\arccot{\frac{(2-\mu y)\cos{\xi}+\mu}{\mu y \sin{\xi}}}\,.
\end{equation*}
We thus have to determine whether we can find $N$ distinct $\xi\in]0,\pi[$, such that
\begin{equation}
\label{simpeq}
N\xi+\varphi(\xi)=k\pi
\end{equation}
for some $k\in\ZZ\,$. If $0<\mu<2$ and $0<y<\frac{2-\mu}{\mu}:=y_{\text{max}}$, then $\varphi(0^+)=0^+$ and $\varphi(\pi^-)=\pi^-$, and so by the intermediate value theorem, \eqref{simpeq} will have exactly $N$ solutions in $]0,\pi[$ corresponding to $1\leq k\leq N$. Therefore $y_{\text{max}}\leq y_+(\mu,N)$ for all $N$. Actually, by plotting $P$ for different values of $0<\mu<2$, $y>0$ and $N$, one sees that $y_+(\mu,N)$ is reached because of a solution hitting $-2$, so one has
\begin{equation}
 U_N(-1)+\mu U_{N-1}(-1)+(1-\mu y_+)U_{N-2}(-1)=0~\Rightarrow~y_+(\mu,N)=\frac{2-\mu}{\mu}\frac{N}{N-1}\,.
\end{equation}
Note that $y_+(\mu,N)\to y_{\text{max}}^+$ as $N\to\infty$.

Now, if $0<\mu<2$ and $y<0$, $\varphi(0^+)=\pi^-$ and $\varphi(\pi^-)=0^+$, so \eqref{simpeq} will have at least $N-2$ solutions in $]-2,2[$. The two remaining roots of $P$ can also belong to $]-2,2[$, as the function $\xi\to N\xi+\varphi(\xi)$ may pass by some $\pi k$ more than once as $\xi$ goes from $0$ to $\pi$, if it is not strictly increasing. Numerically, one sees that as $y$ decreases below $0$, two of the $N$ roots of $P$ at $y=0$ collide and become complex. The value $y_-(\mu,N)$ at which this happens is given by the solution of the system $P_{y_-}(\lambda)=P'_{y_-}(\lambda)=0$ in $\lambda$ and $y_-$ and cannot be expressed analytically. However, it is clear that for sufficiently large $N$, the function $\xi\to N\xi+\varphi(\xi)$ is strictly increasing on $]0,\pi[$, and so \eqref{simpeq} will have exactly $N-2$ solutions, corresponding to $2\leq k\leq N-1$. This implies that $y_-(\mu,N)\to 0^-$ as $N\to\infty$. Therefore we see that the spectral equations of $y>0$ and $y<0$ have quite different properties. Additional properties of the roots of $P$ for $\mu>0$ and $y\in\RR$ are given in Appendix \ref{generalpropspec}.

\paragraph{Example.}

Take $\mu=y=y_{\text{max}}=1$. With the parametrisation $\lambda=2\cos{\xi}$, \eqref{waveqU} becomes
\begin{equation*}
U_N(\cos{\xi})+U_{N-1}(\cos{\xi})=\frac{\sin{(N+1)\xi}+\sin{N\xi}}{\sin{\xi}}=2\cot{\frac{\xi}{2}}\sin{\frac{(2N+1)\xi}{2}}=0 \,,
\end{equation*}
and so we see that it has $N$ solutions $\lambda_k\in]-2,2[$, 
\begin{equation}
\lambda_k=2\cos{\frac{2k\pi}{2N+1}} \,, \qquad 1\leq k\leq N \,,
\end{equation}
with real associated momenta $p_k=\frac{2k\pi}{2N+1}$.

\subsection{Scaling limit}
\label{sclim}

We now want to study the spectrum of $H_b$ in the scaling limit $N\to+\infty$, to extract information about the CFT which this lattice model will give in the continuum. In particular we would like to find the surface energy, the central charge and the partition function in terms of $y$ and $\mu$. To do so, we have to compute the ground-state energy and the low-lying excitations at order $o(1/N)$. 

Assume $0<\mu<2$, $0\leq y\leq y_{\text{max}}$, and $N$ even (the odd $N$ case requires only slight modifications and will be briefly discussed later on). Since the ground-state energy is the sum of all the negative $\lambda_k$, corresponding to $N/2+1\leq k\leq N$, let us change variables $\xi\to\pi-\xi$, $k\to N+1-k$ in \eqref{simpeq} to rewrite it as
\begin{equation}
\label{neweq}
N\xi+\phi(\xi)=k\pi \,,
\end{equation}
with $\phi(\xi):=\pi-\varphi(\pi-\xi)$. Denote $(\xi_k)_{1\leq k\leq N}$ its $N$ solutions in $]0,\pi[$ corresponding to each $1\leq k\leq N$. One of the two ground states\footnote{Recall that each eigenvalue has multiplicity $2$ because of the $\Uqu$ symmetry.} is then given by 
$$
\ket{{\rm vac}_0}:=\left(\prod_{k=1}^{N/2}\theta_k^\dagger\right)\ket{\downarrow}^{\otimes N+1}\in W_{N/2}\ ,
$$
 and its energy is
\begin{equation*}
E_0=-\sum_{k=1}^{N/2} 2\cos{\xi_k}\,.
\end{equation*}

Let us expand $\xi_k$ at order $o(1/N^2)$,
\begin{equation*}
\xi_k=\frac{k\pi}{N}+\frac{\xi_k^{(1)}}{N}+\frac{\xi_k^{(2)}}{N^2}+o(1/N^2)\,.
\end{equation*}
Plugging this expression into (\ref{neweq}) we obtain
\begin{equation}
\label{xiexp}
\begin{cases}
\xi_k^{(1)}=-\phi\left(\frac{k\pi}{N}\right) \,, \\
\xi_k^{(2)}=\phi'\phi\left(\frac{k\pi}{N}\right) \,.
\end{cases}
\end{equation}
Therefore
\begin{equation}
\label{dl}
\begin{aligned}
2\cos{\xi_k}= & ~  2\cos{\frac{k\pi}{N}}+\left(2\phi\sin\right){\left(\frac{k\pi}{N}\right)}\frac{1}{N}\\
& -\left(2\phi'\phi\sin+\phi^2\cos\right)\left(\frac{k\pi}{N}\right)\frac{1}{N^2}+o(1/N^2)\,.
\end{aligned}
\end{equation}
Let us now compute the sum of each term in \eqref{dl} at order $o(1/N)$. First,
\begin{equation*}
\sum_{k=1}^{N/2} 2\cos{\frac{k\pi}{N}}=-1+\cot{\frac{\pi}{2N}}=\frac{2N}{\pi}-1-\frac{\pi}{6N}+o(1/N)\,.
\end{equation*}
To compute the sum of the second term at order $o(1/N)$ we have to use the Euler-Maclaurin formula
\begin{equation*}
\begin{aligned}
\frac{1}{N}\sum_{k=1}^{N/2} \left(2\phi\sin\right)\left(\frac{k\pi}{N}\right) & =\int_{0}^{1/2} \left(2\phi\sin\right)(\pi u)\mathrm{d}u+\frac{\left(\phi \sin\right){\left(\frac{\pi}{2}\right)}+\left(\phi \sin\right)(0)}{N}+o(1/N)\\
& =\frac{2}{\pi}\int_{0}^{\pi/2}\phi(u)\sin(u)\mathrm{d}u+\frac{\pi-\arccot{(1/y)}}{N}+o(1/N)\,.
\end{aligned}
\end{equation*}
Finally,
\begin{equation*}
\begin{aligned}
-\frac{1}{N^2}\sum_{k=1}^{N/2} & \left(2\phi'\phi\sin+\phi^2\cos\right)\left(\frac{k\pi}{N}\right)=\\
& =-\frac{1}{N}\int_{0}^{1/2}  \left(2\sin\phi'\phi+\cos\phi^2\right)(\pi u)\mathrm{d}u+o(1/N)\\
& = -\frac{1}{\pi N}\left[\sin(u)\phi(u)^2\right]_{0}^{\pi/2}+o(1/N)\\
& = -\frac{1}{\pi N}(\pi-\arccot{(1/y)})^2+o(1/N)\,.
\end{aligned}
\end{equation*}
Putting everything together we obtain
\begin{equation}
\label{energyexp}
E_0=Ne_{\rm b}+E_{\rm s}-\frac{\pi v_{\rm F}}{N}\frac{c_{\mathrm{eff}}}{24}+o(1/N^2) \,,
\end{equation}
where
\begin{equation}
e_{\rm b}=-\frac{2}{\pi}
\end{equation}
is the bulk energy per site,
\begin{equation}
\label{surface}
E_{\rm s}=1-\frac{2}{\pi}\int_{0}^{\pi/2}\phi(u)\sin(u)\mathrm{d}u
\end{equation}
is the surface energy,
\begin{equation}
\label{fermiv}
v_{\rm F}=p\sin{\frac{\pi}{p}}=2
\end{equation}
is the Fermi velocity and
\begin{equation}
\label{charge}
\begin{aligned}
c_{\mathrm{eff}} & =-2+\frac{12}{\pi}(\pi-\arccot(1/y))-\frac{12}{\pi^2}(\pi-\arccot(1/y))^2\\
& = -2+\frac{12}{\pi}\arccot(1/y)-\frac{12}{\pi^2}\arccot(1/y)^2\\
& = 1-3\alpha^2
\end{aligned}
\end{equation}
is the effective central charge. The value of $e_{\rm b}$ matches that of the usual XX model, which is unsurprising, as boundary conditions are not expected to influence the bulk properties of the system. The value of $E_{\rm s}$ is different, however. In the limit  $\mu\to 0$, the integral in \eqref{surface} vanishes and we recover the surface energy of the XX model (which is equal to $1$). As far as we know, the exact expression \eqref{surface} is new. Finally, the central charge of the XX is known to be $c:=-2$, (see for example \cite{PASQUIER1990523}). With our new boundary conditions, we obtain an effective central charge $c_{\mathrm{eff}}$ which only depends on $\alpha$ (and not $\mu$). Introducing the conformal weights
\begin{equation}
\label{confw}
h_{r,s}:=\frac{(2r-s)^2-1}{8} \,,
\end{equation}
we have
\begin{equation*}
c_{\mathrm{eff}}=c-24h_{\alpha,\alpha}\,.
\end{equation*}

Let us compute the scaling limit of the low-lying excitations. First, let us find the ground states of the other $W_j$ sectors. They are obtained by acting on $\ket{{\rm vac}_0}$ with the $\theta_k^\dagger$, $N/2+1\leq k\leq N$, corresponding to the smallest (positive) $\lambda_k$'s, or by removing $\theta_k^\dagger$, $1\leq k\leq N/2$ corresponding to the biggest (negative) $\lambda_k$. In both cases we have to take $k$ close to $N/2$. Set $\tilde{k}=N/2-k$. For $\tilde{k}$ close to $0$ and at large $N$,
\begin{equation}
\label{lambdaexp}
\lambda_{\tilde{k}}=2\cos{\xi_{\tilde{k}}}=\frac{2\tilde{k}\pi}{N}+\frac{2\phi\left(\frac{\pi}{2}\right)}{N}+o(1/N)=\frac{(2\tilde{k}+\alpha+1)\pi}{N}+o(1/N)\,.
\end{equation}
Thus, the energy $E_j$ of the ground state $\ket{{\rm vac}_j}:=\left(\prod_{k=1}^{N/2+j}\theta_k^\dagger\right)\ket{\downarrow}^{\otimes N+1}$ of the $W_{N/2+j}$ sector, $1\leq j$, is
\begin{equation*}
E_j=E_0-\frac{\pi}{N}\sum_{\tilde{k}=-1}^{-j}(2\tilde{k}+\alpha+1)+o(1/N)=E_0+\frac{\pi v_F}{N}\frac{j(j-\alpha)}{2}+o(1/N) \,,
\end{equation*}
and the energy $E_{-j}$ of the ground state $\ket{{\rm vac}_{-j}}:=\left(\prod_{k=1}^{N/2-j}\theta_k^\dagger\right)\ket{\downarrow}^{\otimes N+1}$ of the $W_{N/2-j}$ sector, $1\leq j$, is
\begin{equation*}
E_{-j}=E_0+\frac{\pi}{N}\sum_{\tilde{k}=0}^{j-1}(2\tilde{k}+\alpha+1)+o(1/N)=E_0+\frac{\pi v_F}{N}\frac{j(j+\alpha)}{2}+o(1/N)\,.
\end{equation*}

Now, let us fix a sector $W_{N/2+j}$, $j\in\ZZ$, and compute the low-lying excitations above $\ket{{\rm vac}_j}$ with the energy $E_j$. These are obtained by replacing some of the $\theta_k^\dagger$ with $k$ close to $N/2$ appearing in $\ket{{\rm vac}_j}$ by some others with bigger $\lambda_k$ (but still close to $N/2$), keeping in mind that no $\theta_k^\dagger$ must appear more than once (otherwise the resulting vector is zero). By \eqref{lambdaexp}, an elementary substitution of some $\theta_k^\dagger$ by some $\theta_{k'}^\dagger$ with $k<k'$ increases the energy by $\frac{2\pi(k'-k)}{N}$ at order $o(1/N)$. We claim that for $N\to\infty$ the number of ways to increase the ground-state energy $E_j$ by $\frac{2\pi m}{N}$ for $m\in\NN$ is equal to the number of integer partitions of $m$. To see this, take such a partition $K=(\kappa_0\geq\ldots\geq\kappa_\ell)$ of $m$ and associate to it the vector $\ket{K,j}$ by performing the series of substitutions
\begin{equation*}
(\theta^{\dagger}_{N/2+j},\ldots,\theta^{\dagger}_{N/2+j-\ell}) \to (\theta^{\dagger}_{N/2+j+\kappa_0},\ldots,\theta^{\dagger}_{N/2+j-\ell+\kappa_\ell})
\end{equation*}
inside $\ket{{\rm vac}_j}$. Conversely, it is clear that to any eigenvector of $H_b$ in $W_{N/2+j}$ one can associate such a partition $K$ in a unique way. Of course, this bijection only holds for $m$ not too big, since we have a finite number of $\theta^\dagger_k$ at our disposal. A sufficient condition is, for example, that $m< N/2-|j|$. As we are only interested in the low-lying excitations, that is $m\ll N$, this will be enough.

To summarise, in the scaling limit, the generating function of the spectrum of $H_b$, restricted to $W_{N/2+j}$ for each $j\in\ZZ$, is given by
\begin{equation}
\label{wpartfun}
Z_j(q,\alpha) :=\lim_{M\to\infty}\lim_{N\to\infty}\tr_{W_{N/2+j}}^{<M} q^{\frac{N}{\pi v_F}(H_b-Ne_b-E_s)}=\frac{q^{-\frac{1-3\alpha^2}{24}}}{P(q)}q^{\frac{j(j-\alpha)}{2}} \,,
\end{equation}
where $1/P(q):=1/\prod_{n=1}^{+ \infty}(1-q^n)$ is the generating function of integer partitions, and $\tr_{W_{N/2+j}}^{<M}$ means that we only sum over the $M$ first excitations above the ground state in the $W_{N/2+j}$ sector. This double-limit construction is needed to ensure that the final expression is convergent and consistent with our computation.

The full generating function of the spectrum of $H_b$ in the scaling limit is given by
\begin{equation}
\label{partfun}
\begin{aligned}
Z(q,\alpha) & :=\lim_{M\to\infty}\lim_{N\to\infty}\tr_{\Hb}^{<M} q^{\frac{N}{\pi v_F}(H_b-Ne_b-E_s)}\\
& \,=2\sum_{j\in\ZZ}Z_j(q,\alpha)=\frac{2q^{-\frac{1-3\alpha^2}{24}}}{P(q)}\sum_{j\in\ZZ}q^{\frac{j(j-\alpha)}{2}} \,,
\end{aligned}
\end{equation}
where the factor of $2$ comes from the $\Uqu$-symmetry which makes each eigenvalue twice degenerate (or, equivalently, from the contribution of the $\overline{W}_j$ sectors). Rewriting this result in terms of the conformal weights \eqref{confw} and of the XX central charge $c:=-2$, as
\begin{equation*}
-\frac{1-3\alpha^2}{24}+\frac{j(j\pm\alpha)}{2}=-\frac{c}{24}+h_{\alpha,\alpha\mp 2j} \,,
\end{equation*}
we obtain
\begin{equation}
\label{wpartfunh}
Z_j(q,\alpha):=\frac{2q^{-\frac{c}{24}}}{P(q)}q^{h_{\alpha,\alpha+2j}}\,.
\end{equation}
Thus, in the scaling limit, the spectrum of $H_b|_{W_{N/2+j}}$ is exactly that of a CFT with central charge $-2$ in the generic Virasoro Verma representation of conformal weight $h_{\alpha,\alpha+2j}$. This proves a special case ($\qa=i$) of a prediction made in \cite{Nichols2006, Jacobsen_2008} for the abstract one-boundary Hamiltonian evaluated on standard representations of the blob algebra. We discuss this more from lattice algebra considerations in Sections~\ref{blobsec} and~\ref{conclusion}. The proof of the general result for any $\qa$ with $|\qa|=1$ will be given in a forthcoming paper.

All these results are manifestly independent of $\mu$, which is consistent with \cite{Doikou_2003, Jacobsen_2008} and renormalisation arguments \cite{Gainutdinov_2013}. Nevertheless, one should not forget that in our derivation we had to assume $0<\mu<2$ and $0\leq y\leq y_{\text{max}}$, so technically speaking, equations \eqref{partfun} and \eqref{wpartfun} are valid only for $\alpha\in]0,1]$. To extend them to $y<0$ (and still $\mu>0$), that is to $\alpha\in[1,2[$,\footnote{We chose this interval (and not $]-1,0]$ for example), because $b$ becomes singular at $\alpha=0$ and computations \eqref{charge}-\eqref{lambdaexp} are valid only for $0<\alpha<2$.} we have to modify the expansion \eqref{xiexp} to take into account the fact that at large $N$ we will have only $N-2$ solutions in $]-2,2[$. A faster way is to invoke the unicity of the analytic continuation in $\alpha$, and claim that \eqref{wpartfun} and \eqref{partfun} remain valid for $\alpha\in [1,2[$. It is however important to remember that even if the final result does not explicitly depend on $\mu$, its sign is still important, as the spectral equation of $H_b$ is only invariant under the simultaneous transformation $(\mu,y)\to -(\mu,y)$. Concretely, from \eqref{blobwqi}, and since $\alpha\in]0,2[$, the change of sign $y\to -y$ is implemented by $\alpha\to 2-\alpha$. This means that the scaling limit for the two choices of sign for $\mu$ are related by
\begin{equation}
\label{musign}
Z_j^{(\mu\leq 0)}(q,\alpha)=Z_j^{(\mu\geq 0)}(q,2-\alpha)\,.
\end{equation}
Note also that ${Z(q,\alpha)=Z(q,2-\alpha)}$, so the \emph{total} spectrum of $H_b$ does not depend on the sign of $\mu$ in the scaling limit, but the spectrum \emph{in each sector} does. 

To complete the analysis, let us explain what happens at the endpoints of our fundamental $\alpha$-domain $]0,2[$, corresponding to $y\to\pm\infty$, and also at the special point $\alpha=1$ for which $y=0$ and $\VV_\alpha$ is reducible but indecomposable. 

Let us start with $\alpha=1$. At this value we have
\begin{equation*}
b=\frac{1}{2}(1+\sigma_0^z)-\sigma_0^-\sigma_1^+ \,,
\end{equation*}
so $\ket{\downarrow}\otimes\Hs$ is a proper subspace of $H_b$. Moreover, for any $\ket{v}\in\ket{\uparrow}\otimes\Hs$,
\begin{equation*}
H_b\ket{v}=(-\mu+\Id_{\CC^2}\otimes H_{\mathrm{XX}})\ket{v}+\ket{w} \,,
\end{equation*}
where $\ket{w}\in\ket{\downarrow}\otimes\Hs$. In other words, with respect to the direct-sum decomposition $\Hb=\ket{\downarrow}\otimes\Hs\oplus\ket{\uparrow}\otimes\Hs$, $H_b$ is of the form
\begin{equation}
\label{Hb-block}
H_b=
\begin{pmatrix}
H_{\mathrm{XX}} & *\\
0 & H_{\mathrm{XX}}-\mu
\end{pmatrix}\,.
\end{equation}
Therefore, if $\mu\neq 0$, that is, if there is a finite energy gap between the two diagonal blocks, the scaling limit of $H_b$ will be exactly that of $H_{\mathrm{XX}}$ on an even number of sites (with surface energy decreased by $\mu$ if $\mu>0$), and if $\mu=0$, that of two copies of $H_{\mathrm{XX}}$ (which is obvious from the definitions). It is known \cite{PASQUIER1990523, Gainutdinov_2014} that in the continuum the XX spin chain is described by the partition function (recall that we assumed that $N$ is even)
\begin{equation*}
Z_{\mathrm{XX}}(q)=q^{-\frac{c}{24}}\sum_{j\geq 0}(2j+1)\frac{q^{h_{1,1+2j}}-q^{h_{1,-1-2j}}}{P(q)}=\frac{4q^{-\frac{c}{24}}}{P(q)}\sum_{j\geq 0}q^{\frac{j(j+1)}{2}}\,.
\end{equation*}
On the other hand,
\begin{equation*}
Z(q,\alpha=1) =\frac{2q^{-\frac{c}{24}}}{P(q)}\left(1+\sum_{j\geq 1}\left(q^{\frac{j(j-1)}{2}}+q^{\frac{j(j+1)}{2}}\right)\right)=\frac{4q^{-\frac{c}{24}}}{P(q)}\sum_{j\geq 0}q^{\frac{j(j+1)}{2}} \,,
\end{equation*}
so we have, indeed, $Z_{\mathrm{XX}}(q)=Z(q,\alpha=1)$.

Now let us consider the limits $\alpha\to 0^+,2^-$. By \eqref{musign}, it is sufficient to study only one of these endpoints, say $\alpha=0$, as the behaviour for the other one can be recovered by changing the sign of $\mu$. 

In the limit $\alpha\to 0^+$, $b$ diverges, but we can still obtain a well-defined Hamiltonian by rescaling $\mu=\frac{2}{\pi}\tilde{\mu}\alpha$ so that the boundary term $-\mu b$ converges to a finite limit. It is then easy to see from \eqref{leftblob} that
\begin{equation*}
\lim_{\alpha\to 0}-\mu b=\tilde{\mu}\left[\frac{1}{2}(\sigma_0^x\sigma_{1}^x+\sigma_0^y\sigma_{1}^y)+\frac{i}{2}(\sigma_{1}^z-\sigma_0^z)\right]:=-\tilde{\mu}e_0 \,,
\end{equation*}
so $H_b$ just becomes the XX Hamiltonian on $N+1$ sites with an inhomogeneous coupling constant $\tilde{\mu}$ on the first two sites. This is quite natural since at $\alpha=0$, $\VV_\alpha$ becomes isomorphic to the fundamental $\CC^2$ representation of $\Uqu$, so the boundary coupling $b$ has to be proportional to the bulk coupling, as it is the only $\Uqu$ intertwiner of $\CC^2\otimes\CC^2$, up to a constant shift.

Because the scaling limit only depends on the sign of $\mu$, it is sufficient to take $\tilde{\mu}=\pm 1$. If $\tilde{\mu}=1$, $H_b$ just becomes the XX Hamiltonian on $N+1$ sites. The scaling limit of $H_\mathrm{XX}$ on an odd number of sites (recall that we assumed $N$ even) was carefully treated in \cite{Gainutdinov_2014} and is given by
\begin{equation*}
Z_\mathrm{XX}^\mathrm{odd}(q)=\frac{2q^{-\frac{1}{24}}}{P(q)}\left(1+2\sum_{j\geq 1}q^{\frac{j^2}{2}}\right) \,.
\end{equation*}
We see that $Z_\mathrm{XX}^\mathrm{odd}(q)=Z(q,0)$, which is consistent with our argument. If $\tilde{\mu}=-1$ or, equivalently, $\alpha\to 2^-$, $\mu\sim\frac{2}{\pi}(2-\alpha)$, the scaling limit of $H_b$ is a priori different and has not been computed independently in the literature to our knowledge. However, carefully following the computations of this section, it is not too hard to see that the relation
\begin{equation*}
Z^{(\mu\leq 0)}(q,\alpha)=Z^{(\mu\geq 0)}(q,2-\alpha)
\end{equation*}
still holds for $\alpha=0^+,2^-$, provided of course that we rescale $\mu$ as discussed above. Since $Z(q,0)=Z(q,2)$, this means that the scaling limit of the XX model on an odd number of sites (which is $N+1$) with an inhomogeneous coupling on the first two sites does not depend on the coupling constant $\tilde{\mu}$ and is therefore the same as the scaling limit of the homogeneous XX model on odd number of sites. For an even number of sites this property is also true but somewhat more obvious as $Z(q,1^+)=Z(q,1^-)$.

As the spectrum of $H_b$ and $Z(q,\alpha)$ only depends on $\alpha$ modulo $2$, this completes the study for all $\alpha\in\RR$ and even $N$.

For odd $N$, we need in principle to modify the expansion \eqref{dl} and also be careful as to whether the ``middle" mode $\theta^\dagger_{(N+1)/2}$ decreases or increases the energy depending on the value of $\alpha$. This computation can be performed straightforwardly, but let us present a faster argument. From the discussion above, we have seen that taking the limit $\alpha\to 0^+$ (while keeping $\mu$ positive and of order $\alpha$) amounts to considering an XX spin chain of size $N+1$ while $\alpha=1$ corresponds to an XX spin chain of size $N$. Therefore, we must have $Z^\mathrm{odd}(q,1)=Z(q,0)$ and $Z^\mathrm{odd}(q,0)=Z(q,1)$. The obvious guess for $Z^\mathrm{odd}(q,\alpha)$ is then
\begin{equation}
\label{zodd}
Z^\mathrm{odd}(q,\alpha)=Z(q,\alpha-1)=\frac{2q^{\frac{1}{12}}}{P(q)}\sum_{j\in\ZZ}q^{\frac{(\alpha-1-2j)^2-1}{8}}=\frac{2q^{-\frac{c}{24}}}{P(q)}\sum_{j\in\frac{1}{2}+\ZZ}q^{h_{\alpha,\alpha+2j}}\,.
\end{equation}
This is indeed the correct expression and can be checked by direct computation. Note also that $Z^\mathrm{odd}(q,\alpha)$ only differs from $Z(q,\alpha)$ by the fact that the rightmost sum in \eqref{zodd} now runs over half-integer and not integer~$j$. This is not a coincidence and has a natural explanation in terms of lattice algebras (see Section~\ref{blobsec}).

\subsection{The $(\eta,\xi)$ ghost system}
\label{sfsec}

As the generating functions $Z_j(q,\alpha)$ exactly match the characters of Virasoro Verma modules of weight $h_{\alpha,\alpha+2j}$, we would like to identify the scaling limit of our spin chain with a known conformal field theory. Consider the action \eqref{exaction} of the $(\eta,\xi)$ ghost system
\begin{equation*}
S[\eta,\xi,\bar{\eta},\bar{\xi}]=\frac{1}{2\pi}\int\ddd^2z\left(\eta\bar{\partial}\xi+\bar{\eta}\partial\bar{\xi}\right)\,.
\end{equation*}
This model was originally introduced in the context of string theory \cite{Friedan:1985ge}, and it is an important example of a logarithmic CFT \cite{LCFT2013}. It is also related to the $\mathsf{GL}(1|1)$ WZNW model \cite{gl112006, gl112009}. It was shown in \cite{Gainutdinov_2014} that the $\Uqu$-invariant open XX spin chain with Hamiltonian $H_{\mathrm{XX}}$ gives this model in the continuum limit.\footnote{Actually, a more conceptually accurate description would be in terms of symplectic fermions \cite{Kausch:1995py, Kausch_2000}, but for our case of twisted boundary conditions this will not make a difference.} We would like to generalise this result to the twisted XX model considered here.

The $(\eta,\xi)$ system has the energy momentum tensor
\begin{equation}
\label{emten}
T(z)=-:\eta(z)\partial\xi(z):\,,
\end{equation}
where we used the standard normal ordering, and a similar expression for the anti-chiral part $\bar{T}(\bar{z})$.
Additionally, it has a $\mathsf{U}(1)$ symmetry with (chiral) Noether current
\begin{equation*}
J(z)=:\xi(z)\eta(z):\,.
\end{equation*}
The field $\xi(z)$ increases the associated charge $\mathsf{Q}$ by $1$, whereas $\eta(z)$ decreases it by $1$. The canonical anti-commutation relation reads
\begin{equation}
\label{anticomxe}
\{\xi(z),\eta(w)\}=\delta(z-w)\,,
\end{equation}
and similarly for the anti-chiral fermionic fields.

Since our spin-chain system has the geometry of a strip, we need to define this theory on the upper-half of the complex plane and prescribe some conformally invariant boundary conditions on the real axis, for which there are many possible choices \cite{Kausch:1995py, Kausch_2000, bc2006, Creutzig_2008}. Let us simply take\footnote{We use $\tau$ to denote the twist, since its usual notation $\mu$ is already used for the boundary coupling. This $\tau$ should not be confused with the modular parameter, which we do not use in this paper.}
\begin{equation}
\label{bc}
\begin{gathered}
\eta(x)=\bar{\eta}(x)\,, \quad \xi(x)=\bar{\xi}(x)\quad\text{for}~x\in\RR_{>0}\,,\\
\eta(x)=e^{-2i\pi\tau}\bar{\eta}(x)\,,\quad \xi(x)=e^{2i\pi\tau}\bar{\xi}(x)\quad \text{for}~x\in\RR_{<0}\,,
\end{gathered}
\end{equation}
for $\tau\in\RR/\ZZ$. Note that they preserve the energy-momentum tensor \eqref{emten}. This is equivalent to defining the theory on $\CC^*$, but prescribing a fixed monodromy around $0$ to the fields $(\eta,\xi,\bar{\eta},\bar{\xi})$. It is easy to see that
\begin{equation}
\label{mono}
\begin{gathered}
\eta(e^{2i\pi}z)=e^{-2i\pi\tau}\eta(z),\quad \xi(e^{2i\pi}z)=e^{2i\pi\tau}\xi(z)\,,\\
\bar{\eta}(e^{2i\pi}z)=e^{2i\pi\tau}\bar{\eta}(z),\quad\bar{\xi}(e^{2i\pi}z)=e^{-2i\pi\tau}\bar{\xi}(z)\,.
\end{gathered}
\end{equation}
Therefore, to compute the partition function of the $(\eta,\xi)$ system on the upper-half plane with boundary conditions \eqref{bc}, we can use the results of~\cite{Kausch:1995py,Kausch_2000}. They state that the chiral partition function (which is the one we need to consider in the context of a boundary CFT \cite{Cardy2004BoundaryCF}) of the twisted theory \eqref{mono} takes the form
\begin{equation*}
\begin{aligned}
Z_{(\eta,\xi)}(q,\tau,g) & =q^{-\frac{c}{24}-\frac{\tau(1-\tau)}{2}}\prod_{n=0}^{+\infty}(1+gq^{n+\tau})(1+g^{-1}q^{n+1-\tau})\\
& =\frac{q^{-\frac{c}{24}}}{P(q)}\sum_{j\in\ZZ} g^j q^{h_{1-2\tau,1-2\tau+2j}} \,,
\end{aligned}
\end{equation*}
where $g$ is a formal parameter keeping track of the value of the $\mathsf{U}(1)$ charge in each sector. More abstractly, one can think of $Z_{(\eta,\xi)}(q,\tau,g)$ as a $\Vir\times \mathsf{U}(1)$-character. Note also that
\begin{equation*}
Z_{(\eta,\xi)}(q,\tau+1,g)=g^{-1}Z_{(\eta,\xi)}(q,\tau,g)\,.
\end{equation*}
One might absorb the $g^{-1}$ factor by rescaling the partition function as $Z_{(\eta,\xi)}\to g^\tau Z_{(\eta,\xi)}$, so that it only depends $\tau\in\RR/\ZZ$, but we will not do this here.

To make an explicit connection between this theory and our spin chain, introduce the lattice fields
\begin{equation*}
\begin{aligned}
\chi^+(z)^{<M} & :=-\sum_{-M\leq k\leq M}\chi^+_{k+\tau}z^{-k-1+\tau}\,,\\
\chi^-(z)^{<M} & :=\sum_{-M\leq k\leq M}\chi^-_{k-\tau}z^{-k-1-\tau}\,,
\end{aligned}
\end{equation*}
for some $\tau\notin\ZZ$ with modes
\begin{equation}
\label{chimodes}
\chi^+_{k+\tau}:=\frac{\sin(\xi_{N/2+k})}{\sqrt{\pi}}\theta^\dagger_{N/2+k}\,,\qquad \chi^-_{k-\tau}:=\frac{-i\sin(\xi_{N/2-k})}{\sqrt{\pi}}\theta_{N/2-k}
\end{equation}
and a cut-off $M\in\NN$. Here, $\xi_k$ are solutions of \eqref{simpeq}, and $\theta^\dagger_k$, $\theta_k$ the associated modes shifting the energy by $\pm 2\cos(\xi_k)$. It is clear from the definitions that $\chi^\pm(z)^{<M}$ has monodromy $e^{\pm 2i\pi\tau}$. Set
\begin{equation*}
\begin{aligned}
\xi(z)^{<M} & :=\sum_{-M\leq k\leq M}\frac{\chi^+_{k+\tau}}{k-\tau}z^{-k+\tau}\,,\\
\eta(z)^{<M} & :=\sum_{-M\leq k\leq M}\chi^-_{k-\tau}z^{-k-1-\tau}\,.
\end{aligned}
\end{equation*}
Note that $\partial\xi(z)^{<M}=\chi^+(z)^{<M}$. We claim that in the scaling limit we have, for all $\alpha\notin 1+2\ZZ$,
\begin{equation*}
\begin{aligned}
\lim_{M\to\infty}\lim_{N\to\infty}\xi(z)^{<M}=\xi(z)\,,\\
\lim_{M\to\infty}\lim_{N\to\infty}\eta(z)^{<M}=\eta(z)\,,
\end{aligned}
\end{equation*}
with
\begin{equation*}
\tau=\frac{1-\alpha}{2}\,.
\end{equation*}
This is motivated by the fact that
\begin{equation}
\label{anticomsc}
\lim_{M\to\infty}\lim_{N\to\infty}\{\xi(z)^{<M},\eta(w)^{<M}\} =\frac{1}{w}\sum_{k\in\ZZ}(z/w)^{k+\tau}:=\delta_\tau(z,w) \,,
\end{equation}
where $\delta_\tau(z,w)$ is the $\tau$-twisted delta function. This is the natural generalisation of the canonical anti-commutation relation \eqref{anticomxe} (which we recover for $\tau=0\in\RR/\ZZ$) to fields with non-trivial monodromy. The proof of \eqref{anticomsc} is given in Appendix \ref{anticomp} together with some elementary properties of $\delta_\tau(z,w)$.

We can thus identify the scaling limits of the lattice fields $\chi^+(z)^{<M}$, $\xi(z)^{<M}$, and $\chi^-(z)^{<M}=\eta(z)^{<M}$ with respectively $\partial\xi(z)$, $\xi(z)$, and $\eta(z)$. Moreover, $\chi^\pm(z)^{<M}$ shifts the eigenvalue of $\HH$ by $\pm 2$. Recalling that $\xi(z)$ and $\eta(z)$ shift the $\mathsf{U}(1)$ charge $\mathsf{Q}$ by $+1$ and $-1$ respectively, let us set
\begin{equation}
\mathsf{Q}=\frac{1}{2}(\HH-\alpha)\,.
\end{equation}
Then 
\begin{equation*}
\mathsf{Q}|_{W_{N/2+j}}=j-\frac{1}{2}\,,\qquad \mathsf{Q}|_{\overline{W}_{N/2+j}}=j+\frac{1}{2}\,,
\end{equation*}
and so
\begin{equation}
\begin{aligned}
Z(q,\alpha,g) & :=\lim_{M\to\infty}\lim_{N\to\infty}\tr_{\Hb}^{<M} g^\mathsf{Q}q^{\frac{N}{\pi v_F}(H_b-Ne_b-E_s)}\\
& \,=\left(g^\frac{1}{2}+g^{-\frac{1}{2}}\right)\frac{q^{-\frac{c}{24}}}{P(q)}\sum_{j\in\ZZ} g^j q^{h_{\alpha,\alpha+2j}}\\
&\, =\left(g^\frac{1}{2}+g^{-\frac{1}{2}}\right)Z_{(\eta,\xi)}\left(q,\tau=\frac{1-\alpha}{2},g\right)\,.
\end{aligned}
\end{equation}
The additional factor of $g^\frac{1}{2}+g^{-\frac{1}{2}}$ comes from the $\Uqu$ symmetry which creates two copies of the same sector, with the $\mathsf{U}(1)$ charge shifted by $\pm\frac{1}{2}$. One can get rid of it by restricting to $\bigoplus_j W_{N/2+j}\subset\Hb$, and redefining the charge as $\mathsf{Q}\to \mathsf{Q}+\frac{1}{2}$ (or to $\bigoplus_j \overline{W}_{N/2+j}$, with $\mathsf{Q}\to \mathsf{Q}-\frac{1}{2}$).

Therefore, we can identify the scaling limit of our twisted XX spin chain with the $(\eta,\xi)$ ghost CFT on the upper half-plane with boundary conditions given by \eqref{bc}, with the identification $\tau=\frac{1-\alpha}{2}$.

\section{Spectrum of $H_{2b}$ for $\qa=i$ and its scaling limit}
\label{spech2b}

Let us now turn to the two-boundary Hamiltonian, still at $\qa=i$. Recall that it is defined on the Hilbert space
\begin{equation*}
\Hbb=\VV_{\alpha_l}\otimes\left(\CC^2\right)^{\otimes N}\otimes\VV_{\alpha_r}
\end{equation*}
by the Hamiltonian
\begin{equation*}
H_{2b}=-\mu_lb_l-H_{\mathrm{XX}}-\mu_r b_r \,,
\end{equation*}
where $b_l:=b$ is still given by \eqref{leftblob},   with $\alpha$ replaced by $\alpha_l$, whereas 
\begin{equation*}
b_r=\frac{1}{2}(1+\sigma_{N+1}^z)-\cot{\frac{\pi\alpha_r}{2}}\left(\sigma_N^-\sigma_{N+1}^++\frac{e^{\frac{i\pi\alpha_r}{2}}}{\cos{\frac{\pi\alpha_r}{2}}}\sigma_N^+\sigma_{N+1}^-+\frac{i}{2}(\sigma_{N+1}^z-\sigma_N^z)\right).
\end{equation*}
is given by \eqref{blobrqroot}. In Jordan-Wigner form, recall the fermions $c_j$ and $c^{\dagger}_j$ introduced in~\eqref{eq:def-c},
\begin{equation*}
b_r=c^\dagger_{N+1}c_{N+1}+\cot{\frac{\pi\alpha}{2}}\left(c_N c_{N+1}^\dagger+\frac{e^{\frac{i\pi\alpha_r}{2}}}{\cos{\frac{\pi\alpha_r}{2}}}c_{N+1}c_{N}^\dagger+i(c^\dagger_{N}c_{N}-c^\dagger_{N+1}c_{N+1})\right)\,.
\end{equation*}

The spectrum of this model can be computed using the same method as for $H_b$, by looking for operators $\theta^\dagger$ diagonalising the adjoint action of $H_{2b}$
\begin{equation}
\label{adjdag2b}
[H_{2b},\theta^\dagger]=\lambda \theta^\dagger
\end{equation}
for some $\lambda\in\mathbb{C}$. Taking $\theta^\dagger$ of the form
\begin{equation*}
\theta^\dagger=\sum_{j=0}^{N+1}A_jc^\dagger_j \,,
\end{equation*}
and eliminating $A_{0}$ and $A_{N+1}$ from the system of equations given by \eqref{adjdag2b}, we obtain
\begin{equation}
\label{speceqA2b}
\left\{
\begin{array}{lll}
A_2-(i-\chi_l(\lambda))A_1=\lambda A_1 \,, \\
A_{j-1}+A_{j+1}=\lambda A_j \qquad \forall~2\leq j\leq N-1 \,,\\
A_{N-1} +(i+\chi_r(\lambda))A_N=\lambda A_N \,,
\end{array}
\right.
\end{equation}
where
\begin{equation*}
\chi_l(\lambda)=\frac{i\lambda\cos{\frac{\pi\alpha_l}{2}}}{ie^{-\frac{i\pi\alpha_l}{2}}+\frac{\lambda}{\mu}\sin{\frac{\pi\alpha_l}{2}}}
\end{equation*}
and
\begin{equation*}
\chi_r(\lambda)=\frac{i\lambda\cos{\frac{\pi\alpha_r}{2}}}{ie^{-\frac{i\pi\alpha_r}{2}}-\frac{\lambda}{\mu}\sin{\frac{\pi\alpha_r}{2}}}\,.
\end{equation*}

Taking $A_j=a_+x^j+a_-x^{-j}$ for $x\in\mathbb{C}\backslash\{-1,1\}$, \eqref{speceqA2b} reduces to
\begin{equation*}
\left\{
\begin{array}{ll}
x(i+x^{-1}-\chi_l(\lambda))a_++x^{-1}(i+x-\chi_l(\lambda))a_-=0 \,, \\
x^N(i-x+\chi_r(\lambda))a_++x^{-N}(i-x^{-1}+\chi_r(\lambda))a_-=0 \,,
\end{array}
\right.
\end{equation*}
with $\lambda=x+x^{-1}$. The spectral equation is obtained by imposing that the determinant of this system vanishes. After a tedious computation, and dividing by $\lambda(x-x^{-1})$ as before, we obtain the polynomial equation
\begin{equation}
\label{2bwaveqU}
U_{N+1}+(\mu_l+\mu_r)U_{N}+(\mu_l\mu_r+t_l+t_r)U_{N-1}+(\mu_l t_r+\mu_r t_l)U_{N-2}+t_lt_rU_{N-3}=0
\end{equation}
in the variable $\lambda/2$, where
\begin{equation*}
t_l:=1-\mu_l y_l=1-\mu_l \cot\frac{\pi\alpha_l}{2}\,,\quad t_r:=1-\mu_r y_r=1-\mu_r\cot\frac{\pi\alpha_r}{2}\,.
\end{equation*}
It is the two-boundary analogue of \eqref{waveqU}. Note that it has real coefficients for real $\mu_{l/r}, y_{l/r}$, that it is invariant under the left/right exchange $(\mu_{l/r}, y_{l/r})\leftrightarrow (\mu_{r/l}, y_{r/l})$, and that if we set $\mu_r$ or $\mu_l$ to $0$ we recover \eqref{waveqU} up to a factor $\lambda$ (which accounts for the fact that $H_{2b}$ then becomes equivalent to two copies of $H_b$).

Again, parametrising $\lambda=2\cos\xi$, the polynomial equation \eqref{2bwaveqU} can be rewritten as
\begin{equation}
\label{phieq}
\sin\left(N\xi+\varphi_{2b}(\xi)\right)=0\,,\qquad \xi\in]0,\pi[ \,,
\end{equation}
with
\begin{equation*}
\varphi_{2b}(\xi)=\arccot{\frac{\mu_l\mu_r+t_l+t_r+\cos(2\xi)(1+t_lt_r)+\cos(\xi)(\mu_l(1+t_r)+\mu_r(1+t_l))}{\sin(\xi)\left(2\cos(\xi)(1-t_lt_r)+\mu_l(1-t_r)+\mu_r(1-t_l)\right)}}\,.
\end{equation*}
Here we are working with the generalised $\arccot$ function taking values in $[0,2\pi]$. The function $\varphi_{2b}$ is then defined either by continuity, or by tracking the signs of the numerator and denominator to figure out in which quadrant of the $[0,2\pi]$ circle it must take its values. At fixed $y_l,y_r>0$ and for sufficiently small $\mu_l,\mu_r>0$, $\varphi_{2b}$ is a strictly increasing function of $\xi$ with $\varphi_{2b}(0)=0$ and $\varphi_{2b}(\pi)=2\pi$, meaning that \eqref{phieq} has exactly $N+1$ real solutions (with real associated momenta). This provides $N+1$ operators $\theta_k^\dagger$, with $1\leq k\leq N+1,$ whose action on $\ket{\downarrow}^{\otimes N+2}$, together with the zero-mode $\theta_0^\dagger$ coming from the $\Uqu$ symmetry, generates a complete basis of eigenstates of $H_{2b}$. Note that the $\Uqu$-decomposition of the Hilbert space is now
\begin{equation*}
\Hbb=\bigoplus_{k=0}^{N+1} {N+1\choose k} \VV_{\alpha_l+\alpha_r-N-1+2k} \,,
\end{equation*}
and that the new $k$-fermion subspaces $W_k$ and $\overline{W}_k$ are of dimension ${N+1\choose k}$.

To find the scaling limit of $H_{2b}$, we do not need to redo all the computations, because from the analysis of Section \ref{sclim} it is clear that the ground state energies in each sector $W_{N/2+j}$ only depend on $\phi(\frac{\pi}{2})=\pi-\varphi(\frac{\pi}{2})$, and that the combinatorics for the excitations is still valid for our new fermionic modes. There are two minor subtleties, however. First, to obtain the analogue of \eqref{neweq},
\begin{equation*}
N\xi+\phi_{2b}(\xi)=k\pi \,,
\end{equation*}
we now have to set $\phi_{2b}:=2\pi-\varphi_{2b}(\pi-\xi)$, because $\varphi_{2b}$ takes values in $[0,2\pi]$. This way $1\leq k\leq N+1$, and the mode $\theta^\dagger_k$ corresponding to a solution $\xi_k$ decreases the energy by $\lambda_k=2\cos{\xi_k}$. Second,
\begin{equation*}
\phi_{2b}\left(\frac{\pi}{2}\right)=2\pi-\arccot\frac{1-y_ly_r}{y_l+y_r}=\pi+\frac{\pi(\alpha_l+\alpha_r)}{2} \,,
\end{equation*}
so for $0<\alpha_l,\alpha_r< 1$ (that is, $0<y_l,y_r$), $\pi<\phi_{2b}\left(\frac{\pi}{2}\right)<2\pi$. This means that $\lambda_{N/2+1}>0$ and $\lambda_{N/2+2}<0$, so the vacuum is now given by $\ket{{\rm vac}_0}=\prod_{k=1}^{N/2+1}\theta_k^\dagger\ket{\downarrow}^{\otimes N+2}$, that is, it belongs to the ($N/2+1$)-fermion sector $W_{N/2+1}$, and not to $W_{N/2}$ as before (again we assume that $N$ is even).

Taking all this into account we obtain\footnote{Note the $j\to j+1$ shift with respect to the definition \eqref{wpartfun}.}
\begin{equation}
\begin{aligned}
Z_{2b,j}(q,\alpha_l,\alpha_r) & :=  \lim_{M\to\infty}\lim_{N\to\infty}\tr_{W_{N/2+j+1}}^{<M} q^{\frac{N}{\pi v_F}(H_{2b}-Ne_b-E_s)}\\
&\, =\frac{2q^{-\frac{1-3(\alpha_{lr}-1)^2}{24}}}{P(q)}q^{\frac{j(j -(\alpha_{lr}-1))}{2}}\\
&\, =Z_j(q,\alpha_l+\alpha_r-1) \,,
\end{aligned}
\end{equation}
where we introduced $\alpha_{lr}:=\alpha_l+\alpha_r$. The full partition function is then given by
\begin{equation}
\label{twobqi}
Z_{2b}(q,\alpha_l,\alpha_r)=\frac{2q^{-\frac{1-3(\alpha_{lr}-1)^2}{24}}}{P(q)}\sum_{j\in\ZZ}q^{\frac{j(j -(\alpha_{lr}-1))}{2}}=Z(q,\alpha_l+\alpha_r-1)\,.
\end{equation}
Note that it only depends on the sum $\alpha_l+\alpha_r$. Using the fact that $Z(q,\alpha+2)=Z(q,\alpha)$ we can also rewrite this partition function in a more suggestive manner,
\begin{equation}
\label{sugg}
\begin{aligned}
Z_{2b}(q,\alpha_l,\alpha_r) & =\frac{q^{-\frac{c}{24}}}{P(q)}\sum_{j\in\ZZ} \left(q^{h_{\alpha_{lr}+1,\alpha_{lr}+1+2j}}+q^{h_{\alpha_{lr}-1,\alpha_{lr}-1+2j}}\right) \\
& =\frac{1}{2}\left(Z(q,\alpha_{lr}+1)+Z(q,\alpha_{lr}-1)\right) \,,
\end{aligned}
\end{equation}
making apparent the analogy with the $\Uqu$-fusion rule 
\begin{equation*}
\VV_{\alpha_l}\otimes\VV_{\alpha_r}\cong\VV_{\alpha_{lr}+1}\oplus\VV_{\alpha_{lr}-1}
\end{equation*}
valid for $\alpha_l, \alpha_r, \alpha_{lr}\notin\ZZ$ \cite{geermodtr0}.

These results are extended by analytic continuation to all values of $y_l,y_r$, that is, $0<\alpha_l,\alpha_r<2$. As before, even if $\mu_l$ and $\mu_r$ do not appear in the final expressions we had to assume that they were positive in our derivations: changing the sign of $\mu_{l/r}$ amounts to replacing $\alpha_{l/r}$ by $2-\alpha_{l/r}$. The special values $\alpha_{l/r}=0,1$ play the same role as for the one-boundary system (see the discussion at the end of Section~\ref{sclim}). Namely, taking $\alpha_{l/r}\to 0^+$, while rescaling $\mu_{l/r}=\frac{2}{\pi}\tilde{\mu}_{l/r}\alpha_{l/r}>0$, amounts to taking a boundary coupling of TL type (that is, proportional to the bulk coupling) on the left, right or both sides of the spin chain whereas setting $\alpha_{l/r}=1$ is equivalent to imposing the usual XX boundary conditions on the left, right or both boundaries. This last property follows from the same kind of arguments as in the one-boundary case. For example, taking $\alpha_r=1$ we obtain the block structure (compare with~\eqref{Hb-block})
\begin{equation*}
H_{2b}=
\begin{pmatrix}
H_b & *\\
0 & H_b-\mu_r
\end{pmatrix}
\end{equation*}
with respect to the decomposition $\Hbb=\Hb\otimes\ket{\downarrow}\oplus\Hb\otimes\ket{\uparrow}$ and similarly for $\alpha_l=1$. This is consistent with the fact that $Z_{2b}(q,\alpha,1)=Z_{2b}(q,1,\alpha)=Z(q,\alpha)$ and $Z_{2b}(q,\alpha,0)=Z_{2b}(q,0,\alpha)=Z^\mathrm{odd}(q,\alpha)$ (recall the one-boundary partition functions in~\eqref{partfun} and~\eqref{zodd}). Moreover, since the scaling limit of the XX spin chain on $N$ and $N+2$ sites must be the same, we naturally have $Z_{2b}(q,1,1)=Z_{2b}(q,0,0)$. Finally, to obtain the scaling limit for odd $N$, we can, as before, just formally replace the summation over integer $j$ in \eqref{twobqi} by a summation over half-integers.

These results are also consistent with the $(\eta,\xi)$ field-theory computation, where boundary conditions \eqref{bc} are now replaced by
\begin{equation}
\label{bc2b}
\begin{gathered}
\eta(x)=e^{-i\pi(1-\alpha_l)}\bar{\eta}(x)\,,\quad \xi(x)=e^{i\pi(1-\alpha_l)}\bar{\xi}(x)\quad\text{for}~x\in\RR_{<0}\,,\\
\eta(x)=e^{i\pi(1-\alpha_r)}\bar{\eta}(x)\,,\quad \xi(x)=e^{-i\pi(1-\alpha_r)}\bar{\xi}(x)\quad \text{for}~x\in\RR_{>0}\,.
\end{gathered}
\end{equation}
Note that these boundary conditions are not manifestly left/right symmetric for $\alpha_l=\alpha_r$. There is no contradiction here, as permuting the left and right boundaries also reverses the orientation of the $\mathsf{U}(1)$ current in the field theory and the sign of the surface term at the level of the spin chain. Therefore, when doing so, one should also conjugate all the phases in \eqref{bc2b}. Taking into account this subtlety, these boundary conditions are thus indeed left/right symmetric when $\alpha_l=\alpha_r$. More abstractly, one can see this symmetry as the field-theoretic realisation of the $\Uqu$ automorphism $(\KK,\KK^{-1},\EE,\FF,\HH)\to (\KK^{-1},\KK,\FF,\EE,-\HH)$. One also checks that taking $\alpha_{r,l}\to 1$ in \eqref{bc2b} we recover the usual boundary conditions of the $(\eta,\xi)$ field theory on the left, right or both boundaries, in agreement with the above analysis for the special values of $\alpha_{l/r}$.

\section{Lattice algebras underlying the one-boundary spin chains}
\label{symprop}

We now go back to the general $\qa$ case, and study in detail, for all $\qa$, the lattice algebra underlying the XXZ spin chain and the $\Uq$-invariant boundary conditions we have constructed in Section~\ref{construction} for arbitrary complex parameters. We will show that the Hilbert space $\Hb$ admits an action of the blob algebra which, for $\qa^\alpha\notin\pm\qa^\ZZ$, is faithful and Schur-Weyl dual to $\Uq$ (or $\Uqu$ if $\qa$ is a root of unity), and also establish the corresponding bimodule decomposition.

\subsection{The XXZ spin chain}
\label{XXZalg}

Recall the expression of the XXZ Hamiltonian $H_{\mathrm{XXZ}}$ in~\eqref{XXZ}. Let us introduce the Hamiltonian densities, or the nearest-neighbour coupling operators,
\begin{equation}\label{eq:ei-Pauli}
\begin{aligned}
e_i & =-\frac{1}{2}\left(\sigma^x_{i}\sigma^x_{i+1}+\sigma^y_{i}\sigma^y_{i+1}+\frac{\qa+\qa^{-1}}{2}(\sigma^z_{i}\sigma^z_{i+1}-1)\right)-\frac{\qa-\qa^{-1}}{4}(\sigma_{i+1}^z-\sigma_{i}^z)\\
& =\Id_{(\CC^2)^{\otimes (i-1)}}\otimes
\begin{pmatrix}
0 & 0 & 0 & 0\\
0 & \qa & -1 & 0\\
0 & -1 & \qam & 0\\
0 & 0 & 0 & 0
\end{pmatrix}
\otimes \Id_{(\CC^2)^{\otimes (N- i-1)}}
\end{aligned}
\end{equation}
for all $1\leq i\leq N-1$ to rewrite
\begin{equation*}
H_{\mathrm{XXZ}}=\frac{\qa+\qa^{-1}}{2}(N-1)-\sum_{i=1}^{N-1}e_i\,.
\end{equation*}
By direct computation, the $e_i$ satisfy the relations
\begin{equation*}
e_i^2=(\qa+\qam)e_i\,,\qquad e_ie_{i\pm1}e_i=e_i\,,\qquad [e_i,e_j]=0\quad\forall~|i-j|\geq 2 \,,
\end{equation*}
making them a representation of the Temperley-Lieb (TL) algebra $\TL$ with loop weight $\delta=\qa+\qam$  \cite{TLalg, PASQUIER1990523}. If we set
\begin{equation*}
\begin{tikzpicture}[scale=0.8]
\draw (0,1) node {$e_i=$};
\draw (1,0) [dnup=0];
\draw (3,0) [dnup=0];
\draw (2,1) node {...};
\draw (4,2) [upup=1];
\draw (4,0) [dndn=1];
\draw (7,1) node {...};
\draw (6,0) [dnup=0];
\draw (8,0) [dnup=0];
\draw (4,-0.5) node {$i$};
\draw (5,-0.5) node {$i+1$};
\end{tikzpicture}
\end{equation*}
these relations are neatly expressed by the graphical rules
\begin{equation*}
\begin{tikzpicture}[scale=0.5]
\draw (-1.5,2) node {$e_i^2=$};
\draw (1,0) [dnup=0];
\draw (0,1) node {...};
\draw (2,2) [upup=1];
\draw (2,0) [dndn=1];
\draw (5,1) node {...};
\draw (4,0) [dnup=0];
\draw (1,2) [dnup=0];
\draw (0,3) node {...};
\draw (2,4) [upup=1];
\draw (2,2) [dndn=1];
\draw (5,3) node {...};
\draw (4,2) [dnup=0];
\draw (6,2) node {=};
\draw (7,2) node {$\delta$};
\draw (9,1) [dnup=0];
\draw (8,2) node {...};
\draw (10,3) [upup=1];
\draw (10,1) [dndn=1];
\draw (13,2) node {...};
\draw (12,1) [dnup=0];
\end{tikzpicture}
\end{equation*}

\begin{equation*}
\begin{tikzpicture}[scale=0.5]
\draw (-2.5,3) node {$e_ie_{i+1}e_i=$};
\draw (0,1) node {...};
\draw (1,0) [dnup=0];
\draw (2,2) [upup=1];
\draw (2,0) [dndn=1];
\draw (4,0) [dnup=0];
\draw (5,0) [dnup=0];
\draw (6,1) node {...};
\draw (0,3) node {...};
\draw (1,2) [dnup=0];
\draw (2,2) [dnup=0];
\draw (3,4) [upup=1];
\draw (3,2) [dndn=1];
\draw (5,2) [dnup=0];
\draw (6,3) node {...};
\draw (0,5) node {...};
\draw (1,4) [dnup=0];
\draw (2,6) [upup=1];
\draw (2,4) [dndn=1];
\draw (4,4) [dnup=0];
\draw (5,4) [dnup=0];
\draw (6,5) node {...};

\draw (7,3) node {=};

\draw (8,3) node {...};
\draw (9,2) [dnup=0];
\draw (10,4) [upup=1];
\draw (10,2) [dndn=1];
\draw (12,2) [dnup=0];
\draw (13,2) [dnup=0];
\draw (14,3) node {...};
\end{tikzpicture}
\,.
\end{equation*}
In other words, the $e_i$ expression~\eqref{eq:ei-Pauli} in terms of Pauli matrices provides a representation of  the  Temperley-Lieb algebra $\TL$ (which is known to be faithful).
This way, $H_{\mathrm{XXZ}}$ can be interpreted as an abstract element of $\TL$ represented on the spin chain $\Hs$.

It is known that the actions of $\TL$ and $\Uq$ on $\Hs$ are mutual maximal centralisers \cite{Jimbo1986AQO, goodman}, even at roots of unity \cite{martin1992commutants}. If moreover $\qa$ is generic (not a root of unity), $\TL$ and $\Uq$ are both semi-simple and we can use Schur-Weyl duality to decompose the Hilbert space as a $(\TL,\Uq)$-bimodule
\begin{equation*}
\Hs=\bigoplus_{j=0}^{N/2}\TT_j\otimes\CC^{2j+1} \,,
\end{equation*}
where $\CC^{2j+1}$ are spin-$j$ representations of $\Uq$, whereas $\TT_j$ are the standard $\TL$-modules with $2j$ through lines (if $N$ is odd, $j$ is a half-integer). The standard modules $\TT_j$ are irreducible and of dimension $\binom{N}{N/2+j} - \binom{N}{N/2+j+1}$ (see more details on their definition in~\cite{alma9912017293902959}). As $H_{\mathrm{XXZ}}$ commutes with $\Uq$, seen as an element of $\TL$, its action can be restricted to $\TT_j$ while the $\CC^{2j+1}$ components are just multiplicity spaces. Our goal is to generalise this picture to one-boundary and two-boundary Hamiltonians $H_b$ and $H_{2b}$ introduced in~\eqref{eq:Hb} and~\eqref{eq:H2b}, respectively.

\subsection{The one-boundary spin chain and the blob algebra}
\label{blobsec}

During the construction of the one-boundary Hamiltonian $H_b$ in Section~\ref{sec:Hb-gen-q}, we defined an additional $\Uq$-invariant generator $b:=b_+$ in~\eqref{blobqgen}, and so the Temperley-Lieb algebra formalism is insufficient to deal with it.

Let us recall the blob algebra introduced in~\cite{Martin:1993jka} and denoted by $\Blob$. The algebra depends on  two complex parameters, $\delta$ and $y$, denoting respectively the loop weight and the blob weight. It is defined by adding a generator $b$ -- called the blob -- to $\TL$ with the additional relations
\begin{equation}
\label{eq:blob-rel}
b^2=b\,,\qquad e_1 b e_1 = y e_1\,, \qquad [b,e_i]=0 \quad \text{for} \quad 2\leq i\leq N-1.
\end{equation}
This is a finite-dimensional algebra, and its dimension  does not depend on $\qa$ or~$y$.
Graphically, $b$ is represented by
\begin{equation*}
\begin{tikzpicture}[scale=0.8]
\draw (0,1) node {$b=$};
\draw (1,0) [dnup=0];
\draw (1,1) node {$\bbullet$};
\draw (2,0) [dnup=0];
\draw (3,1) node {...};
\draw (4,0) [dnup=0];
\draw (5,0) [dnup=0];
\end{tikzpicture}
\end{equation*}
and the rules~\eqref{eq:blob-rel} mean that
\begin{equation*}
\begin{tikzpicture}[scale=0.8]
\draw (0,0) [dnup=0];
\draw (0,0.75) node {$\bbullet$};
\draw (0,1.25) node {$\bbullet$};
\draw (1,1) node {$=$};
\draw (2,0) [dnup=0];
\draw (2,1) node {$\bbullet$};

\draw (5,-1) [dndn=1];
\draw (5,1) [upup=1];
\draw (5,1) [dndn=1];
\draw (5,3) [upup=1];
\draw (5,1) node {$\bbullet$};
\draw (7,1) node {$=~y$};
\draw (8,2) [upup=1];
\draw (8,0) [dndn=1];

\end{tikzpicture}
\end{equation*}
One also introduces the anti-blob $b_-=1-b$, represented by
\begin{equation*}
\begin{tikzpicture}[scale=0.8]
\draw (0,1) node {$b_-=$};
\draw (1,0) [dnup=0];
\draw (1,1) node {$\ccirc$};
\draw (2,0) [dnup=0];
\draw (3,1) node {...};
\draw (4,0) [dnup=0];
\draw (5,0) [dnup=0];
\end{tikzpicture}
\end{equation*}
which satisfies relations \eqref{eq:blob-rel} but  with the blob weight $y$ replaced by $\delta-y$. Whenever it is convenient we will denote by $b_\pm$ simultaneously the blob and the anti-blob, and by $y_\pm$ their respective weights.

One can verify by direct computation that the projectors $b_\pm$ that we defined in \eqref{blobqgen} and \eqref{blobqroot} satisfy the relations~\eqref{eq:blob-rel} with 
\begin{equation}
\label{eq:y-alpha}
y_\pm:=\frac{[\alpha\pm 1]_\qa}{[\alpha]_\qa}=\frac{\qa^{\alpha\pm 1}-\qa^{-\alpha\mp 1}}{\qa^{\alpha}-\qa^{-\alpha}}\,,
\end{equation}
so $H_b= -\mu b + H_{\mathrm{XXZ}}$ can be interpreted as an abstract element of $\Blob$ represented on the spin chain $\Hb=\VV_\alpha\otimes (\CC^2)^{\otimes N}$. We will call $y:=y_+$ generic if it is as in~\eqref{eq:y-alpha} for $\qa^{\alpha}\notin\pm\qa^\ZZ$.

A complete classification of the irreducible $\Blob$-modules for generic $y$ was performed in \cite{Martin:1993jka} (see also \cite{dubail:tel-00555624}). The result is that for all $\qa$, including roots of unity cases, and generic $y$, $\Blob$ is semi-simple and has $N+1$ non-equivalent irreducible representations known as standard modules. They are constructed using a special diagrammatical basis encoded graphically by through lines and nested arcs -- similarly to the TL case -- but now also decorated by blobs and anti-blobs. Modules whose basis diagrams have $1\leq 2j\leq N$ through lines and such that the leftmost through line carries a blob (resp. anti-blob) are called standard blob (resp. anti-blob) modules and will be denoted $\WW_j$ (resp. $\WW_{-j}$). The so-called vacuum standard module with no through lines does not depend on the blob/anti-blob configuration and will be denoted $\WW_0$. The $N+1$ standard modules $\WW_j$ are thus labelled by $-N/2\leq j\leq N/2$, with $j$ integer if $N$ is even and $j$ half-integer if $N$ is odd (see more details in \cite{Martin:1993jka}). Below, we will also use the notation  $\WW^N_j$ to emphasise the number $N$ of sites used. 

Let us first assume that $\qa$ is generic (not a root of unity). Our main algebraic claim here is the following.

\begin{prop}
\label{blobrepqgen}
Let $\qa\in\CC\backslash e^{i\pi\QQ}$, $\qa^\alpha\in\CC\backslash\{\pm1\}$, and $N\in\mathbb{N}^*$. Then $\Hb:=\VV_{\alpha}\otimes (\CC^2)^{\otimes N}$, where $\VV_\alpha$ is the Verma module defined in~\eqref{alpharepgen}, carries a representation of $\Blob$ with
\begin{equation}
\label{blobweight}
\delta=[2]_\qa\,,\qquad y=\frac{[\alpha+1]_\qa}{[\alpha]_\qa} \,,
\end{equation}
and
\begin{equation}
\label{blobrepex}
\begin{gathered}
b=\frac{1}{\{\alpha\}}
\begin{pmatrix}
 -\qam \KK^{-1}+\qa^{\alpha} & \{1\}\FF\\
 \qa\{1\}\KK^{-1}\EE & \qa\KK^{-1}-\qa^{-\alpha}
\end{pmatrix}\,, \\
e_i =-\frac{1}{2}\left(\sigma^x_{i}\sigma^x_{i+1}+\sigma^y_{i}\sigma^y_{i+1}+\frac{\qa+\qa^{-1}}{2}(\sigma^z_{i}\sigma^z_{i+1}-1)\right)-\frac{\qa-\qa^{-1}}{4}(\sigma_{i+1}^z-\sigma_{i}^z)\,.
\end{gathered}
\end{equation}
Moreover, if $\qa^\alpha\notin\pm\qa^\ZZ$, this representation is faithful, $\Uq$ and $\Blob$ are mutual maximal centralisers and we have the decomposition of $(\Blob,\Uq)$-bimodules
\begin{equation}
\label{bidecompb}
\Hb=\bigoplus_{j=-N/2}^{N/2} \WW_j\otimes\VV_{\alpha+2j}\,.
\end{equation}
\end{prop}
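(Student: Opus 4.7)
My plan is in three stages. The first stage verifies the defining relations of $\Blob$ for the operators given in~\eqref{blobrepex}. The TL relations for the $e_i$'s are recalled in Section~\ref{XXZalg}. The relation $[b,e_i]=0$ for $i\geq 2$ is immediate, since $b$ acts only on sites $0$ and $1$ while $e_i$ with $i\geq 2$ acts on sites $i,i+1$, which are disjoint. The idempotency $b^2=b$ is built in, as $b$ is by construction the $\Uq$-projector onto the summand $\VV_{\alpha+1}\subset\VV_\alpha\otimes\CC^2$. Only the relation $e_1 b e_1 = y e_1$ requires work; this can be done by direct matrix multiplication using~\eqref{blobqgen} and~\eqref{eq:ei-Pauli}, or more conceptually as follows. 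On $\VV_\alpha\otimes\CC^2\otimes\CC^2$, the operator $(\qa+\qam)^{-1}e_1$ is the $\Uq$-projector onto the singlet summand of sites $1,2$, whose image is $\Uq$-isomorphic to a single copy of $\VV_\alpha$. The composition $e_1 b e_1$ therefore acts on this image by a scalar, and this scalar evaluates to $y=[\alpha+1]_\qa/[\alpha]_\qa$ by applying the Casimir identity~\eqref{casval} to determine the matrix element of $b$ on the distinguished singlet vector.

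For the remaining claims, assume $\qa^\alpha\notin\pm\qa^{\ZZ}$. Then the fusion rule $\VV_\beta\otimes\CC^2\cong\VV_{\beta+1}\oplus\VV_{\beta-1}$ applies at every step since none of the shifted weights $\qa^{\alpha+k}$, $k\in\ZZ$, equals $\pm 1$, and each $\VV_{\alpha+2j}$ that appears remains irreducible (otherwise $\qa^{\alpha+2j}=\pm\qa^n$ for some $n\geq 1$ would contradict the assumption). Iteration produces the $\Uq$-decomposition
\begin{equation*}
\Hb \;\cong\; \bigoplus_{j=-N/2}^{N/2}\binom{N}{N/2+j}\,\VV_{\alpha+2j},
\end{equation*}
so $\Uq$ acts semisimply on $\Hb$ and its commutant decomposes, block by block on each $\HH$-weight space, into matrix algebras of sizes $m_j=\binom{N}{N/2+j}$. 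Under the same assumption, $y$ in~\eqref{blobweight} is generic in the sense of~\cite{Martin:1993jka}, so $\Blob$ is semisimple with standard modules $\WW_j$ irreducible and of dimensions $\dim\WW_j=\binom{N}{N/2-j}=m_j$.

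The decisive step is to identify the multiplicity spaces $\Hom_{\Uq}(\VV_{\alpha+2j},\Hb)$, which automatically carry a $\Blob$-action by commutation, with the standard modules $\WW_j$ as $\Blob$-modules. The dimension count above already forces an abstract vector-space isomorphism; what remains is the module structure. My plan is to construct an explicit basis of $\Uq$-highest-weight vectors of weight $\qa^{\alpha+2j-1}$ in $\Hb$ labelled by the diagrammatic basis of $\WW_j$ -- arc diagrams with $2|j|$ through lines, the leftmost decorated by a blob for $j>0$ or an anti-blob for $j<0$ -- in direct analogy with the classical TL embedding $\TT_j\hookrightarrow(\CC^2)^{\otimes N}$, but with the additional factor $\VV_\alpha$ absorbing the blob decoration via the projector $b_\pm$ on $\VV_\alpha\otimes\CC^2$. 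The verification that $b$ and the $e_i$ act on these vectors exactly as in $\WW_j$ I would carry out by induction on $N$, using the blob-algebra branching $\WW_j^N\hookrightarrow\WW_{j-1/2}^{N-1}\oplus\WW_{j+1/2}^{N-1}$ that mirrors the $\Uq$ fusion rule. Once this is in place, the bimodule decomposition~\eqref{bidecompb} follows immediately; faithfulness is then automatic since every $\WW_j$ appears with positive multiplicity $m_j$, and the mutual-centraliser property follows from the double-centraliser theorem in the semisimple setting. The main obstacle I anticipate is the explicit construction of the highest-weight vectors carrying the correct blob decoration, and verifying that $b_\pm$ acts on the leftmost strand of each arc diagram exactly as the abstract blob/anti-blob generator; the inductive branching argument should reduce this to a manageable check on the leftmost pair of tensor factors.
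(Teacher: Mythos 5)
Your overall strategy---verify the blob relations directly, decompose $\Hb$ over $\Uq$ via the fusion rule, and then identify the multiplicity spaces of the $\VV_{\alpha+2j}$ with the standard modules $\WW_j$ by an induction on $N$ that mirrors the branching rule of the blob algebra---is the same as the paper's, and your first two stages are sound (including the correct observation that the dimension count alone cannot pin down the $\Blob$-module structure of the multiplicity space). The difference lies in how the decisive identification is certified. You propose to build a basis of highest-weight vectors labelled by blob-decorated arc diagrams and to check that $b$ and the $e_i$ act on it \emph{exactly} as on the diagram basis of $\WW_j$; this is the hardest part of the whole argument and is exactly what you defer to a ``manageable check''---in practice the natural highest-weight vectors in the spin chain do not reproduce the diagrammatic structure constants on the nose without delicate normalisation, and you would additionally have to prove linear independence and spanning of the constructed vectors. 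The paper sidesteps all of this by using a sharper consequence of the Martin--Saleur construction: for generic $y$, $\WW_j^N$ is the \emph{unique} irreducible $\Blob$-module whose restriction to $\Blobb$ is $\WW_{j-1/2}^{N-1}\oplus\WW_{j+1/2}^{N-1}$. It then suffices to (i) exhibit two explicit $\Blobb$-equivariant embeddings $\rho_j^{\pm}\colon\tilde{\VV}^{N-1}_{j\mp 1/2}\hookrightarrow\tilde{\VV}^N_j$ of the highest-weight spaces, namely
\begin{equation*}
\rho_j^+(v)=\ket{v}\otimes\ket{\uparrow}\,,\qquad \rho_j^-(v)=[\alpha+2j]_\qa\,\ket{v}\otimes\ket{\downarrow}-\qam\left(\FF\ket{v}\right)\otimes\ket{\uparrow}\,,
\end{equation*}
whose images span $\tilde{\VV}^N_j$ by dimension count, and (ii) show that $\tilde{\VV}^N_j$ is irreducible, which by semisimplicity of $\Blob$ reduces to exhibiting a single vector of $\Im(\rho_j^+)$ that $e_{N-1}$ maps outside $\Im(\rho_j^+)$---a one-line computation. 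I would recommend replacing your diagram-matching step by this restriction-plus-irreducibility criterion: it buys you the isomorphism $\tilde{\VV}^N_j\cong\WW_j^N$ without ever touching the diagram basis, after which faithfulness and the mutual-centraliser property follow exactly as you state.
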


\begin{proof}
As was already mentioned above, using the explicit expressions of $b$ and the $e_i$ one can check by direct computation  that they indeed satisfy the defining relations of the blob algebra \eqref{eq:blob-rel} with parameters as in \eqref{blobweight}. By construction, they also commute with the action of $\Uq$. It remains to show that if $\qa^\alpha\notin\pm\qa^\ZZ$ this representation is faithful, that it is indeed the full centraliser of $\Uq$, and that we have the decomposition \eqref{bidecompb}.

Let us start with \eqref{bidecompb}. In \cite{Martin:1993jka}, it is explained how to inductively construct the irreducible $\Blob$-modules $\WW^N_j$ from the $\Blobb$-modules $\WW^{N-1}_{j-\frac{1}{2}}$ and $\WW^{N-1}_{j+\frac{1}{2}}$. An immediate consequence is that $\dim\WW_j={N\choose N/2+j}$ and 
\begin{equation*}
\Ress_{\Blobb} \WW^N_j=\WW^{N-1}_{j-\frac{1}{2}}\oplus\WW^{N-1}_{j+\frac{1}{2}}
\end{equation*}
with respect to the natural embedding $\Blobb\hookrightarrow\Blob$. Reciprocally, any irreducible $\Blob$-module $\WW$ such that 
\begin{equation}
\Ress_{\Blobb} \WW=\WW^{N-1}_{j-\frac{1}{2}}\oplus\WW^{N-1}_{j+\frac{1}{2}}\,,\qquad -N/2\leq j\leq N/2\,,
\end{equation}
is isomorphic to $\WW_j^N$ (by convention $\WW^{N-1}_{\pm (N+1)/2}={0}$).

Let us now use this characterisation to show \eqref{bidecompb}. From the fusion rule $\VV_{\alpha}\otimes \CC^2\cong\VV_{\alpha+1}\oplus\VV_{\alpha-1}$ we already know that the $\Uq$-decomposition of $\Hb$ is given by
\begin{equation*}
\Hb=\bigoplus_{j=-N/2}^{N/2} {N\choose N/2+j} \VV_{\alpha+2j} \,,
\end{equation*}
where $j$ is integer if $N$ is even and half-integer if $N$ is odd. We will now show that, as a $(\Uq,\Blob)$-bimodule,
\begin{equation}
\label{blobdecomp}
{N\choose N/2+j} \VV_{\alpha+2j}=\VV_{\alpha+2j}\otimes\WW_j^N\,.
\end{equation}

Denote by $\tilde{\VV}^N_j\subset {N\choose N/2+j} \VV_{\alpha+2j}$ the ${N\choose N/2+j}$-dimensional space of $\Uq$ highest-weight vectors of weight $\alpha+2j-1$. Since $\Blob$ commutes with $\Uq$, $\tilde{\VV}^N_j$ is stable by the action of $\Blob$ and is thus a $\Blob$-module. Let us show by induction on $N$ that $\tilde{\VV}^N_j\cong\WW_j^N$.

For $N=1$ this is obvious, as $b_\pm$ are the projectors on $\VV_{\alpha\pm 1}$. Now assume that the proposition holds for $N-1$. For all $-N/2+1\leq j\leq N/2$, if $\ket{v}\in\tilde{\VV}^{N-1}_{j-\frac{1}{2}}$ then 
\begin{equation}
\label{phip}
\rho_j^+(v):=\ket{v}\otimes\ket{\uparrow}
\end{equation}
belongs to $\tilde{\VV}^{N}_{j}$ and for all $-N/2\leq j\leq N/2-1$ if $\ket{v}\in\tilde{\VV}^{N-1}_{j+\frac{1}{2}}$ then
\begin{equation}
\label{phim}
\rho_j^-(v):=[\alpha+2j]_\qa\ket{v}\otimes\ket{\downarrow}-\qam\left(\FF_{\VV_\alpha\otimes(\CC^2)^{\otimes(N-1)}}\ket{v}\right)\otimes\ket{\uparrow}
\end{equation}
is also an element of $\tilde{\VV}^{N}_{j}$. Indeed, using the coproduct \eqref{coproduct}, we have
\begin{equation*}
\EE_{\Hb}=\Id_{\VV_\alpha\otimes(\CC^2)^{\otimes(N-1)}}\otimes \sigma^++\EE_{\VV_\alpha\otimes(\CC^2)^{\otimes(N-1)}}\otimes\qa^{\sigma^z}
\end{equation*}
and so $\ket{v}\in\tilde{\VV}^{N-1}_{j-\frac{1}{2}}$ implies
\begin{equation*}
\EE_{\Hb}\rho_j^+(v)=\ket{v}\otimes\sigma^+\ket{\uparrow}+\EE_{\VV_\alpha\otimes(\CC^2)^{\otimes(N-1)}}\ket{v}\otimes\qa^{\sigma^z}\ket{\uparrow}=0
\end{equation*}
and $\ket{v}\in\tilde{\VV}^{N-1}_{j+\frac{1}{2}}$ implies
\begin{equation*}
\begin{aligned}
\EE_{\Hb}\rho_j^-(v) & =[\alpha+2j]_\qa\ket{v}\otimes\ket{\uparrow}-(\EE\FF)_{\VV_\alpha\otimes(\CC^2)^{\otimes(N-1)}}\ket{v}\otimes\ket{\uparrow}\\
& =[\alpha+2j]_\qa\ket{v}\otimes\ket{\uparrow}-\left(\FF\EE+\frac{\KK-\KKm}{\qa-\qam}\right)_{\VV_\alpha\otimes(\CC^2)^{\otimes(N-1)}}\ket{v}\otimes\ket{\uparrow}\\
& = [\alpha+2j]_\qa\ket{v}\otimes\ket{\uparrow}-[\alpha+2j]_\qa\ket{v}\otimes\ket{\uparrow}=0\,.
\end{aligned}
\end{equation*}

Thus, we have defined two injections $\rho_j^\pm :\tilde{\VV}^{N-1}_{j\mp\frac{1}{2}}\hookrightarrow\tilde{\VV}^N_j$ for $-N/2+1\leq j\leq N/2-1$, as well as two isomorphisms $\rho^\pm_{\pm N/2} : \tilde{\VV}^{N-1}_{\pm(N-1)/2}\xrightarrow{\sim}\tilde{\VV}^N_{\pm N/2}$. Moreover, by construction, these morphisms all commute with the action of the subalgebra $\Blobb\subset \Blob$. Indeed, $\Blobb$ commutes with the action of $\Uq$ on the first $N-1$ sites, so one can permute the elements of $\Blobb$ acting on vectors $v$ with $\FF$ in formula~\eqref{phim} (for morphisms $\rho_j^+$ the commutation property is obvious). We also notice that the two set of vectors in~\eqref{phip} and~\eqref{phim} are all linearly independent so by comparing dimensions their linear span is $\tilde{\VV}_j^N$. We thus have a decomposition of $\Blobb$-modules:
\begin{equation}
\label{resdecompv}
\tilde{\VV}^{N}_{j}=\Im(\rho_j^+) \oplus \Im(\rho_j^-)
\end{equation}
(by convention $\rho^\pm_{\mp N/2}=0$). By the induction hypothesis this means that 
\begin{equation}
\label{resdecomp}
\Ress_{\Blobb} \tilde{\VV}^{N}_{j}=\WW^{N-1}_{j-\frac{1}{2}}\oplus\WW^{N-1}_{j+\frac{1}{2}} \,,
\end{equation}
so to show that $\tilde{\VV}^{N}_{j}\cong \WW_j^N$ it only remains to prove that $\tilde{\VV}^{N}_{j}$ is irreducible.

For $j=\pm N/2$ this is obvious as $\tilde{\VV}^{N}_{\pm N/2}$ is one-dimensional. For $-N/2+1\leq j\leq N/2-1$ we have to show that $\tilde{\VV}^{N}_{j}$ has no non-trivial stable subspace. Because of \eqref{resdecompv} and \eqref{resdecomp}, if such a subspace exists, it can only be $\Im(\rho_j^+)$ or $\Im(\rho_j^-)$. Moreover, since $\Blob$ is semi-simple for $\qa^\alpha\notin\pm\qa^\ZZ$ it suffices to prove that the generator $e_{N-1}$ does not stabilise $\Im(\rho_j^+)$ (as it will automatically imply that $\Im(\rho_j^-)$ is not stable either).

To see this, note that all the vectors of $\Im(\rho_j^+)$ are of the form $\ket{v}\otimes\ket{\uparrow}$ and that by \eqref{phim} $\tilde{\VV}^{N-1}_{j-\frac{1}{2}}$ contains a vector $\ket{v}$ of the form
\begin{equation*}
\ket{v}=\ket{w}\otimes\ket{\downarrow}+\ket{w'}\otimes\ket{\uparrow}
\end{equation*}
as long as $-N/2+1\leq j\leq N/2-1$ and $\qa^\alpha\notin\pm\qa^\ZZ$. If $\ket{v}\otimes\ket{\uparrow}\in\Im(\rho_j^+)$ then
\begin{equation*}
e_{N-1}\ket{v}\otimes\ket{\uparrow}=\ket{w}\otimes e_{N-1}\ket{\downarrow\uparrow}+\ket{w'}\otimes e_{N-1}\ket{\uparrow\uparrow}=\qa\ket{w}\otimes\ket{\uparrow\downarrow}\notin\Im(\rho_j^+)\,.
\end{equation*}
Therefore, $\Im(\rho_j^+)$ is not stable by the action of $e_{N-1}$ and so $\tilde{\VV}^{N}_{j}$ is indeed irreducible for all $-N/2\leq j\leq N/2$. We have thus proven that $\tilde{\VV}^{N}_{j}\cong \WW_j^N$.

Finally, since all the vectors of $\tilde{\VV}^{N}_{j}$ are highest weight, $\FF^k\tilde{\VV}^{N}_{j}\cong\WW_j$ for all $k\geq 0$, so we have \eqref{blobdecomp}.
This enables us to write the $(\Blob,\Uq)$-bimodule decomposition of $\Hb$ \eqref{bidecompb}, which is equivalent to the statement that $\Blob$ and $\Uq$ are mutual maximal centralisers on $\Hb$, as it is multiplicity-free. Since all the irreducible $\Blob$-modules appear in this decomposition, this also implies that the representation of $\Blob$ on $\Hb$ is faithful.
\end{proof}

Note that the statement and the proof above remain valid also if $\qa^\alpha=\pm\qa^{n}$ for some $n\in\ZZ^*$ as long as $N\leq |n|$ because the algebra stays semisimple in this case \cite{Martin:1993jka, MARTIN2000957}.

It is worth mentioning that an analogue of Proposition \ref{blobrepqgen} for $\alpha\in\NN$ was recently proven in  \cite{iohara2019schurweyl} using abstract categorical and diagrammatic methods, while the case of $\alpha$ being a formal parameter was further treated in \cite{blobgen}. To our knowledge, Proposition \ref{blobrepqgen} is the first result of this kind that treats $\qa$ and $\alpha$ as actual complex numbers in some explicit allowed domains.

For $\qa=e^{\frac{i\pi}{p}}$ being a $2p$-th root of unity, a similar statement is true.

\begin{prop}
\label{blobreproot}
Let $\qa=e^{\frac{i\pi}{p}}$ with $p\in\NN\backslash\{0,1\}$, $\alpha\in\CC\backslash p\ZZ$ and $N\in\mathbb{N}^*$. Then $\Hb=\VV_{\alpha}\otimes (\CC^2)^{\otimes N}$, where $\VV_\alpha$ is the $p$-dimensional module defined in~\eqref{alpharep}, carries a representation of $\Blob$ with
\begin{equation}
\delta=[2]_\qa\,,\qquad y_\pm=\frac{[\alpha+1]_\qa}{[\alpha]_\qa} \,,
\end{equation}
and
\begin{equation}
\begin{gathered}
b=-\frac{1}{\{\alpha\}}
\begin{pmatrix}
-\qa^{-1}\KK^{-1}-\qa^{\alpha} & \{1\}\FF\\
\qa\{1\}\KK^{-1}\EE & \qa\KK^{-1}+\qa^{-\alpha}
\end{pmatrix}\,, \\
e_i =-\frac{1}{2}\left(\sigma^x_{i}\sigma^x_{i+1}+\sigma^y_{i}\sigma^y_{i+1}+\frac{\qa+\qa^{-1}}{2}(\sigma^z_{i}\sigma^z_{i+1}-1)\right)-\frac{\qa-\qa^{-1}}{4}(\sigma_{i+1}^z-\sigma_{i}^z)\,.
\end{gathered}
\end{equation}
Moreover, if $\alpha\in\CC\backslash\ZZ$, this representation is faithful, $\Uqu$ and $\Blob$ are mutual maximal centralisers and we have the decomposition of $(\Blob,\Uqu)$-bimodules
\begin{equation}
\label{bimodqroot}
\Hb=\bigoplus_{j=-N/2}^{N/2}\WW_j\otimes\VV_{\alpha+2j}\,.
\end{equation}
\end{prop}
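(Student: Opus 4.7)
The plan is to imitate the structure of the proof of Proposition~\ref{blobrepqgen}, adapting the three main steps (verification of relations, inductive construction of highest-weight spaces, bimodule decomposition) to the root-of-unity situation. First I would verify by direct computation on $\VV_\alpha \otimes \CC^2$, with $\VV_\alpha$ the $p$-dimensional module of~\eqref{alpharep}, that $b$ as given in~\eqref{blobqroot} together with the spin-chain generators $e_i$ satisfy the defining blob relations~\eqref{eq:blob-rel} with loop weight $[2]_\qa$ and blob weight $y=[\alpha+1]_\qa/[\alpha]_\qa$. This is essentially the same check as for generic $\qa$: as remarked right after~\eqref{blobqroot}, the formulas differ from~\eqref{blobqgen} only by the substitution $\alpha\to\alpha+p$. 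The $\Uqu$-invariance holds by construction, since $b=b_+$ is the projector onto the $\VV_{\alpha+1}$-summand of the fusion $\VV_\alpha\otimes\CC^2\cong\VV_{\alpha+1}\oplus\VV_{\alpha-1}$, which is valid under the running assumption $\alpha\notin\ZZ$.

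For the faithfulness, the centraliser claim, and the bimodule decomposition~\eqref{bimodqroot}, I would repeat the induction on $N$ from Proposition~\ref{blobrepqgen}. Define $\tilde\VV_j^N\subseteq\binom{N}{N/2+j}\VV_{\alpha+2j}$ to be the subspace of $\Uqu$-highest-weight vectors of $\HH$-eigenvalue $\alpha+2j+p-1$ inside the isotypic component; since $\Blob$ and $\Uqu$ commute, $\tilde\VV_j^N$ is a $\Blob$-submodule of dimension $\binom{N}{N/2+j}$. The very same formulas~\eqref{phip}–\eqref{phim} then define $\Blobb$-equivariant injections $\rho_j^\pm:\tilde\VV_{j\mp 1/2}^{N-1}\hookrightarrow\tilde\VV_j^N$ whose images are linearly independent and sum to $\tilde\VV_j^N$, yielding the restriction decomposition $\Ress_{\Blobb}\tilde\VV_j^N\cong\WW_{j-1/2}^{N-1}\oplus\WW_{j+1/2}^{N-1}$ by the induction hypothesis. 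The irreducibility argument from the generic case, showing that $e_{N-1}$ does not stabilise $\Im(\rho_j^+)$ whenever $-N/2<j<N/2$, transfers verbatim. The key input for semisimplicity is that $y=[\alpha+1]_\qa/[\alpha]_\qa$ is generic; here $\alpha\notin\ZZ$ combined with $\qa=e^{i\pi/p}$ gives $\qa^\alpha\notin\pm\qa^\ZZ$, so $y$ is indeed generic in the blob-algebra sense and $\Blob$ is semisimple, which allows us to conclude $\tilde\VV_j^N\cong\WW_j^N$.

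The decomposition~\eqref{bimodqroot} then follows: acting by $\FF^k$ for $0\leq k\leq p-1$ (beyond which $\FF^p=0$ truncates the module) transports each highest-weight line of $\tilde\VV_j^N$ through the $p$ weight spaces of $\VV_{\alpha+2j}$, yielding $\binom{N}{N/2+j}\VV_{\alpha+2j}\cong\WW_j^N\otimes\VV_{\alpha+2j}$ as $(\Blob,\Uqu)$-bimodules, and summing over $j$ recovers~\eqref{bimodqroot}. Dimensions match since $p\cdot 2^N=\sum_j\binom{N}{N/2+j}\cdot p$. Because~\eqref{bimodqroot} is multiplicity-free and involves every standard $\Blob$-module $\WW_j^N$ as well as every irreducible $\Uqu$-module $\VV_{\alpha+2j}$, the representations of $\Blob$ and $\Uqu$ on $\Hb$ are faithful and mutual maximal centralisers.

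The main obstacle I anticipate is bookkeeping rather than conceptual. The $p$-dimensional modules $\VV_\beta$ at roots of unity have highest $\HH$-weight $\beta+p-1$ rather than $\beta-1$, and the correct coproduct calculation verifying that the vectors~\eqref{phim} are highest-weight must be redone with the root-of-unity expression for $\EE\FF$ on $\VV_\beta$ (the analogue of~\eqref{fexp}); this relies on $\alpha+2j\notin p\ZZ$ throughout the induction, which is ensured by $\alpha\notin\ZZ$. Equally, one must exclude $\alpha\in\ZZ\setminus p\ZZ$ (where $\VV_\alpha$ is indecomposable but reducible) from the faithfulness conclusion, which is exactly the hypothesis of the proposition. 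The only other subtlety is the initial step $N=1$, where $b_\pm$ project onto the two summands in $\VV_\alpha\otimes\CC^2$ and the required $\WW_{\pm 1/2}^1$ identification is immediate.
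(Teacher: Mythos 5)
Your proposal is correct and follows essentially the same route as the paper's own proof, which simply transports the generic-$\qa$ argument (direct verification of the blob relations, the inductive construction via $\rho_j^\pm$, semisimplicity from $\alpha\notin\ZZ$, and the $\FF^k$-translates for $0\leq k\leq p-1$) to the root-of-unity setting with the $\alpha\to\alpha+p$ weight-convention shift. The bookkeeping points you flag are exactly the ones the paper identifies as the only needed modifications.
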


\begin{proof}
The proof for $\qa$ generic will also work for the root-of-unity case with almost no modifications. We already know from direct computation that the operators $b$ and $e_i$ indeed define a representation of $\Blob$, and that by construction they commute with the action of $\Uqu$. As for the bimodule decomposition \eqref{bidecompb}, we only used the fact that the spaces of highest-weight vectors $\tilde{\VV}^{N}_{j}$ are of dimension ${N\choose N/2+j}$ coming from the fusion rule $\VV_{\alpha}\otimes \CC^2\cong\VV_{\alpha+1}\oplus\VV_{\alpha-1}$ which also holds for $\Uqu$; the injections $\rho_j^\pm$, whose construction relies on the coproduct formula \eqref{coproduct} and the commutation relation between $\EE$ and $\FF$, which is the same for $\Uq$ and $\Uqu$; and finally the semi-simplicity of $\Blob$ for $\qa^\alpha\notin\pm\qa^\ZZ$, which is preserved at roots of unity (the semi-simplicity condition then simply becomes $\alpha\notin\ZZ$). The only change is that we will now have only $p$ copies of $\WW_j$ in each sector given by $\FF^k\tilde{\VV}^{N}_{j}$ for $0\leq k\leq p-1$. Also, one should note that the weight conventions of the $\VV_\alpha$ representations in \eqref{alpharepgen} and \eqref{alpharep} are slightly different, so one has to apply the shift $\alpha\to\alpha+p$ to all expressions.
\end{proof}

\paragraph{Remark.}

At roots of unity, it is also possible to use yet another version of $\Uq$, where $\EE^p$ and $\FF^p$ are central elements, instead of just equal to $0$ as in $\Uqu$. This choice gives rise to $p$-dimensional so-called cyclic representations $\BBB(x_\pm,z,\ell)$, first considered in \cite{Skl83}, now depending on three continuous parameters -- the eigenvalues of the three central elements $\EE^p$, $\FF^p$ and $\KK^{p}$ -- and a discrete parameter $\ell\in\ZZ/p\ZZ$ (for more details see~\cite{Arnaudon:1992ig} and~\cite[Ch.\,9.2, 11.1]{qgroups}). All these representations can also be used to construct a representation of the blob algebra. However, it turns out that the corresponding blob weight $y$ will only depend on the value of the Casimir of $\BBB(x_\pm,z,\ell)$ which is still given by \eqref{casimir}. More precisely, if we define $\check{\alpha}\in\CC/2\ZZ$ as
\begin{equation*}
\qa^{p(\check{\alpha}+p)}+\qa^{p(\check{\alpha}+p)}:=\{1\}^{2p}x_+x_--(z+z^{-1}) \,,
\end{equation*}
then
\begin{equation*}
\CCC_{\BBB(x_\pm,z,\ell)}=-\qa^{\check{\alpha}+2\ell}-\qa^{-\check{\alpha}-2\ell} \,,
\end{equation*}
and the blob weight is
\begin{equation*}
y=\frac{[\check{\alpha}+2\ell+1]_\qa}{[\check{\alpha}+2\ell]_\qa}\,.
\end{equation*}
Specialising at $\VV_\alpha$, that is, taking $x_+=x_-=0$ and $\alpha=\check{\alpha}+2\ell$, we recover $y=\frac{[\alpha+1]_\qa}{[\alpha]_\qa}$, so without loss of generality we can work with the $\VV_\alpha$ representations of $\Uqu$.

\paragraph{Example.}
Recall the introduction of fermionic Fock spaces in~\eqref{eq:Fock-space-W}.
For $\qa=i$, $y_\pm=\pm\cot{\frac{\pi\alpha}{2}}$, we have isomorphisms of the blob algebra representations
\begin{equation*}
\overline{W}_{N/2+j}=\tilde{\VV}_j\cong \WW_j\,,\qquad W_{N/2+j}=\FF\tilde{\VV}_j\cong \WW_j
\end{equation*}
for all $-N/2\leq j\leq N/2$. Thus
\begin{equation}
\label{bidecompqi}
\WW_j\otimes\VV_{\alpha+2j}={N\choose N/2+j} \VV_{\alpha+2j}=2\WW_j=W_{N/2+j}\oplus\overline{W}_{N/2+j} \,,
\end{equation}
where we have written the same vector space as a $(\Blob,\Uqu)$-bimodule, a $\Uqu$-module, a $\Blob$-module, and finally as the $(N/2+j)$-fermion Fock subspace. Retrospectively, it is quite natural that the spectral equation \eqref{waveqU} only depends on $\mu$ and $y$ and has real coefficients even if $H_b$ is not self-adjoint, because, when restricted to a $\Blob$-module, its matrix entries are all real.

Let us show how this works for $N=2$ in terms of link states. The basis states are given by $\ket{\topinset{$\bullet$}{$|$}{3pt}{}~|}$, $\ket{\topinset{$\circ$}{$|$}{3pt}{}~|}$, $\ket{~\topinset{$\frown$}{$\bullet$}{2pt}{}~}$, $\ket{~\topinset{$\frown$}{$\circ$}{2pt}{}~}$ on which the one-boundary Hamiltonian $H_\ell=-\mu~\topinset{$\bullet$}{$|$}{3pt}{}~|-{\smile\atop\frown}$ acts by the graphical rules of the blob algebra. We have
\begin{equation*}
H_\ell=
\begin{pmatrix}
-\mu & 0 & 0 & 0\\
0 & 0 & 0 & 0\\
-1 & -1 & -\mu-y & y\\
-1 & -1 & -y & y
\end{pmatrix}
\,.
\end{equation*}
Therefore, denoting $\lambda_1$ and $\lambda_2$ the two roots of the polynomial
\begin{equation*}
P(\lambda)=\lambda^2+\mu \lambda-\mu y\,,
\end{equation*}
we have
\begin{equation*}
\mathrm{Spec}(H_\ell)=\{0,-\mu,\lambda_1, \lambda_2\}\,.
\end{equation*}
On the other hand, by \eqref{waveqU},
\begin{equation*}
U_2(\lambda/2)+\mu U_1(\lambda/2)+(1-\mu y)U_0(\lambda/2)=(\lambda^2-1)+\mu\lambda+(1-\mu y)=P(\lambda)\,.
\end{equation*}
Thus
\begin{equation*}
\mathrm{Spec}(H_b)=\{0,\lambda_1, \lambda_2, \lambda_1+\lambda_2=-\mu\}=\mathrm{Spec}(H_\ell)\,.
\end{equation*}
Moreover, introducing the blob modules $\WW_1=\CC\ket{\topinset{$\bullet$}{$|$}{3pt}{}~|}$, $\WW_0=\CC\ket{~\topinset{$\frown$}{$\bullet$}{2pt}{}~}\oplus\CC\ket{~\topinset{$\frown$}{$\circ$}{2pt}{}~}$, and $\WW_{-1}=\CC\ket{\topinset{$\circ$}{$|$}{3pt}{}~|}$ he have
\begin{equation*}
\begin{aligned}
& H_\ell|_{\WW_1} =H_b|_{W_2}=-\mu\,,\\
& H_\ell|_{\WW_{0}}=H_b|_{W_1}=
\begin{pmatrix}
-\mu-y & y\\
 -y & y
\end{pmatrix}\,,\\
& H_\ell|_{\WW_{-1}}=H_b|_{W_0}=0\,.
\end{aligned}
\end{equation*}

\section{Lattice algebras underlying the two-boundary spin chains}
\label{twoboundarysec}

We recall that the two-boundary Hamiltonian acts on the Hilbert space
\begin{equation*}
\Hbb:=\VV_{\alpha_l}\otimes\left(\CC^2\right)^{\otimes N}\otimes\VV_{\alpha_r}
\end{equation*}
as
\begin{equation}
H_{2b}=-\mu_l b_l-\mu_r b_r-\sum_{i=1}^{N-1}e_i\,.
\end{equation}
with $b_{l/r}$ introduced in Sections \ref{sec:Hb-gen-q} and \ref{sec:Hb-root-q}. From the above it is clear how to proceed to define the adapted lattice algebra: we have to extend $\Blob$, which already contains the left blob $b_l:=b$ of weight $y_l:=y$, by adding a right blob $b_r$ with diagram
\begin{equation*}
\begin{tikzpicture}[scale=0.8]
\draw (0,1) node {$b_r=$};
\draw (1,0) [dnup=0];
\draw (5,1) node {$\blacksquare$};
\draw (2,0) [dnup=0];
\draw (3,1) node {...};
\draw (4,0) [dnup=0];
\draw (5,0) [dnup=0];
\end{tikzpicture}
\end{equation*}
and satisfying
\begin{equation}
\label{blobruler}
b_r^2=b_r\,,\qquad e_{N-1} b_r e_{N-1} = y_r e_1\,,\qquad [b_l,b_r]=0\,, \qquad [b_r,e_i]=0 \quad \text{for} \quad 1\leq i\leq N-2\,,
\end{equation}
where $y_r$ is now the weight of the loop carrying the right blob $\blacksquare$. By direct computation we can check that the projector $b_r$, defined in \eqref{blobrqgen} and \eqref{blobrqroot}, satisfies relations \eqref{blobruler} with
\begin{equation*}
y_r=\frac{[\alpha_r+1]_\qa}{[\alpha_r]_\qa}\,.
\end{equation*}

For even $N$, this is however not enough from a diagrammatic point of view, because we also need to specify the weight of a loop carrying both the left and the right blob. Concretely, we need to introduce some $Y$ such that
\begin{equation}
\label{2bloop}
\left(\prod_{i=1}^{N/2}e_{2i-1}\right)b_l\left(\prod_{i=1}^{N/2-1}e_{2i}\right)b_r\left(\prod_{i=1}^{N/2}e_{2i-1}\right)=Y\prod_{i=1}^{N/2}e_{2i-1}\,.
\end{equation}
If $Y$ is a complex number, this relation, together with \eqref{blobruler} and the relations of the blob algebra, define the two-boundary Temperley-Lieb algebra \cite{de_Gier_2009}, denoted $\twoBlob$. However, in our spin chains, it will not be possible to fix $Y$ and instead we have sectors with different values of $Y$. In other words, it is convenient to think of $Y$ as an abstract central element or, said differently, to consider a central extension of the ``standard"  two-boundary Temperley-Lieb algebra. Then the same relations~\eqref{2bloop} define what we call the \textit{universal two-boundary Temperley-Lieb} algebra, denoted $\utwoBlob$.

Computing $Y$ is difficult because, contrary to the other loop weights $\delta$, $y_l$ and $y_r$ which can be found by performing some elementary computations on two or three sites, $Y$ is an intrinsically \emph{non-local} quantity as one has to go through the whole system to form a loop touching both boundaries. It is thus not surprising that $Y$ turns out to be some non-trivial non-local operator and not a fixed complex number. In what follows, we will express $Y$ in terms of the Casimir element of $\Uq$ (which commutes with all the $e_i$, $b_l$ and $b_r$ as well as $\Uq$) and then decompose $\Hbb$ into $Y$-eigenspaces so as to obtain representations of the standard two-boundary TL algebra with some well-defined scalar value of $Y$ in each eigenspace. Finally, we will formulate some conjectures on isomorphisms of these representations to standard modules or quotients thereof.

\subsection{$\Hbb$ as a representation of the universal two-boundary TL algebra}
\label{Ycomp}

Let us state the final result, first for generic $\qa$, and then prove it in detail.
\begin{prop}
\label{repthtwo}
Let $\qa\in\CC\backslash e^{i\pi\QQ}$, $\qa^{\alpha_l},\qa^{\alpha_r}\in\CC\backslash \{\pm 1\}$ and $N\in 2\mathbb{N}^*$. Then $\Hbb:=\VV_{\alpha_l}\otimes\left(\CC^2\right)^{\otimes N}\otimes\VV_{\alpha_r}$, where $\VV_{\alpha_l}$ and $\VV_{\alpha_r}$ are two Verma modules defined in~\eqref{alpharepgen}, carries a representation of  the universal two-boundary Temperley-Lieb algebra $\utwoBlob$ commuting with the $\Uq$ action, with parameters
\begin{equation}
\delta=[2]_\qa\,,\qquad y_{l}=\frac{[\alpha_{l}+1]_\qa}{[\alpha_l]_\qa}\,,\qquad y_{r}=\frac{[\alpha_{r}+1]_\qa}{[\alpha_r]_\qa}\,,
\end{equation}
and generators
\begin{equation}
\begin{gathered}
b_l=\frac{1}{\{\alpha_l\}}
\begin{pmatrix}
 -\qam \KK^{-1}+\qa^{\alpha_l} &\{1\}\FF\\
 \qa\{1\}\KK^{-1}\EE & \qa\KK^{-1}-\qa^{-\alpha_l}
\end{pmatrix}\,, \qquad
b_r=\frac{1}{\{\alpha_r\}}
\begin{pmatrix}
 \qa\KK - \qa^{-\alpha_r} & \qa\{1\}\KK\FF\\
 \{1\}\EE & -\qam\KK+\qa^{\alpha_r}
\end{pmatrix}\,,\\
e_i =-\frac{1}{2}\left(\sigma^x_{i}\sigma^x_{i+1}+\sigma^y_{i}\sigma^y_{i+1}+\frac{\qa+\qa^{-1}}{2}(\sigma^z_{i}\sigma^z_{i+1}-1)\right)-\frac{\qa-\qa^{-1}}{4}(\sigma_{i+1}^z-\sigma_{i}^z)\,,\\
Y=\frac{\qa^{\alpha_l+\alpha_r+1}+\qa^{-\alpha_l-\alpha_r-1}-\CCC_{\Hbb}}{\{\alpha_l\} \{\alpha_r\}}\,,
\end{gathered}
\end{equation}
where $\CCC_{\Hbb}$ is the Casimir of $\Uq$ on $\Hbb$ given by \eqref{casimir} or, explicitly,
\begin{equation}
\label{tbcasimir1}
\CCC_{\Hbb}=\{1\}^2\FF_{\Hbb}\EE_{\Hbb}+\qa\KK_{\Hbb}+\qam\KK^{-1}_{\Hbb}
\end{equation}
with
\begin{equation}
\label{tbcasimir2}
\begin{gathered}
\KK_{\Hbb}^{\pm 1}  =\KK_{\VV_{\alpha_l}}^{\pm 1}\otimes\qa^{\pm\sum_{i=1}^{N}\sigma_i^z}\otimes\KK_{\VV_{\alpha_r}}^{\pm 1}\,,\\
\EE_{\Hbb}  =\EE_{\VV_{\alpha_l}}\otimes\qa^{\sum_{i=1}^{N}\sigma_i^z}\otimes\KK_{\VV_{\alpha_r}}+\left(\sum_{k=1}^N\sigma_k^+\otimes\qa^{\sum_{i=k+1}^{N}\sigma_i^z}\otimes\KK_{\VV_{\alpha_r}}\right)+\EE_{\VV_{\alpha_r}}\,,\\
\FF_{\Hbb}  =\FF_{\VV_{\alpha_l}}+\left(\sum_{k=1}^N\KK_{\VV_{\alpha_l}}^{-1}\otimes\qa^{-\sum_{i=1}^{k-1}\sigma_i^z}\otimes\sigma_k^-\right)+\KK_{\VV_{\alpha_l}}^{-1}\otimes\qa^{-\sum_{i=1}^{N}\sigma_i^z}\otimes\FF_{\VV_{\alpha_r}}\,.
\end{gathered}
\end{equation}
\end{prop}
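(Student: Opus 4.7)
The relations of $\utwoBlob$ split naturally into those already available and the one fixing the central element $Y$. The TL relations among the $e_i$ and the blob relations $b_l^2=b_l$, $e_1 b_l e_1=y_l e_1$, $[b_l,e_i]=0$ for $i\ge 2$, together with the commutation of $b_l$ and the $e_i$ with $\Uq$, are all already established by Proposition~\ref{blobrepqgen}. The mirrored relations $b_r^2=b_r$, $e_{N-1}b_r e_{N-1}=y_r e_{N-1}$, $[b_r,e_i]=0$ for $i\le N-2$ follow from an analogous direct computation using the explicit form of $b_r$ in~\eqref{blobrqgen} as the $\Uq$-invariant projector $\CC^2\otimes\VV_{\alpha_r}\twoheadrightarrow\VV_{\alpha_r+1}$. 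The commutation $[b_l,b_r]=0$ is immediate from disjoint supports for $N\ge 2$.

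The substantive task is the identification of $Y$. Set $\pi:=\prod_{i=1}^{N/2}e_{2i-1}$ and $\mathcal{I}:=\pi(\Hbb)$. Because $e_{2i-1}/\delta$ is the $\Uq$-invariant projector onto the singlet in the pair $(2i-1,2i)$, there is a canonical $\Uq$-isomorphism $\iota:\VV_{\alpha_l}\otimes\VV_{\alpha_r}\xrightarrow{\sim}\mathcal{I}$, sending $v\otimes w$ to $v\otimes s^{\otimes N/2}\otimes w$ with $s$ the normalised singlet. The operator $\Omega:=\pi\, b_l\bigl(\prod_{i=1}^{N/2-1}e_{2i}\bigr)\, b_r\,\pi$ vanishes on $\ker\pi$ and has image in $\mathcal{I}$, so one can write $\Omega=\widehat{Y}\,\pi$ for a unique $\widehat{Y}\in\End(\mathcal{I})$; since each of $b_l,b_r,e_i$ commutes with $\Uq$, so does $\widehat{Y}$. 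Proving the proposition is thus equivalent to showing that $\widehat{Y}$ corresponds, via $\iota$, to multiplication by the scalar $(\qa^{\alpha_l+\alpha_r+1}+\qa^{-\alpha_l-\alpha_r-1}-\CCC_{\Hbb})/(\{\alpha_l\}\{\alpha_r\})$ on every $\Uq$-isotypic summand of $\VV_{\alpha_l}\otimes\VV_{\alpha_r}$.

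For generic $\qa,\alpha_l,\alpha_r$, the module $\VV_{\alpha_l}\otimes\VV_{\alpha_r}$ decomposes as a direct sum of distinct Verma modules $\VV_{\alpha_l+\alpha_r-1-2k}$, $k\ge 0$, on which $\CCC_{\Hbb}$ takes the pairwise distinct scalar values $\qa^{\alpha_l+\alpha_r-1-2k}+\qa^{-(\alpha_l+\alpha_r-1-2k)}$. By Schur's lemma $\widehat{Y}$ acts as a scalar $Y_k$ on each summand, so the target identity reduces to checking the value of $Y_k$ against the above formula on a single highest-weight vector per summand.

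The main obstacle is this calculation. I approach it diagrammatically: via $\iota$, the operator $\Omega$ corresponds to composing $b_l$ and $b_r$ through the singlet chain, and using the rank-one structure of $b_l,b_r$ in their $\CC^2$-factors (they are projectors onto $\VV_{\alpha_{l/r}+1}$), this composition repackages as a single $\Uq$-invariant endomorphism of $\VV_{\alpha_l}\otimes\VV_{\alpha_r}$ which by $\Uq$-equivariance depends only on the Casimir. To pin down this dependence I start from the base case $N=2$, where the identity reduces to a finite matrix calculation with \eqref{blobqgen} and~\eqref{blobrqgen} on $\VV_{\alpha_l}\otimes\CC^2\otimes\CC^2\otimes\VV_{\alpha_r}$ directly yielding the claimed formula. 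The passage to general $N$ is handled by an inductive argument showing that inserting an extra pair of singlet sites contributes the same overall factor of $\delta$ to both $\Omega$ and $\pi$, so that $\widehat{Y}$ is $N$-independent once one identifies images via~$\iota$. This reduces the general identity to the base case and completes the proof.
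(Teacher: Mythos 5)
Your reduction of the problem is sound as far as it goes: the auxiliary relations are handled exactly as you say, and recasting the defining relation for $Y$ as the statement that a $\Uq$-equivariant endomorphism $\widehat{Y}$ of $\mathcal{I}\cong\VV_{\alpha_l}\otimes\VV_{\alpha_r}$ acts by the prescribed affine function of the Casimir on each Verma summand is a correct reformulation (modulo the fact that your Schur argument needs $\qa^{\alpha_l+\alpha_r}\notin\pm\qa^{\ZZ}$ as well, whereas the proposition only assumes $\qa^{\alpha_{l/r}}\neq\pm 1$; this is repairable by rational dependence on the parameters, but should be said). The genuine gap is in the two steps that carry all the content. The inductive step is justified by the claim that inserting an extra singlet pair ``contributes the same overall factor of $\delta$ to both $\Omega$ and $\pi$''. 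That is not the right mechanism: the extra pair is \emph{threaded} by the decorated loop rather than forming a disjoint contractible loop, so it contributes no local factor of $\delta$ to $\Omega$; and if the TL and blob relations alone could telescope $\pi_{N+2}\,b_l(\prod e_{2i})b_r\,\pi_{N+2}$ down to $\delta$ times the $N$-site expression, then $Y$ would be determined by $\delta,y_l,y_r$ inside the abstract algebra, which it is not (it is an independent parameter of $\twoBlob$). The $N$-independence of $\widehat{Y}$ is a genuine property of this particular representation, and proving it is exactly where the work lies: one must slide the caps and cups of the $e_i$'s through the boundary operators, which requires knowing that $e_i=\mathrm{coev}_{\CC^2}\circ\mathrm{ev}_{\CC^2}$ for the pivotal duality of $\Uq$ and that $b_{l/r}$ are affine combinations of the double braidings $\BB_{\VV_{\alpha_l},\CC^2}$ and $\overline{\BB}_{\CC^2,\VV_{\alpha_r}}$, so that naturality and isotopy invariance of the graphical calculus apply. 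This is precisely the machinery the paper's proof sets up, culminating in the lemma $\qtr_{\CC^2}\left(\BB_{\VV,\CC^2}\right)=\overline{\qtr}_{\CC^2}\left(\overline{\BB}_{\CC^2,\VV}\right)=\CCC_{\VV}$, which identifies the fully contracted loop with $\CCC_{\Hbb}$ in one stroke and uniformly in $N$.

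Relatedly, your base case is not a ``finite matrix calculation'': even for $N=2$ the image $\mathcal{I}\cong\VV_{\alpha_l}\otimes\VV_{\alpha_r}$ decomposes into infinitely many Verma summands, so you must evaluate $e_1b_lb_re_1$ on a general highest-weight vector of $\VV_{\alpha_l+\alpha_r-1-2k}$ for every $k\geq 0$ (and these vectors are themselves nontrivial sums), or else recognise the partial trace $\mathrm{ev}\circ b_lb_r\circ\mathrm{coev}$ as an affine function of the Casimir by a structural argument --- which is again the braiding computation. So the proposal frames the problem correctly but omits, and in one place misstates, the argument that actually establishes the value of $Y$.
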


\begin{proof}
We already know that the $e_i$ satisfy the relations of the TL algebra \eqref{TLrel}, and that the two operators $b_r$ and $b_l$ commute with each other and satisfy the left/right blob relations
\begin{equation}
\label{blobrules}
\begin{aligned}
b_l^2 & =b_l \,,\qquad & e_1 b_l e_1 & =y_l e_1\,,\qquad & [b_l,e_i] & =0\quad & \forall\,2\leq i\leq N-1\,,\\
b_r^2 & =b_r\,,\qquad & e_{N-1} b_r e_{N-1} & = y_r e_{N-1}\,,\qquad & [b_r,e_i] & =0\quad & \forall\,1\leq i\leq N-2\,.
\end{aligned}
\end{equation}
By construction, all these generators do commute with $\Uq$. Therefore it only remains to prove \eqref{2bloop}. For this, let us first introduce a powerful diagrammatic formalism.

\paragraph{Braiding.}

It is known that $\Uq$ admits a universal $\RRR$-matrix given by \cite{Drinfeld:1986in} (see also \cite[Ch.\,6.4]{qgroups})
\begin{equation}
\label{Rmatrix}
\RRR=\qa^{\frac{\HH\otimes\HH}{2}}\sum_{k\geq 0}\frac{\{1\}^{2k}}{\{k\}!}\qa^{k(k-1)/2}\EE^k\otimes \FF^k \,,
\end{equation}
where
\begin{equation*}
\{n\}!:=\prod_{k=1}^n\{k\}\,.
\end{equation*}
Although strictly speaking $\RRR\notin\Uq\otimes\Uq$, it can be evaluated on tensor product of any pair  $(\XX, \YY)$ of representations of $\Uq$ as long as \emph{at least one} of them is finite-dimensional.\footnote{It is enough to consider the action of $\RRR$ on eigenvectors of $\HH\otimes \HH$, so the action of the first factor $\qa^{\frac{\HH\otimes\HH}{2}}$ is well defined. If, say $\YY$, is finite-dimensional, then for every $v\in \YY$ only finitely many $F^k v$ are non-zero, and therefore the sum in~\eqref{Rmatrix} is finite on every vector $w\otimes v\in \XX\otimes \YY$.} We denote this evaluation by $\RRR_{\XX,\YY}$. One of the essential properties of $\RRR$ is that for any two such representations $\XX$ and $\YY$, the two operators 
\begin{equation}
\label{Rprop}
P_{\XX,\YY}\circ\RRR_{\XX,\YY} \qquad \text{and} \qquad \RRR_{\YY,\XX}^{-1}\circ P_{\XX,\YY} \,,
\end{equation}
where
\begin{equation*}
\begin{aligned}
P_{\XX,\YY}\, : ~\XX\otimes \YY & \to \YY\otimes \XX\\
 x\otimes y & \mapsto y\otimes x
\end{aligned}
\end{equation*}
is the operator permuting the two tensor factors, commute with the action of $\Uq$, that is, they belong to $\Hom_{\Uq}(\XX\otimes\YY, \YY\otimes\XX)$. In other words, $\RRR$ generates two (a priori different) intertwiners between $\XX\otimes \YY$ and $\YY\otimes \XX$. Graphically, one often represents
\begin{equation*}
\begin{tikzpicture}
\pic at (1.5,0.8) {braid={s_1}};
\node at (0,0) {$P_{\XX,\YY}\circ\RRR_{\XX,\YY}=$};
\node at (1.5,-1) {$\XX$};
\node at (2.5,-1) {$\YY$};
\node at (1.5,1) {$\YY$};
\node at (2.5,1) {$\XX$};
\pic at (6.5,0.8) {braid={s_1^{-1}}};
\node at (5,0) {$\RRR_{\YY,\XX}^{-1}\circ P_{\XX,\YY}=$};
\node at (6.5,-1) {$\XX$};
\node at (7.5,-1) {$\YY$};
\node at (6.5,1) {$\YY$};
\node at (7.5,1) {$\XX$};
\end{tikzpicture}
\,.
\end{equation*}
Now define the double braidings (also known as monodromies)
\begin{equation*}
\begin{aligned}
\BB_{\XX,\YY} & :=P_{\YY,\XX}\circ\RRR_{\YY,\XX}\circ P_{\XX,\YY}\circ\RRR_{\XX,\YY}\,,\\
\overline{\BB}_{\XX,\YY} & :=\RRR_{\XX,\YY}^{-1} \circ P_{\YY,\XX}\circ\RRR_{\YY,\XX}^{-1}\circ P_{\XX,\YY}=\BB_{\XX,\YY}^{-1}\,.
\end{aligned}
\end{equation*}
From \eqref{Rprop} it is clear that $\BB_{\XX,\YY}, \overline{\BB}_{\XX,\YY}\in\End_{\Uq}(\XX\otimes\YY)$. Graphically
\begin{equation*}
\begin{tikzpicture}
\pic at (1.5,0.8) {braid={s_1 s_1}};
\node at (0.5,-0.5) {$\BB_{\XX,\YY}=$};
\node at (1.5,-2) {$\XX$};
\node at (2.5,-2) {$\YY$};
\node at (1.5,1) {$\XX$};
\node at (2.5,1) {$\YY$};
\pic at (6.5,0.8) {braid={s_1^{-1} s_1^{-1}}};
\node at (5.5,-0.5) {$\overline{\BB}_{\XX,\YY}=$};
\node at (6.5,-2) {$\XX$};
\node at (7.5,-2) {$\YY$};
\node at (6.5,1) {$\XX$};
\node at (7.5,1) {$\YY$};
\end{tikzpicture}
\,.
\end{equation*}

\paragraph{Quantum traces and the Casimir element.}

Additionally, for any finite-dimensional $\Uq$-module $\XX$, one has natural linear maps $\End_{\Uq}(\XX)\to\CC$ defined by
\begin{equation*}
\qtr_\XX(f)  :=\tr_\XX(\KK_\XX f)\,,\qquad
\overline{\qtr}_\XX(f)  :=\tr_\XX(f\KK_\XX^{-1})\ , \qquad f\in\End_{\Uq}(\XX)\ ,
\end{equation*}
called, respectively, the right and left quantum traces. These traces can be thought as a result of consecutive  application of the following three $\Uq$-intertwining operators: for the right  quantum trace, first, the standard coevaluation map 
\begin{equation*}
\mathrm{coev}_\XX : \CC\to\XX\otimes \XX^*, \qquad \mathsf{1}\mapsto \sum_{i} v_i\otimes v^i\,,
\end{equation*}
where $v_i$ is a basis in $\XX$ and $v^i\in\XX^*$ is the dual basis, i.e.\ $v^i(v_j)=\delta_{ij}$, then followed by $f\otimes \Id_{\XX^*}$, and finally by the evaluation map 
\begin{equation*}
\mathrm{ev}_\XX : \XX\otimes \XX^*\to \CC, \qquad v\otimes f\mapsto f(\KK_\XX v)
\end{equation*}
that uses the pivotal structure of $\Uq$ given by action of $\KK$\footnote{The term `pivotal' means here that $S^2(x) = \KK x \KK^{-1}$ for all $x\in\Uq$ where $S:\Uq\to\Uq$ is the antipode of $\Uq$ (see \cite[Ch.\,4.2]{qgroups} and \cite[Ch.\,VII.1]{kassel}). This property assures that $\mathrm{ev}_\XX$ commutes with $\Uq$ action. Indeed recall that, for all $f\in\XX^*$ the action on the dual space $\XX^*$ is given via the antipode $S$ by $(x\cdot f)(-) := f(S(x)-)$, the action on a tensor product via the coproduct $\Delta$ and on $\mathbb{C}$ via the counit $\epsilon:\Uq\to\CC$. Then, for all $x\in\Uq$, $v\in\XX$, $f\in\XX^*$ and denoting $\Delta(x)=\sum x_{(1)}\otimes x_{(2)}$, we have $\mathrm{ev}_\XX(x\cdot(v\otimes f))=f\left(\sum S(x_{(2)})\KK x_{(1)} v\right)=f\left(\sum S(x_{(2)})S^2(x_{(1)}) \KK v\right)=f\left(S\left(\sum S(x_{(1)})x_{(2)}\right)\KK v\right)=f(S(\iota\circ\epsilon(x))\KK v)=\epsilon(x)\mathrm{ev}_\XX(v\otimes f)$ where we use the anti-automorphism property of $S$ as well as the axiom $\sum S(x_{(1)})x_{(2)}=\iota\circ\epsilon(x)$ with $\iota :\CC\to\Uq$ the unit map.} ; whereas for the left quantum trace, first, the pivotal coevaluation map 
\begin{equation*}
\overline{\mathrm{coev}}_\XX : \CC\to\XX^*\otimes \XX, \qquad \mathsf{1}\mapsto \sum_{i} v^i\otimes \KK^{-1}_{\XX}v_i\,,
\end{equation*}
then $\Id_{\XX^*}\otimes f$, and lastly the standard evaluation map 
\begin{equation*}
\overline{\mathrm{ev}}_\XX : \XX^*\otimes \XX\to \CC, \qquad v\otimes f\mapsto f(v)\,.
\end{equation*}
Graphically, representing the coevaluation map by a cup and the evaluation map by a cap, the right and left quantum traces of a map $f\in\End_{\Uq}(\XX)$ are respectively drawn as
\begin{equation}
\begin{tikzpicture}
\draw(0.5,0) [dndn=1];
\draw (0.5,-1) [upup=1];
\draw (0,0) rectangle (1,-1);
\draw (0.5, -0.5) node{$f$};
\draw (1.5,0) -- (1.5,-1);
\draw (-1,-0.5) node{$\qtr_\XX(f)=$};
\draw (0.25,0.5) node{$\XX$};
\draw (1.75,0.5) node{$\XX^*$};

\draw(5.5,0) [dndn=1];
\draw (5.5,-1) [upup=1];
\draw (6,0) rectangle (7,-1);
\draw (6.5, -0.5) node{$f$};
\draw (5.5,0) -- (5.5,-1);
\draw (4.5,-0.5) node{$\overline{\qtr}_\XX(f)=$};
\draw (5.25,0.5) node{$\XX^*$};
\draw (6.75,0.5) node{$\XX$};
\end{tikzpicture}
\,.
\end{equation}
Note that the auxiliary dual space $\XX^*$ is on the left (resp.\ right) for the left (resp.\ right) quantum trace, which justifies their names.

We also note that the fundamental representation $\mathbb{C}^2$ is self-dual and the TL generators can be written in terms of the (co)evaluation maps\footnote{Strictly speaking, one needs to compute explicitly the isomorphism $\CC^2\cong(\CC^{2})^*$ and compose the (co)evaluation maps with them.}
 $$
 e_i = \mathrm{coev}_{\mathbb{C}^2} \circ \mathrm{ev}_{\mathbb{C}^2}  \colon \mathbb{C}^2\otimes \mathbb{C}^2 \to \mathbb{C}^2\otimes \mathbb{C}^2
 $$
  acting on the $i$-th and $i+1$-th $\CC^2$ sites which is consistent with their diagrammatic representation.

The quantum traces over finite-dimensional $\XX$ are important because they allow to construct new intertwining operators via taking partial quantum traces. Indeed, for any (not necessarily finite-dimensional) $\Uq$ module $\YY$ and any $\Uq$-intertwining operators $f\in\End_{\Uq}(\YY\otimes\XX)$ and $\overline{f}\in\End_{\Uq},(\XX\otimes\YY)$ we can define
\begin{equation*}
\begin{aligned}
\qtr_\XX(f) & :=\tr_\XX(\KK_\XX f) \in\End_{\Uq}(\YY)\,,\\
\overline{\qtr}_\XX(\overline{f}) & :=\tr_\XX(\overline{f}\KK_\XX^{-1}) \in\End_{\Uq}(\YY)\,,
\end{aligned}
\end{equation*}
where we abuse our notation and write $\KK_\XX$ instead of more lengthy $\Id_{\YY}\otimes \KK_\XX$, etc., and $\tr_\XX$ stays for the (usual) partial trace over the $\XX$ component of $\XX\otimes\YY$.
Graphically, one represents these partial traces as
\begin{equation}
\begin{tikzpicture}
\draw(0.5,0) [dndn=1];
\draw (0.5,-1) [upup=1];
\draw (-0.25,0) rectangle (0.75,-1);
\draw (0.25, -0.5) node{$f$};
\draw (1.5,0) -- (1.5,-1);
\draw (0,0) -- (0, 0.5);
\draw (0,-1) -- (0,-1.5);
\draw (0, 0.5) node[scale=2]{.};
\draw (0,-1.5) node[scale=2]{.};
\draw (0, 0.75) node{$\YY$};
\draw (0,-1.75) node{$\YY$};
\draw (-1.25,-0.5) node{$\qtr_\XX(f)=$};

\draw(5.5,0) [dndn=1];
\draw (5.5,-1) [upup=1];
\draw (6.25,0) rectangle (7.25,-1);
\draw (6.75, -0.5) node{$\overline{f}$};
\draw (5.5,0) -- (5.5,-1);
\draw (7,0) -- (7, 0.5);
\draw (7,-1) -- (7,-1.5);
\draw (7, 0.5) node[scale=2]{.};
\draw (7,-1.5) node[scale=2]{.};
\draw (7, 0.75) node{$\YY$};
\draw (7,-1.75) node{$\YY$};
\draw (4.5,-0.5) node{$\overline{\qtr}_\XX(\overline{f})=$};
\end{tikzpicture}
\end{equation}
From this diagrammatic presentation it is clear why the partial traces are $\Uq$-intertwining operators on~$\YY$, and not just linear endomorphisms of $\YY$: they are again compositions of three intertwining operators, $\left(\Id_\YY\otimes\mathrm{ev}_\XX\right)\circ\left(f\otimes \Id_{\XX^*}\right)\circ\left(\Id_\YY\otimes\mathrm{coev}_\XX\right)$ for the right partial quantum trace and $\left(\overline{\mathrm{ev}}_\XX\otimes\Id_\YY\right)\circ\left(\Id_{\XX^*}\otimes f\right)\circ\left(\overline{\mathrm{coev}}_\XX\otimes\Id_\YY\right)$ for the left one.

An essential property of this pictorial formalism is that isotopic deformation of strings in a diagram does not affect the $\Uq$-intertwiner it represents \cite[Ch.\,5.3]{qgroups} (see \cite[Part III]{kassel} for more details). This will be particularly important for us later on.

We are now ready to state the main lemma we will use for the proof of Proposition \ref{repthtwo}.

\begin{lemma}
For any $\Uq$-module $\VV$,
\begin{equation}
\begin{aligned}
\qtr_{\CC^2}\left(\BB_{\VV,\CC^2}\right) & =\tr_{\CC^2}\left(\qa^{\sigma^z}\BB_{\VV,\CC^2}\right)=\CCC_\VV\,,\\
\overline{\qtr}_{\CC^2}\left(\overline{\BB}_{\CC^2,\VV}\right) & =\tr_{\CC^2}\left(\overline{\BB}_{\CC^2,\VV}\qa^{-\sigma^z}\right)=\CCC_\VV\,,
\end{aligned}
\end{equation}
where $\CCC_\VV$ is the Casimir element action on $\VV$. Graphically,
\begin{equation}
\label{graphrule}
\begin{tikzpicture}[scale=0.5]
\pic[scale=0.8] at (-12.1,2) {braid={s_1 s_1}};
\draw(-10.5,2) [dndn=1];
\draw(-10.5,-2) [upup=1];
\draw (-9.5,-2) -- (-9.5,2);
\draw (-12.1,-2.5) node{$\VV$};
\draw (-12.1, 2.5) node{$\VV$};
\draw (-12.5, 0.8) node{$\CC^2$};

\draw (-7.75, 0) node{$=$};

\pic[scale=0.8] at (-5,2) {braid={s_1^{-1} s_1^{-1}}};
\draw(-6,2) [dndn=1];
\draw(-6,-2) [upup=1];
\draw (-6,-2) -- (-6,2);
\draw (-3.4,-2.5) node{$\VV$};
\draw (-3.4, 2.5) node{$\VV$};
\draw (-3, 0.8) node{$\CC^2$};

\draw (-1.5, 0) node{$=$};

\draw (2,0) ellipse (2 and 1);
\fill[white] (2,1) circle (0.1);
\draw (2,-0.9) -- (2,2);
\draw (2,-2) -- (2,-1.1);
\draw (2,-2.5) node{$\VV$};
\draw (2, 2.5) node{$\VV$};
\draw (0, 0.8) node{$\CC^2$};
\draw (6, 0) node{$=~\CCC_\VV$};
\end{tikzpicture}
\,.
\end{equation}
\end{lemma}
\begin{proof}
By direct computation,
\begin{equation*}
\begin{aligned}
\BB_{\VV,\CC^2} & =\qa^{\frac{\HH\otimes\sigma^z}{2}}(1+\{1\}\FF\otimes\sigma^+)\qa^{\frac{\HH\otimes\sigma^z}{2}}(1+\{1\}\EE\otimes\sigma^-)\\
& =
\begin{pmatrix}
\KK+\qam\{1\}^2\FF\EE & \qam\{1\}\FF\\
\{1\}\KK^{-1}\EE & \KK^{-1}
\end{pmatrix}
\end{aligned}
\end{equation*}
and
\begin{equation*}
\begin{aligned}
\overline{\BB}_{\CC^2,\VV} & =(1-\{1\}\sigma^+\otimes\FF)\qa^{-\frac{\sigma^z\otimes\HH}{2}}(1-\{1\}\sigma^-\otimes\EE)\qa^{-\frac{\sigma^z\otimes\HH}{2}}\\
& =
\begin{pmatrix}
\KKm +\qa\{1\}^2\FF\EE & -\qa^2\{1\}\KK\FF\\
-\qa\{1\}\EE & \KK
\end{pmatrix} \,,
\end{aligned}
\end{equation*}
so
\begin{equation*}
\qtr_{\CC^2}\left(\BB_{\VV,\CC^2}\right):=\tr_{\CC^2}\left(\qa^{\sigma^z}\BB_{\VV,\CC^2}\right)=\{1\}^2\FF\EE+\qa\KK+\qam\KK^{-1} :=\CCC_\VV
\end{equation*}
and
\begin{equation*}
\overline{\qtr}_{\CC^2}\left(\overline{\BB}_{\CC^2,\VV}\right) :=\tr_{\CC^2}\left(\overline{\BB}_{\CC^2,\VV}\qa^{-\sigma^z}\right)=\{1\}^2\FF\EE+\qa\KK+\qam\KK^{-1} :=\CCC_\VV \,.
\end{equation*}
\end{proof}

\paragraph{End of the proof.}

We now have all the ingredients to prove \eqref{2bloop} with the expression for $Y$ in Proposition~\ref{repthtwo}. First, observe that
\begin{equation*}
\begin{aligned}
b_l & =\frac{1}{\{\alpha_l\}}\left(\qa\BB_{\VV_{\alpha_l},\CC^2_{(1)}}-\qa^{-\alpha_l}\right)\,,\\
b_r & =\frac{1}{\{\alpha_r\}}\left(\qa^{\alpha_r}-\qam\overline{\BB}_{\CC^2_{(N)},\VV_{\alpha_r}}\right)\,,
\end{aligned}
\end{equation*}
where $\CC^2_{(k)}$ denotes the $k$-th $\CC^2$ site of the spin chain $\Hbb$. Then write
\begin{equation}
\label{aux1}
b_lb_r=-\frac{\BB_{\VV_{\alpha_l},\CC^2_{(1)}}\overline{\BB}_{\CC^2_{(N)},\VV_{\alpha_r}}}{\{\alpha_l\}\{\alpha_r\}}+\frac{\qa^{\alpha_r}}{\{\alpha_r\}}b_l-\frac{\qa^{-\alpha_l}}{\{\alpha_l\}}b_r+\frac{\qa^{\alpha_r-\alpha_l}}{\{\alpha_l\}\{\alpha_r\}}\,.
\end{equation}
From the usual Temperley-Lieb \eqref{TLrel} and blob \eqref{blobrules} relations we have
\begin{equation}
\label{aux2}
\begin{split}
& \left(\prod_{i=1}^{N/2}e_{2i-1}\right)\left(\frac{\qa^{\alpha_r}}{\{\alpha_r\}}b_l-\frac{\qa^{-\alpha_l}}{\{\alpha_l\}}b_r+\frac{\qa^{\alpha_r-\alpha_l}}{\{\alpha_l\}\{\alpha_r\}}\right)\left(\prod_{i=1}^{N/2-1}e_{2i}\right)\left(\prod_{i=1}^{N/2}e_{2i-1}\right)=\\
& \qquad =\dfrac{\qa^{\alpha_r}\{\alpha_l+1\}-\qa^{-\alpha_l}\{\alpha_r+1\}+\delta\qa^{\alpha_r-\alpha_l}}{\{\alpha_l\}\{\alpha_r\}} \left(\prod_{i=1}^{N/2}e_{2i-1}\right)\\
&\qquad=\dfrac{\qa^{\alpha_l+\alpha_r+1}+\qa^{-\alpha_l-\alpha_r-1}}{\{\alpha_l\}\{\alpha_r\}} \left(\prod_{i=1}^{N/2}e_{2i-1}\right)\,.
\end{split}
\end{equation}
On the other hand, the graphical expression for 
$$
\left(\prod_{i=1}^{N/2}e_{2i-1}\right)\left(\BB_{\VV_{\alpha_l},\CC^2_{(1)}}\overline{\BB}_{\CC^2_{(N)},\VV_{\alpha_r}}\right)\left(\prod_{i=1}^{N/2-1}e_{2i}\right)\left(\prod_{i=1}^{N/2}e_{2i-1}\right)
$$
is given by the diagram
\begin{equation}
\label{2bdiag}
\begin{tikzpicture}
\pic[braid/.cd, strand 1/.style={red, thick}] at (0,0) {braid={s_1 s_1}};
\pic[braid/.cd, strand 2/.style={blue, thick}] at (5,0) {braid={s_1^{-1} s_1^{-1}}};
\draw (1,0) [dndn=4];
\draw (1,-2.5) [upup=4];
\draw (1,-4) [dndn=1];
\draw (4,-4) [dndn=1];
\node at (3,-3.75) {$\ldots$};
\draw (1,1.5) [upup=1];
\draw (4,1.5) [upup=1];
\node at (3,1.25) {$\ldots$};
\draw[red, thick] (0,0) -- (0,1.5);
\draw[blue, thick] (6,0) -- (6,1.5);
\draw[red, thick] (0,-4) -- (0,-2.5);
\draw[blue, thick] (6,-4) -- (6,-2.5);
\node at (0,1.75) {$\VV_{\alpha_l}$};
\node at (6,1.75) {$\VV_{\alpha_r}$};
\node at (0,-4.25) {$\VV_{\alpha_l}$};
\node at (6,-4.25) {$\VV_{\alpha_r}$};
\node at (5,-4.25) {$\CC^2$};
\node at (4,-4.25) {$\CC^2$};
\node at (1,-4.25) {$\CC^2$};
\node at (2,-4.25) {$\CC^2$};
\node at (5,1.75) {$\CC^2$};
\node at (4,1.75) {$\CC^2$};
\node at (1,1.75) {$\CC^2$};
\node at (2,1.75) {$\CC^2$};
\end{tikzpicture}
\,.
\end{equation}
where we used TL relations and isotopy to straighten the strings. Now we use an additional property of the diagrammatic representation: we can locally pass a cap or cup through a string without changing the overall intertwiner. This is explained by the general property of naturality of the braiding (see more details in~\cite[Ch.\,5.3]{qgroups}). In other words, we have, graphically,
\begin{equation*}
\begin{tikzpicture}
\draw (0,0) [dndn=1];
\draw (-0.5,1) -- (1.5,1);

\node at (2.5,0.5) {$=$};

\draw (4,0.25) [dndn=1];
\fill[white] (4.1,0.75) circle (0.1);
\fill[white] (4.9,0.75) circle (0.1);
\draw (3.5,0.75) -- (5.5,0.75);

\node at (6.5,0.5) {$=$};

\draw (7.5,0.75) -- (9.5,0.75);
\fill[white] (8.1,0.75) circle (0.1);
\fill[white] (8.9,0.75) circle (0.1);
\draw (8,0.25) [dndn=1];
\end{tikzpicture}
\end{equation*}
and similarly for the cups. Passing all the caps in \eqref{2bdiag} through the middle loop and bringing them together with the cups we see that this diagram can be expressed as the product of a loop going around the whole system $\Hbb$ and of the intertwiner $\prod_{i=1}^{N/2}e_{2i-1}$ (recall that $e_i = \mathrm{coev}_{\mathbb{C}^2_{(i)}} \circ \mathrm{ev}_{\mathbb{C}^2_{(i)}}$). But by \eqref{graphrule}, a diagram isotopic to a loop going around a $\Uq$ module $\VV$ represents $\CCC_\VV$. Applying this result to the whole spin-chain representation $\Hbb$, we thus obtain
\begin{equation}
\label{aux3}
\left(\prod_{i=1}^{N/2}e_{2i-1}\right)\left(\BB_{\VV_{\alpha_l},\CC^2_{(1)}}\overline{\BB}_{\CC^2_{(N)},\VV_{\alpha_r}}\right)\left(\prod_{i=1}^{N/2-1}e_{2i}\right)\left(\prod_{i=1}^{N/2}e_{2i-1}\right)=\CCC_{\Hbb}\left(\prod_{i=1}^{N/2}e_{2i-1}\right)\,.
\end{equation}
Combining \eqref{aux1}, \eqref{aux2} and \eqref{aux3} we finally have
\begin{equation}
\left(\prod_{i=1}^{N/2}e_{2i-1}\right)b_l\left(\prod_{i=1}^{N/2-1}e_{2i}\right)b_r\left(\prod_{i=1}^{N/2}e_{2i-1}\right)=Y\prod_{i=1}^{N/2}e_{2i-1}
\end{equation}
with
\begin{equation}
Y=\frac{\qa^{\alpha_l+\alpha_r+1}+\qa^{-\alpha_l-\alpha_r-1}-\CCC_{\Hbb}}{\{\alpha_l\} \{\alpha_r\}}\,.
\end{equation}
\end{proof}

For $\qa$ a $2p$-th root of unity we have an analogous result.

\begin{prop}
\label{repthtworoot}
Let $\qa=e^{\frac{i\pi}{p}}$ with $p\in\NN\backslash\{0,1\}$, $\alpha_l,\alpha_r\in\CC\backslash p\ZZ$ and $N\in 2\mathbb{N}^*$. Then $\Hbb=\VV_{\alpha_l}\otimes\left(\CC^2\right)^{\otimes N}\otimes\VV_{\alpha_r}$, where $\VV_{\alpha_l}$ and $\VV_{\alpha_r}$ are two $p$-dimensional modules defined in~\eqref{alpharep}, carries a representation of  the universal two-boundary Temperley-Lieb algebra $\utwoBlob$ commuting with the $\Uqu$ action, with parameters
\begin{equation}
\label{weights2broot}
\delta=[2]_\qa\,,\qquad y_{l}=\frac{[\alpha_{l}+1]_\qa}{[\alpha_l]_\qa}\,,\qquad y_{r}=\frac{[\alpha_{r}+1]_\qa}{[\alpha_r]_\qa}\,,
\end{equation}
and generators
\begin{equation}
\begin{gathered}
b_l=\frac{1}{\{\alpha_l\}}
\begin{pmatrix}
 \qam \KK^{-1}+\qa^{\alpha_l} & -\{1\}\FF\\
 -\qa\{1\}\KK^{-1}\EE & -\qa\KK^{-1}-\qa^{-\alpha_l}
\end{pmatrix}\,, \qquad
b_r=\frac{1}{\{\alpha_r\}}
\begin{pmatrix}
 -\qa\KK - \qa^{-\alpha_r} & -\qa\{1\}\KK\FF\\
 -\{1\}\EE & \qam\KK+\qa^{\alpha_r}
\end{pmatrix}\,,\\
e_i =-\frac{1}{2}\left(\sigma^x_{i}\sigma^x_{i+1}+\sigma^y_{i}\sigma^y_{i+1}+\frac{\qa+\qa^{-1}}{2}(\sigma^z_{i}\sigma^z_{i+1}-1)\right)-\frac{\qa-\qa^{-1}}{4}(\sigma_{i+1}^z-\sigma_{i}^z)\,,\\
Y=\frac{\qa^{\alpha_l+\alpha_r+1}+\qa^{-\alpha_l-\alpha_r-1}-\CCC_{\Hbb}}{\{\alpha_l\} \{\alpha_r\}}\ ,
\end{gathered}
\end{equation}
where $\CCC_{\Hbb}$ is the Casimir element of $\Uqu$ acting on $\Hbb$, given by \eqref{tbcasimir1}-\eqref{tbcasimir2}.
\end{prop}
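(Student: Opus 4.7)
The plan is to mirror the proof of Proposition~\ref{repthtwo}, checking only the few places where the passage from $\Uq$ to $\Uqu$ actually matters. First, the identities $b_l^2=b_l$, $b_r^2=b_r$, $e_1 b_l e_1 = y_l e_1$, $e_{N-1} b_r e_{N-1} = y_r e_{N-1}$, $[b_l,e_i]=[b_r,e_{j}]=[b_l,b_r]=0$ for $i\geq 2$ and $j\leq N-2$ follow by direct matrix computation on $\VV_{\alpha_l}\otimes\CC^2$ and $\CC^2\otimes\VV_{\alpha_r}$, essentially as in the generic case; the only bookkeeping difference is a global sign from the weight shift $\alpha\to\alpha+p$ between \eqref{alpharepgen} and \eqref{alpharep}, which stems from $\qa^p=-1$. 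Compatibility with $\Uqu$ rather than just $\Uq$ is automatic since $\HH$ acts diagonally in the standard basis of $\VV_{\alpha_{l/r}}$ and all boundary operators preserve the total $\HH$-weight.

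The main task is to prove the relation \eqref{2bloop} with the stated expression for $Y$. The essential input is that $\Uqu$ admits a universal $\RRR$-matrix of the same formal shape as \eqref{Rmatrix}, which is indeed one of the main reasons for introducing the unrolled version at roots of unity. Since $\EE^p=\FF^p=0$ in $\Uqu$ and the modules $\VV_{\alpha_{l/r}}$ are $p$-dimensional, the $\EE^k\otimes\FF^k$ sum truncates after $k=p-1$ on every tensor product involving $\VV_{\alpha_{l/r}}$ and $\CC^2$, so the braidings $\BB_{\VV_{\alpha_l},\CC^2_{(1)}}$ and $\overline{\BB}_{\CC^2_{(N)},\VV_{\alpha_r}}$ are genuine $\Uqu$-intertwiners. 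A short computation, analogous to the one preceding~\eqref{aux1}, then rewrites $b_l$ and $b_r$ as affine combinations of these braidings and the identity. Inserting them into the left-hand side of~\eqref{2bloop} produces a quadratic term $\BB_{\VV_{\alpha_l},\CC^2_{(1)}}\overline{\BB}_{\CC^2_{(N)},\VV_{\alpha_r}}$, sandwiched between $\prod e_{2i-1}$ and $\prod e_{2i}$, plus lower-order terms in $b_l$, $b_r$ and the identity. The lower-order terms collapse via the blob relations exactly as in~\eqref{aux2}, yielding the contribution $(\qa^{\alpha_l+\alpha_r+1}+\qa^{-\alpha_l-\alpha_r-1})/(\{\alpha_l\}\{\alpha_r\})$ times the TL product. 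The quadratic term, by the same diagrammatic isotopy used in the generic case (naturality of the braiding permits sliding the caps produced by the $e_i$ past the braiding strings), reduces to a single loop encircling the whole chain multiplied by $\prod_{i=1}^{N/2} e_{2i-1}$. The key lemma $\qtr_{\CC^2}(\BB_{\VV,\CC^2})=\CCC_\VV$, whose proof is a direct matrix calculation that is insensitive to $\qa$, then evaluates this loop to $\CCC_{\Hbb}\prod_{i=1}^{N/2} e_{2i-1}$, and assembling the pieces yields the claimed formula for $Y$.

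The one place requiring care is confirming that the diagrammatic calculus remains sound at a root of unity: one needs the category of finite-dimensional weight $\Uqu$-modules to be pivotal with pivotal structure given by $\KK=\qa^{\HH}$, and to be ribbon, so that naturality of the braiding and the isotopy invariance of tangle diagrams apply. This is by now standard for the unrolled restricted quantum group (see the references cited in the introduction for its use in knot theory), and once granted, the entire argument of Proposition~\ref{repthtwo} transports verbatim, which completes the proof.
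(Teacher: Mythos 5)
Your proposal is correct and follows essentially the same route as the paper: the paper's own proof simply observes that the entire argument of Proposition~\ref{repthtwo} carries over verbatim once one uses the truncated $\RRR$-matrix~\eqref{Rmatroot} of $\Uqu$ (with $\EE^p=\FF^p=0$) and applies the shift $\alpha_{l/r}\to\alpha_{l/r}+p$ in the definition of $\VV_{\alpha_{l/r}}$, which is exactly what you do. Your added remarks on the pivotal/ribbon structure of the $\Uqu$-module category and on $\HH$-weight preservation are sound elaborations of points the paper leaves implicit.
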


\begin{proof}
All the formalism and results that we have introduced for $\Uq$ also apply to $\Uqu$, the only modification being that the $\RRR$-matrix of $\Uqu$ \cite{geermodtr0}
\begin{equation}
\label{Rmatroot}
\RRR=\qa^{\frac{\HH\otimes\HH}{2}}\sum_{k=0}^{p-1}\frac{\{1\}^{2k}}{\{k\}!}\qa^{k(k-1)/2}\EE^k\otimes \FF^k \,,
\end{equation}
is truncated at order $p$ because of the relations $\EE^p=\FF^p=0$. Apart from this, the proof is exactly the same, up to the usual shift $\alpha_{l/r}\to\alpha_{l/r}+p$ in the definition of $\VV_{\alpha_{l/r}}$ at roots of unity.
\end{proof}

Propositions \ref{repthtwo} and \ref{repthtworoot} are weaker than their one-boundary counterparts, Propositions \ref{blobrepqgen} and \ref{blobreproot}, because we were unable to prove that $\Uq$ (or $\Uqu$) and $\utwoBlob$ are mutual maximal centralisers, nor to compute the decomposition of $\Hbb$ into irreducible $\utwoBlob$-modules. 

\subsection{Decomposition of $\Hbb$ into $\twoBlob$-modules with different values of $Y$}
\label{twoblobdec}

Let us now relate the representation of the universal two-boundary TL algebra $\utwoBlob$ that we have constructed to some representations of $\twoBlob$ for different numerical values of $Y$. 

To do so, we simply have to decompose $\Hbb$ into $\CCC$-eigenspaces. In each such eigenspace $Y$ will act as a complex number, and we will thus obtain a representation of the ordinary two-boundary TL algebra $\twoBlob$.

Let us start with generic $\qa$. We first compute the $\Uq$-decomposition of $\Hbb$. For this, we bring the two Verma modules $\VV_{\alpha_l}$ and $\VV_{\alpha_r}$ together to the left using the braiding \eqref{Rprop} and then apply the fusion rule for Verma modules of generic weights $\qa^\alpha,\qa^\beta,\qa^{\alpha+\beta}\notin\pm\qa^\ZZ $
\begin{equation}
\label{fusverma}
\VV_\alpha\otimes\VV_\beta\cong\bigoplus_{n\geq 0}\VV_{\alpha+\beta-1-2n}
\end{equation}
which can be obtained, for example, by recursively constructing highest-weight vectors from each $\VV_{\alpha+\beta-1-2n}$ and matching the dimensions of the weight spaces (see \cite{JACKSON20111689} for an explicit construction). Then, assuming $\qa^{\alpha_l}, \qa^{\alpha_r}, \qa^{\alpha_l+\alpha_r}$ are generic, we have
\begin{equation}
\label{2buqdec}
\begin{aligned}
\Hbb & =\left(\bigoplus_{n\geq 0}\VV_{\alpha_l+\alpha_r-1-2n}\right)\otimes (\CC^2)^{\otimes N}\\
& =\bigoplus_{n\geq 0} \bigoplus_{k=0}^N {N\choose k} \VV_{\alpha_l+\alpha_r+N-1-2(n+k)}\\
& =\bigoplus_{m\geq -N/2} d_m \VV_{\alpha_l+\alpha_r-1-2m}=:\bigoplus_{m\geq -N/2}\Hs_m \,,
\end{aligned}
\end{equation}
where the dimensions of multiplicity spaces are
\begin{equation}
\label{dm}
d_m:=\sum_{k=0}^{\min(N,m+N/2)} {N\choose k}\,.
\end{equation}
In particular, $d_m=2^N$ for $m\geq N/2$. 

Now, using the fact that
\begin{equation*}
\CCC_{\VV_{\alpha}}=\qa^{\alpha}+\qa^{-\alpha} \,,
\end{equation*} 
we have
\begin{equation*}
\CCC_{\Hs_m}=\qa^{\alpha-(2m+1)}+\qa^{-\alpha+(2m+1)}\,,
\end{equation*}
so
\begin{equation*}
\begin{aligned}
Y_m:=Y_{\Hs_m} & =\frac{\qa^{\alpha_l+\alpha_r+1}+\qa^{-\alpha_l-\alpha_r-1}-\qa^{\alpha-(2m+1)}-\qa^{-\alpha+(2m+1)}}{\{\alpha_l\} \{\alpha_r\}}\\
& = \frac{[m+1]_\qa[\alpha_l+\alpha_r-m]_\qa}{[\alpha_l]_\qa [\alpha_r]_\qa}\,.
\end{aligned}
\end{equation*}

Finally, notice that since $\Uq$ commutes with $\utwoBlob$, the subspace $\tilde{\Hs}_m\subset\Hs_m$ of highest-weight vectors is stable by the action of $\twoBlobm$. Moreover, for all $k\geq 0$, $\FF^k\tilde{\Hs}_m\cong\tilde{\Hs}_m$ as $\twoBlobm$-modules, so even though we do not know if $\tilde{\Hs}_m$ is irreducible we can still write 
\begin{equation*}
\Hs_m=\tilde{\Hs}_m\otimes\VV_{\alpha_l+\alpha_r-1-2m}\,.
\end{equation*}

To sum up:
\begin{prop}
\label{2bpropqgen}
For $\qa\in\CC\backslash\qa^{i\pi\QQ}$, $\qa^{\alpha_l},\qa^{\alpha_r}\in\CC\backslash\{\pm\qa^\ZZ\}$ such that $\qa^{\alpha_l+\alpha_r}\notin\pm\qa^\ZZ$ and $N\in 2\mathbb{N}^*$, $\Hbb$ decomposes as a $(\utwoBlob,\Uq)$-bimodule
\begin{equation}
\Hbb=\bigoplus_{m\geq -N/2}\tilde{\Hs}_m\otimes\VV_{\alpha_l+\alpha_r-1-2m}\,.
\end{equation}
Moreover, for all $m\geq -N/2$, $\tilde{\Hs}_m$ is a $d_m$-dimensional representation of the two-boundary Temperley-Lieb algebra $\twoBlobm$ with
\begin{equation}
Y_m =\frac{[m+1]_\qa[\alpha_l+\alpha_r-m]_\qa}{[\alpha_l]_\qa [\alpha_r]_\qa}\,.
\end{equation}
\end{prop}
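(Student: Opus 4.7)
The plan is to combine the $\Uq$-module decomposition of $\Hbb$ with Proposition \ref{repthtwo}, which identifies the universal element $Y$ with an explicit function of the Casimir $\CCC_{\Hbb}$. Since $\CCC$ commutes with both $\Uq$ and (by Proposition \ref{repthtwo}) with the whole $\utwoBlob$-action, it is enough to decompose $\Hbb$ into joint $(\Uq,\CCC)$-isotypic components and evaluate $Y$ on each.

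First I would obtain the $\Uq$-decomposition \eqref{2buqdec}. Using the braiding isomorphism $P_{(\CC^2)^{\otimes N},\VV_{\alpha_r}}\circ\RRR_{(\CC^2)^{\otimes N},\VV_{\alpha_r}}$ built from the universal $R$-matrix \eqref{Rmatrix}, I would realise a $\Uq$-linear isomorphism $\Hbb\cong \VV_{\alpha_l}\otimes\VV_{\alpha_r}\otimes(\CC^2)^{\otimes N}$. Next I would apply the Verma fusion rule \eqref{fusverma} to the leftmost factor, and then iterate the rule $\VV_\beta\otimes\CC^2\cong\VV_{\beta+1}\oplus\VV_{\beta-1}$ (valid under our genericity assumption $\qa^{\alpha_l+\alpha_r}\notin\pm\qa^{\ZZ}$, which guarantees that every Verma summand appearing is itself irreducible). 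Reindexing the resulting double sum by $m=n+k-N/2$ gives
\begin{equation*}
\Hbb\cong\bigoplus_{m\geq -N/2}d_m\,\VV_{\alpha_l+\alpha_r-1-2m}=:\bigoplus_{m\geq -N/2}\Hs_m,
\end{equation*}
with $d_m$ as in~\eqref{dm}.

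Next I would compute $Y_m$. By Schur's lemma and \eqref{casval}, on each irreducible summand $\VV_{\alpha_l+\alpha_r-1-2m}$ the Casimir acts as the scalar $\qa^{\alpha_l+\alpha_r-1-2m}+\qa^{-\alpha_l-\alpha_r+1+2m}$. Substituting this into the expression for $Y$ from Proposition \ref{repthtwo} and regrouping,
\begin{equation*}
\qa^{\alpha_l+\alpha_r+1}-\qa^{\alpha_l+\alpha_r-1-2m}+\qa^{-\alpha_l-\alpha_r-1}-\qa^{-\alpha_l-\alpha_r+1+2m}=\{m+1\}\{\alpha_l+\alpha_r-m\},
\end{equation*}
so dividing by $\{\alpha_l\}\{\alpha_r\}$ and converting $\{x\}=\{1\}[x]_\qa$ yields the formula for $Y_m$.

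Finally, since $\utwoBlob$ commutes with $\Uq$, the subspace $\tilde{\Hs}_m\subset\Hs_m$ of $\Uq$-highest-weight vectors (of dimension $d_m$) is stable under the $\utwoBlob$-action, and on $\tilde{\Hs}_m$ the central element $Y$ acts as the scalar $Y_m$. Consequently $\tilde{\Hs}_m$ becomes a representation of the ``standard'' two-boundary TL algebra $\twoBlobm$, and irreducibility of $\VV_{\alpha_l+\alpha_r-1-2m}$ together with the isomorphisms $\FF^k\tilde{\Hs}_m\cong\tilde{\Hs}_m$ of $\twoBlobm$-modules gives the bimodule factorisation $\Hs_m=\tilde{\Hs}_m\otimes\VV_{\alpha_l+\alpha_r-1-2m}$. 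The only non-routine step is ensuring that the Verma fusion rule \eqref{fusverma} genuinely holds under our genericity hypothesis (beyond merely matching weight multiplicities), which is why the argument really requires $\qa^{\alpha_l+\alpha_r}\notin\pm\qa^{\ZZ}$; the rest is direct bookkeeping.
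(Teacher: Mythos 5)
Your proposal is correct and follows essentially the same route as the paper: braid $\VV_{\alpha_r}$ to the left, apply the Verma fusion rule \eqref{fusverma} followed by iterated $\VV_\beta\otimes\CC^2\cong\VV_{\beta+1}\oplus\VV_{\beta-1}$, reindex to get $\bigoplus_m d_m\VV_{\alpha_l+\alpha_r-1-2m}$, evaluate the Casimir on each isotypic component to read off $Y_m$ from Proposition~\ref{repthtwo}, and pass to the highest-weight subspaces $\tilde{\Hs}_m$ for the bimodule factorisation. The algebraic simplification of the numerator to $\{m+1\}\{\alpha_l+\alpha_r-m\}$ checks out, and your remark that the genericity hypothesis is what validates \eqref{fusverma} matches the paper's appeal to the explicit highest-weight construction.
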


If $\qa=e^{\frac{i\pi}{p}}$ is a $2p$-th root of unity some adjustments are needed. We can still use the braidings \eqref{Rprop}  defined by the $\RRR$-matrix \eqref{Rmatroot} to bring $\VV_{\alpha_l}$ and $\VV_{\alpha_r}$ together to the left but the fusion rule \eqref{fusverma} now becomes 
\begin{equation}
\VV_\alpha\otimes\VV_\beta=\bigoplus_{n=0}^{p-1}\VV_{\alpha+\beta+p-1-2n}
\end{equation}
for $\alpha,\beta,\alpha+\beta\in\CC\backslash\ZZ$ \cite{geermodtr0}. Second, because $\qa^{2p}=1$, we have
\begin{equation}
\CCC_{\VV_{\alpha+2p}}=-\qa^\alpha-\qa^{-\alpha}=\CCC_{\VV_{\alpha}} \,,
\end{equation}
so modules $\VV_{\alpha}$ with $\alpha$'s differing by a multiple of $2p$ will have the same value of~$Y$.

Now write $N=qp+r$, $q\in\NN$, $0\leq r\leq p-1$. Assuming $\alpha_l,\alpha_r,\alpha_l+\alpha_r\in\CC\backslash\ZZ$, we have
\begin{equation}
\label{Ydecomproot}
\begin{aligned}
\Hbb & =\left(\bigoplus_{n=0}^{p-1}\VV_{\alpha_l+\alpha_r+p-1-2n}\right)\otimes (\CC^2)^{\otimes N}\\
& =\bigoplus_{n= 0}^{p-1} \bigoplus_{k=0}^N {N\choose k} \VV_{\alpha_l+\alpha_r+N+p-1-2(n+k)}=:\bigoplus_{m=0}^{p-1} \underline{\Hs}_m \,,
\end{aligned}
\end{equation}
where
\begin{equation}
\label{2bd1}
\begin{aligned}
\underline{\Hs}_m:= & \left(\sum_{k=0}^{m}{N\choose k}\right)\VV_{\alpha_l+\alpha_r+N+p-1-2m}\\
&\oplus\, \bigoplus_{s=1}^{q-1}\left(\sum_{k=0}^{p-1}{N\choose ps-k+m}\right)\VV_{\alpha_l+\alpha_r+N+p-1-2m-2ps}\\
& \oplus\left(\sum_{k=0}^{r-m+p-1}{N\choose N-k}\right)\VV_{\alpha_l+\alpha_r+N+p-1-2m-2pq}=: \bigoplus_{s=0}^{q}\underline{\Hs}_m^{(s)}
\end{aligned}
\end{equation}
for $m\geq r$, and
\begin{equation}
\label{2bd2}
\begin{aligned}
\underline{\Hs}_m:= & \left(\sum_{k=0}^{m}{N\choose k}\right)\VV_{\alpha_l+\alpha_r+N+p-1-2m}\\
&\oplus\,  \bigoplus_{s=1}^{q}\left(\sum_{k=0}^{p-1}{N\choose ps-k+m}\right)\VV_{\alpha_l+\alpha_r+N+p-1-2m-2ps}\\
& \oplus \left(\sum_{k=0}^{r-m-1}{N\choose N-k}\right)\VV_{\alpha_l+\alpha_r+N+p-1-2m-2p(q+1)}=: \bigoplus_{s=0}^{q+1}\underline{\Hs}_m^{(s)}
\end{aligned}
\end{equation}
for $m<r$. From this we immediately see that $\dim \underline{\Hs}_m=p\cdot 2^N$. Moreover, assuming $N$ even and denoting $\bar{m}:=m-N/2$ mod $p$, we have
\begin{equation*}
\begin{aligned}
Y_m:=Y_{\underline{\Hs}_m} & =\frac{\qa^{\alpha_l+\alpha_r+1}+\qa^{-\alpha_l-\alpha_r-1}-\qa^{\alpha_l+\alpha_r+N-1-2m}-\qa^{-\alpha_l-\alpha_r-N+1+2m}}{\{\alpha_l\} \{\alpha_r\}}\\
& = \frac{[\bar{m}+1]_\qa[\alpha_r+\alpha_l-\bar{m}]_\qa}{[\alpha_r]_\qa [\alpha_l]_\qa}\,.
\end{aligned}
\end{equation*}
Note that $Y$ now takes only $p$ distinct values.

As before, introducing the subspace $\tilde{\underline{\Hs}}_m\subset\underline{\Hs}_m$ of highest-weight vectors, we have that $\FF^k\tilde{\underline{\Hs}}_m\cong\tilde{\underline{\Hs}}_m$ as $\twoBlobm$-modules for all $0\leq k\leq p-1$. Note, however, that by decompositions \eqref{2bd1}-\eqref{2bd2} and since $\Uqu$ commutes with $\utwoBlob$, $\tilde{\underline{\Hs}}_m$ is reducible and decomposes  into a direct sum of smaller spaces $\tilde{\underline{\Hs}}_m^{(s)}$. As we do not know whether $\Uqu$ is the full centraliser or not, we cannot say whether the $q+1$ (resp.\ $q+2$) summands $\tilde{\underline{\Hs}}_m^{(s)}$ within $\tilde{\underline{\Hs}}_m$ appearing in \eqref{2bd1} (resp.\ \eqref{2bd2}) are indeed the irreducible $\twoBlobm$-summands of $\tilde{\underline{\Hs}}_m$ or if they are further decomposed into irreducible submodules. In any case, we can still write
\begin{equation*}
\underline{\Hs}_m^{(s)}=\tilde{\underline{\Hs}}_m^{(s)}\otimes\VV_{\alpha_l+\alpha_r+N+p-1-2m-2ps}\,.
\end{equation*}

To sum up, we have the following statement.

\begin{prop}
\label{2bpropqroot}
For $\qa=e^{\frac{i\pi}{p}}$ with $p\in\NN\backslash\{0,1\}$, $\alpha_l,\alpha_r\in\CC\backslash\ZZ$ such that $\alpha_l+\alpha_r\notin\ZZ$ and $N=qp+r\in 2\mathbb{N}^*$ with $q\in\NN$, $0\leq r\leq p-1$, $\Hbb$ decomposes as a $(\utwoBlob,\Uqu)$-bimodule
\begin{equation}
\Hbb=\bigoplus_{m=0}^{p-1}\bigoplus_{s=0}^{q+\epsilon_m}\tilde{\underline{\Hs}}_m^{(s)}\otimes\VV_{\alpha_l+\alpha_r+N+p-1-2m-2ps} \,,
\end{equation}
where $\epsilon_m=0$ if $m\geq r$, and $\epsilon_m=1$ if $m<r$. Moreover, for all $0\leq m\leq p-1$ and $0\leq s\leq q+\epsilon_m$, $\tilde{\underline{\Hs}}_m^{(s)}$ is a representation of the two-boundary Temperley-Lieb algebra $\twoBlobm$ with
\begin{equation}
Y_m =\frac{[\bar{m}+1]_\qa[\alpha_l+\alpha_r-\bar{m}]_\qa}{[\alpha_l]_\qa [\alpha_r]_\qa}\,,\qquad \bar{m}:=m-N/2 ~\text{mod}~p\,,
\end{equation}
and whose dimension is given by the multiplicities in \eqref{2bd1}-\eqref{2bd2}. In particular,
\begin{equation}
\label{hmdec}
\tilde{\underline{\Hs}}_m:=\bigoplus_{s=0}^{q+\epsilon_m}\tilde{\underline{\Hs}}_m^{(s)}
\end{equation}
is a reducible $2^N$-dimensional representation of $\twoBlobm$.
\qed
\end{prop}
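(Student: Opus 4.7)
The plan is to mirror the strategy used for Proposition~\ref{2bpropqgen}, replacing generic-$\qa$ ingredients with their root-of-unity analogues already established earlier in the paper. First, I would invoke Proposition~\ref{repthtworoot} to note that $\utwoBlob$ acts on $\Hbb$ and commutes with the $\Uqu$-action, and that the central element $Y$ is expressed as an affine function of the Casimir $\CCC_{\Hbb}$. This reduces the task to decomposing $\Hbb$ simultaneously under $\Uqu$ and under $\CCC$.

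Next, I would compute the $\Uqu$-decomposition of $\Hbb$. Using the $\Uqu$-braiding coming from the truncated $\RRR$-matrix~\eqref{Rmatroot} (valid because the $\VV_{\alpha_{l/r}}$ are finite-dimensional), I would move $\VV_{\alpha_l}$ and $\VV_{\alpha_r}$ next to each other and apply the root-of-unity Verma-fusion rule $\VV_{\alpha_l}\otimes\VV_{\alpha_r}\cong\bigoplus_{n=0}^{p-1}\VV_{\alpha_l+\alpha_r+p-1-2n}$, valid under the genericity assumption $\alpha_l,\alpha_r,\alpha_l+\alpha_r\notin\ZZ$. Then I would iterate the fusion rule $\VV_\alpha\otimes\CC^2\cong\VV_{\alpha+1}\oplus\VV_{\alpha-1}$ $N$ times: this gives exactly the decomposition~\eqref{Ydecomproot}, and regrouping isomorphic Verma summands using $\VV_{\beta}\cong\VV_{\beta+2p}$ on the Casimir level (but \emph{not} as $\Uqu$-modules, see the discussion around~\eqref{2bd1}--\eqref{2bd2}) yields the block structure $\Hbb=\bigoplus_{m=0}^{p-1}\underline{\Hs}_m$ with $\underline{\Hs}_m=\bigoplus_{s=0}^{q+\epsilon_m}\underline{\Hs}_m^{(s)}$ and the multiplicities as stated.

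Now I would compute $Y_m$ on each $\underline{\Hs}_m$. By Schur's lemma, on an irreducible $\VV_\gamma$ the Casimir acts as $-\qa^{\gamma}-\qa^{-\gamma}$ (with the shift by $p$ coming from the definition~\eqref{alpharep}), and all summands in $\underline{\Hs}_m$ share the same Casimir eigenvalue because their weights differ by multiples of $2p$. Plugging this into the formula for $Y$ from Proposition~\ref{repthtworoot} gives $Y_m=\{[\bar m+1]_\qa[\alpha_l+\alpha_r-\bar m]_\qa\}/\{[\alpha_l]_\qa[\alpha_r]_\qa\}$ with $\bar m=m-N/2\bmod p$, as claimed. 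Thus on $\underline{\Hs}_m$ the central element $Y$ acts by a scalar, so the $\utwoBlob$-action restricts to an action of the quotient $\twoBlobm$.

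Finally, I would use that $\utwoBlob$ commutes with $\Uqu$ to pass to highest-weight vectors: for each fixed $\Uqu$-summand $\VV_{\gamma}$ appearing in $\underline{\Hs}_m^{(s)}$, the space $\tilde{\underline{\Hs}}_m^{(s)}$ of highest-weight vectors of weight $\gamma-1$ is stable under $\twoBlobm$ (using that $\Uqu$ acts on the second tensor factor and $\twoBlobm$ on the first), and $\FF^k\tilde{\underline{\Hs}}_m^{(s)}\cong\tilde{\underline{\Hs}}_m^{(s)}$ as $\twoBlobm$-modules for $0\le k\le p-1$. This gives the bimodule factorisation $\underline{\Hs}_m^{(s)}\cong\tilde{\underline{\Hs}}_m^{(s)}\otimes\VV_{\alpha_l+\alpha_r+N+p-1-2m-2ps}$; summing over $m,s$ produces the claimed bimodule decomposition, and setting $\tilde{\underline{\Hs}}_m:=\bigoplus_s\tilde{\underline{\Hs}}_m^{(s)}$ yields a $2^N$-dimensional $\twoBlobm$-representation since $\dim\underline{\Hs}_m=p\cdot 2^N$ and $\dim\VV_\gamma=p$. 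The main subtlety is purely conceptual rather than technical: unlike in the generic-$\qa$ case of Proposition~\ref{blobrepqgen}, I cannot claim irreducibility of the $\tilde{\underline{\Hs}}_m^{(s)}$, because $\Uqu$ is not known to be the full centraliser of $\utwoBlob$ here—this is exactly the obstruction flagged in Conjecture~\ref{conj2}—so the statement has to stop at the bimodule decomposition and the scalar value of $Y_m$ without asserting more.
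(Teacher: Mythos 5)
Your proposal is correct and follows essentially the same route as the paper: braiding the two boundary modules together, applying the truncated root-of-unity fusion rule, grouping Verma summands by Casimir eigenvalue (correctly noting they are not isomorphic as $\Uqu$-modules even though $Y$ coincides), evaluating $Y_m$ via the Casimir, and passing to highest-weight subspaces with $\FF^k\tilde{\underline{\Hs}}_m^{(s)}\cong\tilde{\underline{\Hs}}_m^{(s)}$ for $0\le k\le p-1$. You also correctly identify that irreducibility of the $\tilde{\underline{\Hs}}_m^{(s)}$ cannot be claimed and is deferred to Conjecture~\ref{conj2}, exactly as in the paper.
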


\medskip

Unfortunately, the representation theory of $\twoBlob$ is not completely understood even for generic $\qa$ but non-generic values of the weights $\delta$, $y_l$, $y_r$ and $Y$, as is the case here, where the $Y_m$ (which are functions of $\delta$, $y_l$ and $y_r$) take exactly the ``bad'' values at which standard $\twoBlob$-modules become reducible (see \cite[Corollary 5.18]{de_Gier_2009} and \cite[A.4.4]{dubail:tel-00555624}), so proving stronger statements than Propositions \ref{2bpropqgen} and~\ref{2bpropqroot} requires a special study which we leave for another work.

\subsection{Some conjectures on $\tilde{\Hs}_m$ and $\tilde{\underline{\Hs}}_m^{(s)}$}
\label{conjsec}

Let us finish by making some reasonable conjectures. It was shown in \cite[Corollary 5.18]{de_Gier_2009} that for generic $\qa$ the $2^N$-dimensional vacuum module $\WW_0$ of $\twoBlobm$ is non-irreducible for $-N/2\leq m\leq N/2-1$, and irreducible for $m\geq N/2$. But we know that $\tilde{\Hs}_m$ from Proposition~\ref{2bpropqgen} is of dimension $2^N$ exactly for $m\geq N/2$. Therefore it is tempting to identify it with the vacuum module of $\twoBlobm$. As for the other values of $m$, it was shown in \cite{de_Gier_2009} that the (non-irreducible) vacuum module $\WW_0$ of $\twoBlobm$ contains a unique non-trivial stable subspace $\overline{\WW}_m$, of dimension equal to $d_m$ from~\eqref{dm}, for $-N/2\leq m\leq -1$, and equal to $2^N-d_m$, for $0\leq m\leq N/2-1$, and that the quotient $\WW_0/\overline{\WW}_m$ is irreducible. This is a strong indication that $\tilde{\Hs}_m$ is isomorphic to $\overline{\WW}_m$ for $-N/2\leq m\leq -1$ and to $\WW_0/\overline{\WW}_m$ for $0\leq m\leq N/2-1$.

The submodules $\overline{\WW}_m$ can actually be described more explicitly. Similarly to the blob algebra, one can construct irreducible standard two-boundary modules with $2j$ through lines, $1\leq j\leq N/2$, but for which there are now four possible choices depending on whether the rightmost and leftmost through lines carry a blob or an anti-blob, denoted $\WW_{j}^{\bullet\sqq}$,  $\WW_{j}^{\circ\sqq}$, $\WW_{j}^{\bullet\usqq}$ and $\WW_{j}^{\circ\usqq}$. As the presence of through lines prohibits the formation of a loop touching both boundaries, these modules are independent of $Y$. Nevertheless, for non-generic values of $Y$, they appear as stable subspaces of the vacuum module $\WW_0$ (which does depend on $Y$). More precisely, it was conjectured in \cite{de_Gier_2009} and proven in \cite[A.4.4]{dubail:tel-00555624}, that $\overline{\WW}_m\cong\WW_{-m}^{\bullet\sqq}$ and $\overline{\WW}_m\cong\WW_{m+1}^{\circ\usqq}$ as $\twoBlobm$-modules for $-N/2\leq m\leq -1$ and $0\leq m\leq N/2-1$ respectively.\footnote{There are similar isomorphisms for $\WW_{j}^{\circ\sqq}$ and $\WW_{j}^{\bullet\usqq}$ but we will not need them here.} We thus arrive at the following conjecture.

\begin{conj}
\label{conj1}
For $\qa\in\CC\backslash\qa^{i\pi\QQ}$, $\qa^{\alpha_l},\qa^{\alpha_r}\in\CC\backslash\{\pm\qa^\ZZ\}$ such that $\qa^{\alpha_l+\alpha_r}\notin\pm\qa^\ZZ$ and $N\in 2\mathbb{N}^*$, the $\twoBlobm$-modules $\tilde{\Hs}_m$ from  Proposition~\ref{2bpropqgen}  are given by 
\begin{equation}
\begin{aligned}
\tilde{\Hs}_m & \cong\WW_{-m}^{\bullet\sqq} \qquad\qquad\;\, \text{for}~-N/2\leq m\leq -1\,,\\
\tilde{\Hs}_m & \cong\WW_0/\WW_{m+1}^{\circ\usqq} \qquad \text{for}~0\leq m\leq N/2-1\,,\\
\tilde{\Hs}_m & \cong\WW_0 \qquad\qquad\quad \text{for}~ N/2\leq m\,.
\end{aligned}
\end{equation}
In particular, they are irreducible, $\twoBlobm$ and $\Uq$ are mutual maximal centralisers on each $\Hs_m$  for all $-N/2\leq m$ and $\utwoBlob$ and $\Uq$ are mutual maximal centralisers on $\Hbb$.
\end{conj}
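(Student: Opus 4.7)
The plan is to imitate the one-boundary argument of Proposition~\ref{blobrepqgen}, proceeding in three stages: dimension matching, construction of a canonical $\twoBlobm$-intertwiner from the appropriate standard module, and identification using the representation theory of $\twoBlobm$ at the non-generic values $Y_m$.

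For dimensions, Proposition~\ref{2bpropqgen} gives $\dim\tilde{\Hs}_m=d_m$. Using the formulas of \cite{de_Gier_2009, dubail:tel-00555624}: $\dim\WW_{-m}^{\bullet\sqq}=d_m$ for $-N/2\leq m\leq -1$; $\dim(\WW_0/\WW_{m+1}^{\circ\usqq})=d_m$ for $0\leq m\leq N/2-1$; and $\dim\WW_0=2^N=d_m$ for $m\geq N/2$. Next, for each $m$ one constructs a natural $\twoBlobm$-intertwiner: $\Phi_m\colon\WW_{-m}^{\bullet\sqq}\to\tilde{\Hs}_m$ for $-N/2\leq m\leq -1$, and $\Phi_m\colon\WW_0\to\tilde{\Hs}_m$ for $m\geq 0$. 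In each case, one picks a $\Uq$-highest-weight vector $\ket{\Omega_m}\in\tilde{\Hs}_m$ annihilated by the appropriate blob or anti-blob projectors (the existence of such a vector follows inductively from the fusion rule~\eqref{fusverma} together with the explicit formulas for $b_l$ and $b_r$ from Proposition~\ref{repthtwo}), and sends the ``vacuum'' link state of the source module to $\ket{\Omega_m}$, extending by the diagrammatic action. Well-definedness reduces to checking the two-blob loop relation~\eqref{2bloop}, which holds because $Y$ acts on $\Hs_m$ by the scalar $Y_m$.

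The last stage is identification of $\tilde{\Hs}_m$ with the claimed subquotient. For $m\geq N/2$, $\WW_0$ is irreducible at $Y_m$, so injectivity of $\Phi_m$ together with dimension matching yields $\tilde{\Hs}_m\cong\WW_0$. For $0\leq m\leq N/2-1$, surjectivity of $\Phi_m$ combined with the uniqueness of the non-trivial proper submodule of $\WW_0$ at $Y_m$ (see \cite[A.4.4]{dubail:tel-00555624}) forces $\ker\Phi_m\cong\WW_{m+1}^{\circ\usqq}$. For $-N/2\leq m\leq -1$, injectivity of $\Phi_m$ from the irreducible $\WW_{-m}^{\bullet\sqq}$ together with dimension matching gives $\tilde{\Hs}_m\cong\WW_{-m}^{\bullet\sqq}$. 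The main obstacle lies precisely here: the values $Y_m=[m+1]_\qa[\alpha_l+\alpha_r-m]_\qa/([\alpha_l]_\qa[\alpha_r]_\qa)$ are exactly the non-generic ones at which $\twoBlobm$ is non-semisimple, so Schur--Weyl-style or character-theoretic arguments are unavailable and one must verify injectivity or surjectivity by hand. The most promising route is a two-sided induction in $N$, alternately peeling off a site from the left and the right and invoking Proposition~\ref{blobrepqgen} together with the maps $\rho_j^\pm$ from its proof, in order to build inductively a basis of $\tilde{\Hs}_m$ whose image under $\Phi_m$ is visibly a basis of the target subquotient. Once the identifications hold, irreducibility of $\tilde{\Hs}_m$ is inherited from the target, the $(\twoBlobm,\Uq)$-bimodule decomposition becomes multiplicity-free, and the global centraliser statement for $\utwoBlob$ and $\Uq$ on $\Hbb$ follows by summing over~$m$.
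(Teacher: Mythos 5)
The statement you are addressing is stated in the paper as a \emph{conjecture}, not a theorem: the authors explicitly write that the representation theory of $\twoBlobm$ at the non-generic values $Y_m$ is not sufficiently understood and that proving anything stronger than Propositions~\ref{2bpropqgen} and~\ref{2bpropqroot} ``requires a special study which we leave for another work''. The only support the paper offers is exactly the evidence you reproduce in your first stage: the dimension match between $\tilde{\Hs}_m$ and the relevant subquotient of the vacuum module $\WW_0$, together with the results of \cite{de_Gier_2009} and \cite[A.4.4]{dubail:tel-00555624} identifying the unique non-trivial submodule of $\WW_0$ at $Y=Y_m$ with $\WW_{-m}^{\bullet\sqq}$ or $\WW_{m+1}^{\circ\usqq}$. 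So there is no proof in the paper to compare against, and your proposal must be judged on whether it actually closes the gap the authors left open.

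It does not. The two steps that would turn the heuristic into a proof are precisely the ones you defer. First, the existence of a vector $\ket{\Omega_m}\in\tilde{\Hs}_m$ realising the generating link state of $\WW_{-m}^{\bullet\sqq}$ (resp.\ of $\WW_0$) is asserted to ``follow inductively from the fusion rule'', but the required property is much stronger than being a $\Uq$-highest-weight vector killed by some projectors: one must verify that $\ket{\Omega_m}$ satisfies \emph{all} the relations obeyed by the cyclic vector of the standard module (for modules with through lines these involve the $e_i$'s and both blobs simultaneously), which is exactly the kind of explicit computation that the one-boundary proof of Proposition~\ref{blobrepqgen} performs via the maps $\rho_j^\pm$ and which has no written analogue here. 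Second, and more seriously, the injectivity/surjectivity of $\Phi_m$ is the point where you yourself concede that Schur--Weyl-style or character-theoretic arguments are unavailable and that a two-sided induction is ``the most promising route''; declaring a route promising is not carrying it out. Since $Y_m$ sits exactly at the degenerate values where $\WW_0$ acquires a proper submodule, a nonzero intertwiner out of $\WW_0$ need not be surjective, and the cyclicity of $\ket{\Omega_m}$ in $\tilde{\Hs}_m$ must be established independently. Until those two steps are supplied, what you have is a plausible strategy --- essentially an elaboration of the authors' own motivation for the conjecture --- not a proof.
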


This conjecture is definitely true for $\tilde{\Hs}_{-N/2}=\CC\ket{0}\otimes\ket{\uparrow}^{\otimes N}\otimes\ket{0}$ and we believe it should hold for $m\geq 1-N/2$ as well. Note also that independently of its validity, we can say for sure that the representations $\tilde{\Hs}_m$ are not faithful, because their dimension is too small to contain all possible irreducible $\twoBlobm$-modules.

For $\qa$ a root of unity even less is known about the representation theory of $\twoBlobm$. Nevertheless, the spaces $\tilde{\underline{\Hs}}_m$ are $2^N$-dimensional, which makes it plausible that they are again related to the vacuum module $\WW_0$. It is however clear that $\tilde{\underline{\Hs}}_m$ cannot be isomorphic to $\WW_0$ because according to \eqref{hmdec} it decomposes into a direct sum of $\twoBlobm$-modules, whereas $\WW_0$ is expected to have a more complicated indecomposable structure. Still, we can consider a semi-simplified version of $\WW_0$ obtained by treating all subquotients as independent direct summands. We then arrive at the following conjecture.

\begin{conj}
\label{conj2}
For $\qa=e^{\frac{i\pi}{p}}$ with $p\in\NN\backslash\{0,1\}$, $\alpha_l,\alpha_r\in\CC\backslash\ZZ$ such that $\alpha_l+\alpha_r\notin\ZZ$, the module $\tilde{\underline{\Hs}}_m$ from Proposition~\ref{2bpropqroot} is isomorphic to the $2^N$-dimensional semi-simplified vacuum module of $\twoBlobm$. In particular, the $\twoBlobm$-modules $\tilde{\underline{\Hs}}_m^{(s)}$ are irreducible for all $0\leq m\leq p-1$, $0\leq s\leq q+\epsilon_m$, and $\twoBlobm$ and $\Uqu$ are mutual maximal centralisers on each $\underline{\Hs}_m$. Consequently,  $\utwoBlob$ and $\Uqu$ are mutual maximal centralisers on $\Hbb$.
\end{conj}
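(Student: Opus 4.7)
The plan is to extend the inductive approach used in Proposition~\ref{blobreproot} to the two-boundary setting, combined with a careful $Y$-eigenspace analysis of the vacuum module $\WW_0$ of $\twoBlobm$ at the exceptional values $Y=Y_m$ produced by \eqref{Ydecomproot}. The first step is to determine the composition series of $\WW_0$ at these specific $Y_m$. For generic $\qa$ (Conjecture~\ref{conj1}) this series has only two factors --- a submodule isomorphic to $\WW_{-m}^{\bullet\sqq}$ or $\WW_{m+1}^{\circ\usqq}$ and the irreducible quotient by it --- by the results of~\cite{de_Gier_2009, dubail:tel-00555624}. At $\qa=e^{i\pi/p}$, the additional $p$-periodicity $\qa^{2p}=1$ should lengthen the series, producing a succession of standard modules $\WW_j^{\bullet\sqq}$ and $\WW_j^{\circ\usqq}$ for $j$ in an arithmetic progression with step $p$ determined by $\bar m$. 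The semi-simplified vacuum module is, by definition, the direct sum of these composition factors.

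The second step is an induction on $N$ analogous to the proof of Proposition~\ref{blobreproot}. For each $\tilde{\underline{\Hs}}_m^{(s)}$ (which consists of $\Uqu$-highest-weight vectors in a fixed weight space of $\Hbb$) one constructs explicit generators built from the vectors $\rho_j^\pm(v)$ of~\eqref{phip}-\eqref{phim}, extended symmetrically to incorporate the right Verma factor $\VV_{\alpha_r}$ via the braiding from Proposition~\ref{repthtworoot}. These vectors provide embeddings of the standard $\twoBlobm$-modules identified in Step~1 into $\tilde{\underline{\Hs}}_m^{(s)}$. Comparing their dimensions with the multiplicities in~\eqref{2bd1}-\eqref{2bd2} and with the known dimensions of standard two-boundary TL modules from~\cite{de_Gier_2009} should match summand-by-summand; combined with the irreducibility of those standard modules at $Y=Y_m$ (which, even though non-generic for $\WW_0$, remains generic with respect to each fixed standard module), this would give $\tilde{\underline{\Hs}}_m^{(s)}$ irreducible and identify the whole $\tilde{\underline{\Hs}}_m$ with the semi-simplified vacuum module. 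The centraliser conclusion on $\underline{\Hs}_m$ is then a double commutant argument: with pairwise non-isomorphic simple factors $\tilde{\underline{\Hs}}_m^{(s)}$ on one side and pairwise non-isomorphic irreducible $\Uqu$-Verma modules $\VV_{\alpha_l+\alpha_r+N+p-1-2m-2ps}$ (distinct thanks to the $\HH$-weight labelling surviving $2p$-periodicity) on the other, both algebras act as mutual maximal centralisers. Summing over the $p$ values of $m$ extends this to $\utwoBlob$ and $\Uqu$ on all of $\Hbb$.

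The main obstacle is genuinely Step~1: pinning down the submodule lattice of $\WW_0$ at $Y=Y_m$ when $\qa$ is a root of unity. The analysis of~\cite{dubail:tel-00555624} is carried out at generic $\qa$, where only the degeneracy $Y=Y_m$ needs to be resolved; at $\qa=e^{i\pi/p}$ one must handle simultaneously this degeneracy together with the ``resonances'' produced by $\delta=\qa+\qa^{-1}$ and $y_{l/r}=[\alpha_{l/r}\pm1]_\qa/[\alpha_{l/r}]_\qa$ at a root of unity. These typically enlarge the class of indecomposable $\twoBlobm$-modules and can produce new non-trivial extensions between the standard modules that must be ruled out (or identified) inside $\WW_0$. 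A complete treatment would require developing the non-semisimple representation theory of $\twoBlob$ at roots of unity from scratch, which is precisely why the authors only state a conjecture; a pragmatic intermediate milestone would be to verify the proposed bimodule structure on $\Hbb$ case-by-case for small $N$ and $p$ and to confirm that the dimensions predicted by Step~2 are consistent with numerical diagonalisation.
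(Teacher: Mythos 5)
The statement you are addressing is stated in the paper as a \emph{conjecture}: Section~\ref{conjsec} offers no proof, only plausibility arguments (the dimension count $\dim\tilde{\underline{\Hs}}_m=2^N$ matching the vacuum module, the analogy with the generic-$\qa$ Conjecture~\ref{conj1} via \cite{de_Gier_2009, dubail:tel-00555624}, and consistency of the resulting partition functions with \cite{Dubail_2009} at $\qa=i$). Your proposal is therefore not being measured against an existing argument; it has to stand on its own, and it does not yet: it is a roadmap whose decisive step is exactly the open problem. Step~1 --- determining the submodule lattice of the vacuum module $\WW_0$ of $\twoBlobm$ when $\qa$ is a root of unity \emph{and} $Y$ sits at the degenerate values $Y_m$ --- is not available in the literature, and everything downstream (the identification of composition factors as standard modules in an arithmetic progression of step $p$, the irreducibility of each $\tilde{\underline{\Hs}}_m^{(s)}$, and the double-commutant conclusion) depends on it. You correctly name this obstruction yourself, which is to your credit, but it means the proposal is a research programme rather than a proof. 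Note also that your irreducibility claim for the through-line standard modules at $Y=Y_m$ cannot be waved through: those modules are independent of $Y$, but at $\qa=e^{i\pi/p}$ the loop weight $\delta=\qa+\qa^{-1}$ is itself resonant, so their irreducibility must be established separately even when $\alpha_l,\alpha_r,\alpha_l+\alpha_r\notin\ZZ$.

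Two further concrete difficulties with Step~2. First, the induction on $N$ that works for Proposition~\ref{blobreproot} does not transplant directly: the defining relation \eqref{2bloop} involving $Y$ only exists for even $N$, so restriction from $\twoBlob$ to the algebra on $N-1$ sites is not of the same nature as the clean branching $\Ress_{\Blobb}\WW^N_j=\WW^{N-1}_{j-\frac{1}{2}}\oplus\WW^{N-1}_{j+\frac{1}{2}}$ used there; you would need branching rules in steps of two sites, or a genuinely different induction. Second, the maps $\rho_j^{\pm}$ of \eqref{phip}--\eqref{phim} produce highest-weight vectors with respect to a coproduct that appends a site on the \emph{right}, whereas here the rightmost factor is $\VV_{\alpha_r}$; ``extending symmetrically via the braiding'' is plausible but the braiding \eqref{Rmatroot} mixes weight spaces nontrivially, so the claim that the resulting vectors span $\tilde{\underline{\Hs}}_m^{(s)}$ and carry the asserted $\twoBlobm$-module structure needs an actual computation. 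A realistic intermediate target, as you suggest, is a verification for small $N$ and $p$; but as written the proposal does not close the gap that makes this statement a conjecture in the first place.
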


The representations $\tilde{\Hs}_m$ are not expected to be faithful, but we cannot prove it rigorously at present, as the classification of irreducible $\twoBlobm$-modules is not sufficiently known.

We believe that the appearance of non-generic $\twoBlob$-modules in our spin chain is not accidental, and that this phenomenon actually plays an important role in the two-boundary model. Note also that contrary to the one-boundary model, the algebraic decompositions for generic $\qa$ and at roots of unity differ sensibly.

\paragraph{Example.}

At $\qa=i$, the weight of a left (resp.\ right) blobbed loop is given by $y_r=\cot\frac{\pi\alpha_l}{2}$ (resp.\ $y_r=\cot\frac{\pi\alpha_l}{2}$). By \eqref{weights2broot}, central element $Y$ corresponding to the weight of a loop with both blobs  can be expressed as
\begin{equation*}
Y=\frac{2\sin(\frac{\pi(\alpha_l+\alpha_r)}{2})+\CCC_{\Hbb}}{4\sin(\frac{\pi\alpha_l}{2})\sin(\frac{\pi\alpha_r}{2})}\,.
\end{equation*}
By the fusion rules of $\Uqu$, assuming $\alpha_l,\alpha_r,\alpha_l+\alpha_r\notin\ZZ$,
\begin{equation*}
\Hbb=\bigoplus_{k=0}^{N+1} {N+1\choose k} \VV_{\alpha_l+\alpha_r+N+1-2k} \,,
\end{equation*}
and we have
\begin{equation*}
\CCC_{\VV_{\alpha_l+\alpha_r+N+1-2k}}=2(-1)^{k+N/2}\sin(\frac{\pi(\alpha_l+\alpha_r)}{2}) \,,
\end{equation*}
so the only eigenvalues of $Y$ are $0$ and $y_{lr}:=y_l+y_r$. The corresponding eigenspace decomposition is given by
\begin{equation*}
\Hbb=\Ker(Y-y_{lr})\oplus\Ker(Y)=\left\{
    \begin{array}{ll}
        \underline{\Hs}_0\oplus\underline{\Hs}_1 & \text{if $N/2$ is even}\\
        \underline{\Hs}_1\oplus\underline{\Hs}_0 & \text{if $N/2$ is odd}
    \end{array}
\right.
\end{equation*}
with
\begin{equation}
\label{qidecomp}
\begin{aligned}
\underline{\Hs}_0 & :=\bigoplus_{k=0}^{N/2} {N+1\choose 2k} \VV_{\alpha_l+\alpha_r+N+1-4k}\,,\\
\underline{\Hs}_1 & :=\bigoplus_{k=0}^{N/2} {N+1\choose 2k+1} \VV_{\alpha_l+\alpha_r+N-1-4k}\,.
\end{aligned}
\end{equation}
Note that $\dim \underline{\Hs}_0=\dim \underline{\Hs}_1=2\cdot 2^{N}$, and that $\underline{\Hs}_0$ (resp. $\underline{\Hs}_1$) is the odd (resp. even) fermionic subspace of $\Hbb$.

Therefore, $\Hbb$ decomposes into two halves, one carrying a representation of the two-boundary TL algebra $2\BB_{0,y_l,y_r,0,N}$ and the other of $2\BB_{0,y_l,y_r,y_{lr},N}$, or, equivalently, $\Hbb$ carries a representation of the universal two-boundary TL algebra $2\BB_{0,y_l,y_r,N}^{\mathrm{uni}}$. Its image commutes with the $\Uqu$ action.

Now, taking the highest-weight subspaces $\tilde{\underline{\Hs}}_0$ and $\tilde{\underline{\Hs}}_1$, we have
\begin{equation}
\label{qimod}
\underline{\Hs}_0=\tilde{\underline{\Hs}}_0\oplus\FF\tilde{\underline{\Hs}}_0\,,\qquad \Hs_1=\tilde{\underline{\Hs}}_1\oplus \FF\tilde{\underline{\Hs}}_1
\end{equation}
as $2\BB_{0,y_l,y_r,y_{lr},N}$ and $2\BB_{0,y_l,y_r,0,N}$ modules respectively. Since the fermionic number operator (and not only its parity) also commutes with $2\BB_{0,y_l,y_r,N}^{\mathrm{uni}}$, we can further decompose $\tilde{\underline{\Hs}}_0$ and $\tilde{\underline{\Hs}}_1$ according to \eqref{qidecomp} where every multiplicity space is conjecturally an irreducible module over $2\BB_{0,y_l,y_r,N}^{\mathrm{uni}}$.

Finally, we note that  the centraliser of the $\Uqu$-action on $\Hbb$ can be described in terms of the blob algebra. Indeed, if $\alpha_l,\alpha_r,\alpha_l+\alpha_r\notin\ZZ$ we have
\begin{equation*}
\Hbb\cong\VV_{\alpha_l+\alpha_r}\otimes\left(\CC^2\right)^{\otimes (N+1)}
\end{equation*}
as representations of $\Uqu$. Therefore, by Proposition \ref{blobreproot}, $\Hbb$ carries a faithful action of $\BB_{0,y,N+1}$ with $y=\cot{\frac{\pi(\alpha_l+\alpha_r)}{2}}$, centralising the $\Uqu$-action. From these considerations also follows the rather curious fact that all $\BB_{0,y,N+1}$-modules can be realised as either some $2\BB_{0,y_l,y_r,0,N}$-module or some $2\BB_{0,y_l,y_r,y_{lr},N}$-module.

\section{Summary and open questions}
\label{conclusion}

In this paper, we have constructed new $\Uq$-invariant boundary conditions for the open XXZ spin chain using infinite-dimensional Verma modules, or their truncated finite-dimensional analogues at roots of unity. Using free fermions, we computed the spectra of our new one-boundary and two-boundary Hamiltonians in the simplest case $\qa=i$. We were then able to investigate the scaling limit, and to connect our model with the $(\eta,\xi)$ ghost CFT on the upper-half plane with some specific boundary conditions on the real axis. 

In the remainder of the paper, we studied in full generality the symmetry properties of our modified XXZ spin chains and, specifically, the representations of the various lattice algebras that they give rise to. We showed that the Hilbert space of our one-boundary system carries  a representation of the blob algebra $\Blob$, and that the actions of $\Uq$ (or $\Uqu$) and $\Blob$ are mutual centralisers. We then identified the sectors of our spin chain with standard (irreducible) blob modules, thereby showing that this spin-chain representation is faithful and obtaining the $(\Blob,\Uq)$-bimodule decomposition of the Hilbert space (Propositions \ref{blobrepqgen}-\ref{blobreproot}). 

As for the two-boundary spin chain, we showed that it carries a representation of the universal two-boundary Temperley-Lieb algebra (Propositions \ref{repthtwo}-\ref{repthtworoot}). Expressing the central element $Y$ of the weight of a loop carrying both the left and the right blob in terms of the Casimir element, and using the $\Uq$ (or $\Uqu$) decomposition, we were able to further decompose the Hilbert space into representations of the (usual) two-boundary Temperley-Lieb algebra $\twoBlob$, with a constant value of $Y$ in each sector (Propositions \ref{2bpropqgen}-\ref{2bpropqroot}). As these values of $Y$ are non-generic, we could not prove complete Schur-Weyl duality and instead conjectured on the $\twoBlob$-modules appearing in the decomposition (Conjectures \ref{conj1}-\ref{conj2}).

These algebraic results are not only of mathematical interest, but rather it is expected that they are the key to understanding the continuum limit of our models. Indeed, it was argued in \cite{Jacobsen_2008, Gainutdinov_2013} that the blob modules $\WW_j$ are the lattice analogues of certain Virasoro Verma modules. More precisely, if we define -- now not only for $\qa=i$ but any $\qa=e^{\frac{i\pi}{p}}$, $p\in [2,+\infty[$ (not necessarily rational) -- the central charge
\begin{equation*}
c:=1-\frac{6}{p(p-1)}
\end{equation*}
and the conformal weights\footnote{Note that for $p=2$ (that is $\qa=i$) we recover $c=-2$ and $h_{r,s}=\frac{(2r-s)^2-1}{8}$ as in \eqref{confw}.}
\begin{equation*}
h_{r,s}:=\frac{(pr-(p-1)s)^2-1}{4p(p-1)} \,,
\end{equation*}
then, treating $H_b$ as an abstract element of the blob algebra, we should have
\begin{equation}
\label{gensclim}
\lim_{M\to\infty}\lim_{N\to\infty}\tr_{\WW_{j}}^{<M} q^{\frac{N}{\pi v_F}(H_{b}-Ne_{\rm b}-E_{\rm s})}=\frac{q^{-\frac{c}{24}}}{P(q)}q^{h_{\alpha,\alpha+2j}}
\end{equation}
with $v_{\rm F}:=p\sin{\frac{\pi}{p}}$ given by \eqref{fermiv}, and $e_{\rm b}$ and $E_{\rm s}$ being respectively the new bulk energy per site and the surface energy at $\qa$. In other words, the scaling limit of $H_b|_{\WW_j}$ can be identified with the $L_0$ generator of the Virasoro algebra represented on a Verma module of conformal weight $h_{\alpha,\alpha+2j}$. To formulate this conjecture, a different representation of the blob algebra -- the so-called cabling realisation -- was used in \cite{Jacobsen_2008}.\footnote{In \cite{Nichols2006} yet another representation was considered, but the link with our spin chain is less straightforward. We will not discuss it here.} It is constructed directly from the Temperley-Lieb algebra by adding $r-1$ $\CC^2$-sites at the leftmost boundary, and then applying a Jones-Wenzl projector on them to single out the spin-$(r-1)/2$ summand appearing in the $\Uq$ decomposition, or, said differently, by replacing our $\VV_\alpha$ with a spin-$(r-1)/2$ representation. From our previous computations it is easy to see that the blob weight is then given by
\begin{equation*}
y=\frac{[r+1]_\qa}{[r]_\qa} \,,
\end{equation*}
and so the $r$ in \cite{Jacobsen_2008} can be identified with a particular choice of our $\alpha$. The advantage of our representation of the blob algebra is that it makes it possible to reach all values of $y$ and not only the discrete set given by $r\in\NN^{*}$. Moreover it is known~\cite{Gainutdinov_2013} that the cabling representation is not faithful while our spin chains provide a faithful representation of the blob algebra.

In the present paper, we have proven \eqref{gensclim} rigorously for $\qa=i$, by computing the scaling limit of an explicit spin chain \eqref{wpartfun}-\eqref{wpartfunh}, and then showing that its sectors $W_{N/2+j}$ can by identified with standard blob modules $\WW_{j}$ according to \eqref{bimodqroot}-\eqref{bidecompqi}.

As for the two-boundary system, exact expressions for the partition functions in all standard representations of $\twoBlob$ were proposed in \cite{Dubail_2009}. Since we did not manage to identify the $\twoBlob$-modules appearing in the decomposition of $\Hbb$ we cannot compare our results with those of that paper. Let us note, however, that if we assume Conjecture \ref{conj2} to be true, then, at $\qa=i$ the prediction of \cite{Dubail_2009} matches \eqref{twobqi}. 

\medskip

From this, the most immediate task ahead is to generalise our result for the spectrum of the one-boundary and two-boundary XX model to any $\qa$ to prove the statements of \cite{Nichols2006, Jacobsen_2008, Dubail_2009} for all values of the parameters. This requires the introduction of the rather technical formalism of (boundary) Algebraic Bethe Ansatz and will be performed in the next paper. Furthermore, it seems necessary to better understand the algebraic properties of the two-boundary system -- in particular regarding Conjectures \ref{conj1}-\ref{conj2} -- and their link to the spectrum of the corresponding Hamiltonian. Indeed, these conjectures suggest that the two-boundary spin chain \eqref{intro:Hbb-space} contains a discrete series (infinite for generic $\qa$ and finite with $p$ sectors for a $2p$-th root of unity) of non-diagonal XXZ models, which are known to be vacuum standard modules over the two-boundary TL algebra \cite{de_Gier_2009}. More precisely, using the conventions from~\cite[Sec.\,3.4]{Dubail_2009}, the 6 parameters in the integrable non-diagonal XXZ boundary conditions are related to the Verma module weights $\alpha_{l/r}$, couplings $\mu_{l/r}$ and values $Y_m$ of $Y$ as follows : $r_1=\alpha_l$, $r_2=\alpha_r$, $\lambda_1=\mu_l$, $\lambda_2=\mu_r$, $r_{12}=s_2-s_1=\alpha_l+\alpha_r-2m-1$. As it turns out, computing the spectrum of $H_{2b}$ (for arbitrary $\qa$) via Algebraic Bethe Ansatz is somewhat simpler than that of the corresponding non-diagonal XXZ models due to its greater symmetry. In the next paper we will also see that, as for $\qa=i$, the scaling limit of $H_{2b}$ together with Conjectures \ref{conj1}-\ref{conj2} are consistent with the predictions of \cite{Dubail_2009}.

Finally, from a physical perspective, it can be argued that the two-boundary system ``is the fusion of two one-boundary systems'' -- but this idea is still not totally understood. A better grasp of this question would be relevant for a wide range of open problems, among which understanding the fusion of Virasoro Verma modules with generic conformal weights and the CFT properties of the periodic XXZ spin chain. Equation \eqref{sugg} suggests that the representation theory of $\Uq$ and $\Uqu$ plays an important role. Lastly, we notice that the $Y$ operator from Section~\ref{twoboundarysec} has diagrammatically the form of a hoop operator, up to a constant term, which resembles the topological defect $Y$ operator from \cite{topdef} in periodic models. We believe that these two operators are closely related. These topics will be investigated in detail in a future paper.

\section*{Acknowledgements}
We are very thankful to J. L. Jacobsen for very valuable discussions and the interest in this work. The authors also thank Institut Pascal, Universit\'e Paris-Saclay and the organisers of the workshop Bootstat 2021 where part of this work was done. This work was supported in part by the ERC Advanced Grant NuQFT. The work of A.M.G. was supported by the CNRS, and partially by the ANR grant JCJC ANR-18-CE40-0001 and the RSF Grant No.\ 20-61-46005. 
A.M.G. is also grateful to IPhT in Saclay for its kind hospitality in 2021 and 2022.

\appendix

\section{General properties of the spectrum of $H_b$ for $\qa=i$}
\label{generalpropspec}

In this appendix we provide a systematic study of the the spectral equation \eqref{waveqU} for all $\mu$, $y$ and $N$. Since it is a polynomial equation of degree $N$, it will have $N$ solutions $\lambda_k$. We want to know for which choices of $\mu$ and $y$ these $\lambda_k$, and the associated momenta $p_k$ defined by $\lambda_k=2\cos p_k$, will be real. It is sufficient to consider $\mu$ and $y$ real, since otherwise \eqref{waveqU} will obviously have complex solutions. As already discussed, we can assume without loss of generality that $\mu>0$.

Recall that for real momenta, \eqref{waveqU} is equivalent to \eqref{simpeq}
\begin{equation*}
N\xi+\varphi(\xi)=k\pi \,, \qquad k\in\mathbb{Z},~\xi\in]0,\pi[\,.
\end{equation*}
We are thus reduced to studying the function $\varphi$.

First note that $\varphi(0), \varphi(\pi)\in\{0,\pi\}$. Moreover $\frac{(2-\mu y)\cos{\xi}+\mu}{\mu y \sin{\xi}}$ can become infinite only at these points, so we can work with the usual $\arccot{}$ function and not its multivalued generalisation. Let us define the winding number 
\begin{equation*}
w:=N+\frac{\varphi(\pi)-\varphi(0)}{\pi}\in \mathbb{N} \,.
\end{equation*}
A simple analysis then gives the table of variations
\begin{equation*}
\begin{tikzpicture}
   \tkzTabInit[espcl =2]{$y$ / 1 , $w$ / 1}{$-\infty$, $\frac{2-\mu}{\mu}$, $\frac{2+\mu}{\mu}$, $+\infty$}
   \tkzTabLine{,N-1,d, N,d, N-1}
\end{tikzpicture}
\end{equation*}
for $2<\mu$, and
\begin{equation*}
\begin{tikzpicture}
   \tkzTabInit[espcl =2]{$y$ / 1 , $w$ / 1}{$-\infty$, $0$, $\frac{2-\mu}{\mu}$, $\frac{2+\mu}{\mu}$, $+\infty$}
   \tkzTabLine{,N-1,d, N+1,d,N,d, N-1}
\end{tikzpicture}
\end{equation*}
for $0<\mu<2$. Since for large $N$ the function $\xi\mapsto N\xi+\varphi(\xi)$ is strictly increasing, this already tells us that the number of real momenta is exactly $w-1$ as $N\to\infty$ (recall that the endpoints $0$ and $\pi$ are not solutions). As already discussed in Section \ref{compspec}, this remains true for all $N$ if $w=N+1$, in which case all the solutions and momenta are real. 

For the other cases we have to determine if the one or two remaining solutions are complex, or real but outside $]-2,2[$. Since \eqref{waveqU} is a polynomial equation with real coefficients, complex solutions must come in conjugated pairs. Therefore if $w=N$, the remaining solution must also be real. If $w=N-1$, however, a finer analysis is needed. By varying $\mu$ and $y$ outside the domain $w=N+1$ we can follow the real roots of $P$ and distinguish two cases: either two roots collide and become complex, or they leave $]-2,2[$ one after the other. If $0<\mu<2$ and $y<0$ the former is true, and if $y>\frac{\mu+2}{\mu}$ (and any $\mu>0$) -- the latter, so in these domains we will have respectively $N-2$ and $N$ real solutions. 

The last and most complicated case is when $\mu>2$ and $y<\frac{\mu-2}{\mu}$, because in this domain the two colliding roots are outside $]-2,2[$. The system $P_{\mu,y}(\lambda)=P_{\mu,y}'(\lambda)=0$ satisfied at the collision point then defines a non-trivial curve in $(\mu,y)$ space. Although there is no analytic expression for it, a good enough approximation can be obtained by considering the limit $\mu\to+\infty$, $y\sim \tilde{y}\mu$, $\lambda\sim \tilde{\lambda}\mu$, with $\tilde{y},\tilde{\lambda}\in\RR$. Plugging this ansatz into \eqref{waveqU} and keeping only the highest order in $\mu$, we obtain
\begin{equation}
\label{critcurve}
\tilde{\lambda}^2+\tilde{\lambda}-\tilde{y}=0\,.
\end{equation}
The double root appears when the discriminant of this polynomial (in $\tilde{\lambda}$) vanishes, that is, when $\tilde{y}=-\frac{1}{4}$, $\tilde{\lambda}=-1/2$. To obtain the next-order term we now plug $y\sim -\mu/4+\tilde{y}'$, $\lambda\sim -\mu/2+\tilde{\lambda}'$ into the spectral equation. We have
\begin{equation}
\label{consterm}
\left(\frac{N}{2}-(N-1)+\frac{N-2}{2}\right)\tilde{\lambda}'+\tilde{y}'=0\,\Rightarrow\,\tilde{y}'=0\,.
\end{equation}
Thus the critical curve has an asymptote of equation $y=-\mu/4$ as $\mu\to\infty$. This is confirmed numerically.

These results for the large-$N$ limit are summarised in Figure \ref{diagram}. All the domains of the graph are exact, except for the green curve close to the critical point $\bullet=(2,0)$ at which the black, red and green curves must meet, since $-1$ is then a double root of $P$. 
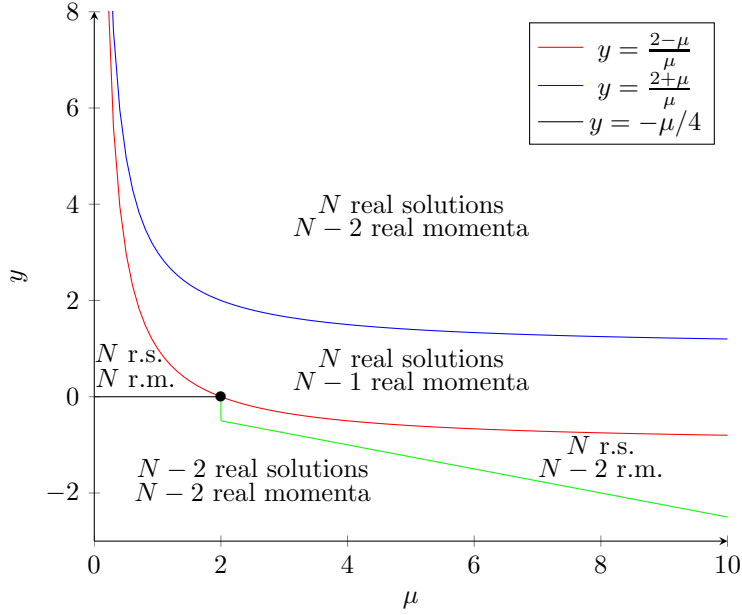
\begin{figure}[!t]
\centering
\begin{tikzpicture}
\begin{axis}[
    axis lines = left,
    xlabel = $\mu$,
    ylabel = {$y$},
    ymax=8,
    ymin=-3,
    xmax=10,
    xmin=0,
]
\addplot [
    domain=0:10, 
    samples=100, 
    color=red,
]
{(2-x)/x};
\addlegendentry{$y=\frac{2-\mu}{\mu}$}

 \addplot [
    domain=0:10, 
    samples=100, 
    color=blue,
]
{(2+x)/x};
\addlegendentry{$y=\frac{2+\mu}{\mu}$}

\addplot [
    domain=0:2, 
    samples=100, 
    color=black,
    ]
    {0};
    
 \addplot [
    domain=2:10, 
    samples=100, 
    color=green,
]
{-x/4};
\addlegendentry{$y=-\mu/4$}

\addplot [green] coordinates {
(2,-0.5) (2,0)
};
    
\node[] at (axis cs: 5,4) {$N$ real solutions};
\node[] at (axis cs: 5,3.5) {$N-2$ real momenta};
\node[] at (axis cs: 5,0.8) {$N$ real solutions};
\node[] at (axis cs: 5,0.3) {$N-1$ real momenta};
\node[] at (axis cs: 2.5,-1.5) {$N-2$ real solutions};
\node[] at (axis cs: 2.5,-2) {$N-2$ real momenta};
\node[] at (axis cs: 0.65,0.4) {$N$ r.m.};
\node[] at (axis cs: 0.55,0.9) {$N$ r.s.};
\node[] at (axis cs: 8,-1) {$N$ r.s.};
\node[] at (axis cs: 8,-1.5) {$N-2$ r.m.};
\node[] at (axis cs: 2,0) {$\bullet$};
\end{axis}
\end{tikzpicture}
\caption{Reality properties of the solutions and associated momenta of the spectral equation \eqref{waveqU} in terms of $\mu$ and $y$.}
\label{diagram}
\end{figure}

To perform the same analysis at finite $N$ we have to take into account the fact that the function $\xi\mapsto N\xi+\varphi(\xi)$ can have local extrema, and so it might cross some multiple of $\pi$ more than once.

The corrections to the critical curves $y=\frac{2\pm\mu}{\mu}$ are easy to compute, since they correspond to the values of $\mu$ and $y$ for which a solution of \eqref{waveqU} hits $\pm 2$. In other words, they come from a local minimum right after $\xi=0$, or to a local maximum right before $\xi=\pi$. Then, doing this computation, we obtain the new critical curves $y=\frac{2\pm\mu}{\mu}\frac{N}{N-1}$.

For the other boundaries of Figure \ref{diagram}, local extrema generate two additional real roots and so only affect the black and green lines. If $0<\mu<2$ we have already seen in Section \ref{compspec} that the black line will be replaced by some curve $\yyy<0$ given by the solution of the system $P_{\mu,y}(\lambda)=P'_{\mu,y}(\lambda)=0$. The green line will also be corrected, but since equations \eqref{critcurve}-\eqref{consterm} are independent of $N$, its asymptote will still be $y=-\mu/4$ for all $N$. Finally, the critical point $\bullet$ at which the black, green and red lines meet will change. One can compute its coordinates explicitly by solving $P_{\mu,y}(-1)=P'_{\mu,y}(-1)=0$. Since
\begin{equation*}
U'_N(-1)=(-1)^{N+1}\frac{N(N+1)(N+2)}{3} \,,
\end{equation*}
we have
\begin{equation*}
\bullet=\left(2\frac{2N+1}{2N-1},-\frac{4N}{(2N-1)(N-1)}\right)\,.
\end{equation*}
As expected, $\bullet\to (2,0)$ as $N\to\infty$.

\section{Proof of the anti-commutation relation \eqref{anticomsc}}
\label{anticomp}

In this appendix we prove equation \eqref{anticomsc},
\begin{equation*}
\lim_{M\to\infty}\lim_{N\to\infty}\{\xi(z)^{<M},\eta(w)^{<M}\} =\frac{1}{w}\sum_{k\in\ZZ}(z/w)^{k+\tau}:=\delta_\tau(z,w)\,.
\end{equation*}

First, from \eqref{anticomc}, \eqref{expA}, and~\eqref{expB}
\begin{equation*}
\begin{aligned}
\{\theta_k^\dagger,\theta_{k'}\} & =A_0B_0'+\sum_{j=1}^N A_jB_j'\\
& =-\frac{1}{4\sin(\xi_k)\sin(\xi_{k'})}\sum_{j=0}^{N-1}\left((e^{i\xi_k}-i)e^{ij\xi_k}-(e^{-i\xi_k}-i)e^{-ij\xi_k}\right)\\
&\qquad\qquad\qquad\qquad\qquad\qquad\qquad\left((e^{i\xi_{k'}}-i)e^{ij\xi_{k'}}-(e^{-i\xi_{k'}}-i)e^{-ij\xi_{k'}}\right)\\
& ~~~ +A_0B_0'
\end{aligned}
\end{equation*}
where
\begin{equation*}
A_0B_0'=-\frac{e^{-\frac{i\pi\alpha}{2}}\cos(\frac{\pi\alpha}{2})\left(i\sin(N\xi_k)+\sin((N-1)\xi_k)\right)\left(i\sin(N\xi_{k'})+\sin((N-1)\xi_{k'})\right)}{\left(ie^{-\frac{i\pi\alpha}{2}}+\frac{\lambda_k}{\mu}\sin{\frac{\pi\alpha}{2}}\right)\left(ie^{-\frac{i\pi\alpha}{2}}+\frac{\lambda_{k'}}{\mu}\sin{\frac{\pi\alpha}{2}}\right)\sin(\xi_k)\sin(\xi_{k'})}\,.
\end{equation*}
Summing the series we obtain, if $k\neq k'$,
\begin{equation*}
\begin{aligned}
-i\sin(\xi_k)\sin(\xi_{k'})\{\theta_k^\dagger,\theta_{k'}\}= & -\frac{\cos(\frac{\xi_k+\xi_{k'}}{2})}{2\sin(\frac{\xi_k-\xi_{k'}}{2})}\sin(N(\xi_k-\xi_{k'}))\\
& +\frac{\cos(\frac{\xi_k-\xi_{k'}}{2})}{2\sin(\frac{\xi_k+\xi_{k'}}{2})}\sin(N(\xi_k+\xi_{k'}))-i\sin(N\xi_k)\sin(N\xi_{k'})\\
& -i\sin(\xi_k)\sin(\xi_{k'})A_0B_0'\,.
\end{aligned}
\end{equation*}
If $k=k'$ the first term on the right-hand side is replaced by $-N\cos(\frac{\xi_k+\xi_{k'}}{2})$.

For $k$ close to $N/2$ we have
\begin{equation*}
\xi_k=\frac{\left(k+\frac{\alpha-1}{2}\right)\pi}{N}+o(1/N)\,.
\end{equation*}
Set $\ell:=k-N/2$. We have
\begin{equation*}
\begin{aligned}
-\frac{\cos(\frac{\xi_k+\xi_{k'}}{2})}{2\sin(\frac{\xi_k-\xi_{k'}}{2})}\sin(N(\xi_k-\xi_{k'})) & =\pi\left(\ell+\frac{\alpha-1}{2}\right)\delta_{\ell\ell'}+O(1/N)\,,\\
\frac{\cos(\frac{\xi_k-\xi_{k'}}{2})}{2\sin(\frac{\xi_k+\xi_{k'}}{2})}\sin(N(\xi_k+\xi_{k'}))) & =\frac{1}{2}\sin(\pi(\ell+\ell'+\alpha-1))+O(1/N) \\
& =-\frac{(-1)^{\ell+\ell'}}{2}\sin(\pi\alpha)+O(1/N)\,,\\
-i\sin(N\xi_k)\sin(N\xi_{k'}) & =-i(-1)^{\ell+\ell'}\cos(\pi\alpha/2)^2+O(1/N)\,,
\end{aligned}
\end{equation*}
and
\begin{equation*}
\begin{aligned}
-i\sin(\xi_k)\sin(\xi_{k'})A_0B_0' & =-ie^{\frac{i\pi\alpha}{2}}\cos(\frac{\pi\alpha}{2})e^{-iN(\xi_k+\xi_{k'})}+O(1/N)\\
& =-ie^{\frac{i\pi\alpha}{2}}\cos(\frac{\pi\alpha}{2})(-1)^{\ell+\ell'}e^{-i\pi(\alpha-1)}+O(1/N)\\
& = \frac{i}{2}(-1)^{\ell+\ell'}\left(1+e^{-i\pi\alpha}\right)+O(1/N)\,.
\end{aligned}
\end{equation*}
Since
\begin{equation*}
\frac{i}{2}\left(1+e^{-i\pi\alpha}\right)-\frac{1}{2}\sin(\pi\alpha)-i\cos(\pi\alpha/2)^2=0 \,,
\end{equation*}
we finally obtain
\begin{equation}\label{eq:anti-com-N}
-i\sin(\xi_k)\sin(\xi_{k'})\{\theta_\ell^\dagger,\theta_{\ell'}\}=\pi\left(\ell+\frac{\alpha-1}{2}\right)\delta_{\ell\ell'}+O(1/N)
\end{equation}
or, in other words,
\begin{equation*}
\{\chi_{k+\tau}^+,\chi^-_{k'-\tau}\}=(k-\tau)\delta_{k+k'}+O(1/N)
\end{equation*} 
with $\chi_{k\pm\tau}^\pm$ defined in \eqref{chimodes}, and 
\begin{equation*}
\tau=\frac{1-\alpha}{2}\,.
\end{equation*}

Using this expression, we have
\begin{equation*}
\begin{aligned}
\lim_{M\to\infty}\lim_{N\to\infty}\{\xi(z)^{<M},\eta(w)^{<M}\} & =\sum_{k,k'\in\ZZ}\frac{\{\chi_{k+\tau}^+,\chi^-_{k'-\tau}\}}{k-\tau}z^{-k+\tau}w^{-k'-1-\tau}\\
& =\sum_{k,k'\in\ZZ}\delta_{k+k'}z^{-k+\tau}w^{-k'-1-\tau}\\
& =\frac{1}{w}\sum_{k\in\ZZ}(z/w)^{k+\tau}:=\delta_\tau(z,w) \,,
\end{aligned}
\end{equation*}
where $\delta_\tau(z,w)$ is the $\tau$-twisted delta function. This terminology is justified by the following fact. Suppose we have a function $f$ with expansion
\begin{equation*}
f(w)=\sum_{k\in\ZZ}a_k w^{k+\tau}\,.
\end{equation*}
Then
\begin{equation*}
\frac{1}{2i\pi}\oint\delta_\tau(z,w)f(w)\ddd w=\sum_{k\in\ZZ}a_{k}z^{k+\tau}=f(z)\,.
\end{equation*}
Similarly, if 
\begin{equation*}
g(z)=\sum_{k\in\ZZ}b_k z^{k-\tau}
\end{equation*}
then
\begin{equation*}
\frac{1}{2i\pi}\oint\delta_\tau(z,w)g(z)\ddd z=\sum_{k\in\ZZ}b_{k}w^{k-\tau}=g(w)\,.
\end{equation*}
Notice that, unlike the usual delta function, $\delta_\tau(z,w)$ is not symmetric under the permutation of $z$ and $w$. Actually, one can easily see that
\begin{equation*}
\delta_\tau(z,w)=\delta_{-\tau}(w,z)\qquad\text{and}\qquad\delta_{\tau+1}(z,w)=\delta_\tau(z,w)\,.
\end{equation*}
For $\tau=0\in\RR/\ZZ$ we recover $\delta_0(z,w)=\delta(z-w)$.

\bibliographystyle{hunsrt}
\bibliography{biblio}

\end{document}